\documentclass[prx,onecolumn,english,superscriptaddress,floatfix,longbibliography]{revtex4-2}

\usepackage[table]{xcolor}
\usepackage{tabularray}
\usepackage[version=3]{mhchem}
\usepackage{amsfonts}
\usepackage{amsmath}
\usepackage{braket}
\usepackage{hyperref}
\usepackage{tikz-cd}
\usepackage{minted}
\usepackage{epigraph}
\usepackage{enumitem}
\usepackage{systeme}
\usepackage{tcolorbox}
\usepackage{booktabs}
\usepackage{multirow}
\usepackage{graphicx}
\usepackage{rotating}
\usepackage{caption}
\usepackage{amsthm}
\usepackage{amssymb}
\usepackage{standalone}
\usepackage{quantikz}
\usepackage[caption=false]{subfig}

\definecolor{jw}{HTML}{BD7DBD}        
\definecolor{jwcas}{HTML}{4477AA}     
\definecolor{jwcassf}{HTML}{EE6677}   
\definecolor{saecas}{HTML}{FFA500}    

\newcommand{\legendsquare}[1]{\textcolor{#1}{\rule{1.4ex}{1.4ex}}}

\newtheorem{proposition}{Proposition}

\begin{document}

\title{Symmetry-adapted qubit encoding with complete active space and Bravyi--Kitaev mapping for quantum chemistry on a quantum computer}

\author{Dario Picozzi}
\affiliation{Department of Physics and Astronomy, University College London (UCL), Gower Street, London, WC1E 6BT, United Kingdom}
\affiliation{London Centre for Nanotechnology, 19 Gordon St, London, WC1H 0AH, United Kingdom}
\email{picozzi.dario@gmail.com}

\author{Jonathan Tennyson}
\affiliation{Department of Physics and Astronomy, University College London (UCL), Gower Street, London, WC1E 6BT, United Kingdom}

\begin{abstract}
We present a symmetry-adapted qubit encoding with complete active space (SAE-CAS) for quantum chemistry on fault-tolerant and near-term quantum processors. Building on exact-symmetry encodings, we extend symmetry-adapted mappings to approximate $Z$-symmetries corresponding to frozen-core and virtual orbitals, thereby reducing qubit requirements without significant loss of accuracy. We derive the mapping from the second-quantised Hamiltonian to active-space qubit Hamiltonians, prove its equivalence to the canonical CAS Hamiltonian with frozen-core and virtual-orbital projection, and integrate it with point-group and spin-parity symmetry encodings via affine Clifford transformations to maximise qubit reduction while preserving the target symmetry sector. The same framework also accommodates the Bravyi--Kitaev mapping, yielding an SAE-CAS-BK variant that is unitarily equivalent to SAE-CAS. Numerical benchmarking on nine small molecules using UCCSD and a hardware-efficient shifted-circular-alternating (HE-SCA) ansatz shows that SAE-CAS reduces qubit counts and Pauli-operator weight, yields shallower circuits with fewer parameters, and often accelerates VQE convergence; with HE-SCA it consistently reaches CAS reference energies in cases where JW-CAS does not converge within the tested budgets. We provide an open-source implementation in the Python package \texttt{QuantumSymmetry}. SAE-CAS offers a route to resource-efficient molecular simulations on fault-tolerant and near-term quantum processors.
\end{abstract}

\maketitle

\section{Introduction}

In the complete active space (CAS) approximation \cite{Roos1980,Helgaker2000}, molecular orbitals in the electronic structure problem are partitioned into three categories: the frozen-core orbitals are approximated as being fully occupied, the virtual orbitals are approximated as being unoccupied, and the active orbitals are allowed any occupancies. A common notation is ($n$,$m$)-CAS, which indicates that $n$ electrons are distributed in all possible ways in $m$ molecular orbitals.

From the first proposals of quantum computers, quantum chemistry has been identified as a natural application \cite{Feynman1982,Cao2019,McArdle2020QCC}. In recent years, significant effort in the literature has been devoted to studying quantum chemistry on the imperfect quantum computers that we currently have, or noisy intermediate-scale quantum (NISQ) devices \cite{Preskill2018,McClean2016}, as this is seen by many as one of the main applications where quantum computers might achieve quantum advantage \cite{Arute2019}. 

The variational quantum eigensolver (VQE) is a hybrid quantum--classical algorithm that approximates molecular ground-state energies by iteratively optimising a parameterised quantum circuit on the device with a classical optimiser. The VQE has been enhanced by various different proposals throughout the literature \cite{McArdle2019,Endo2020,Benedetti2021,Gacon2023,Gomes2021,Stokes2020,Grimsley2019,Huggins2019,Verteletskyi2020,Tang2021}.

We have previously provided an algorithm to construct symmetry-adapted encodings (SAEs) that reduce the number of qubits in the qubit Hamiltonian by exploiting exact symmetries such as point-group and number-conservation symmetries \cite{Setia2020,Picozzi2023}. These encode symmetries as a system of Boolean linear equations in the spin-orbital occupancies, or equivalently as Pauli operators that commute exactly with the Hamiltonian, and then remove the redundant qubits via Clifford transformations \cite{Bravyi2017,Leeuwen2024}. Because these symmetries are exact, the ground-state energy is left unchanged when projecting to the appropriate eigenspace.

The same formalism can be applied to approximate symmetries, at the cost of a loss of accuracy in the eigenspectrum. In this work, we treat the CAS approximation as imposing approximate $Z$-symmetries that fix frozen-core and virtual orbital occupancies, and we refer to the resulting combined approach as SAE-CAS. We show this yields exactly the same Hamiltonian as the canonical CAS Hamiltonian with frozen-core and virtual-orbital projection in molecular electronic-structure theory \cite{Helgaker2000}; a proof of this is provided in Appendix~\ref{app:cas-equivalence}. Further, we integrate this with exact point-group and spin-parity symmetry qubit removal to maximise qubit savings.

The Jordan--Wigner mapping is one of several fermion-to-qubit encodings \cite{Steudtner2018}; the Bravyi--Kitaev (BK) mapping \cite{Bravyi2002,Seeley2012,Tranter2018} is another widely used option, designed to balance the locality of fermionic creation and annihilation operators between qubits in a way that scales as $\mathcal{O}(\log n)$ rather than $\mathcal{O}(n)$. Because BK is itself a Clifford basis change of JW that acts as an affine map on computational-basis bitstrings, it fits naturally into the same framework: appending the JW-to-BK basis change as a final affine Clifford on the active-space qubits yields an SAE-CAS-BK encoding that is unitarily equivalent to SAE-CAS and uses the same number of qubits, Pauli terms and variational parameters, while changing the per-term Pauli weight and entangling-gate counts of the resulting circuits.

This paper is structured as follows. In Section~\ref{sec:method} we present the formalism for SAE-CAS, including the second-quantised molecular Hamiltonian, the CAS projector, Boolean spin-parity and point-group symmetries, affine Clifford maps, their use for symmetry-based qubit removal and CAS projection, and the composition with the Bravyi--Kitaev mapping. In Section~\ref{sec:results} we benchmark SAE-CAS on a variety of molecules using UCCSD and a hardware-efficient shifted-circular-alternating (HE-SCA) VQE ansatz, and we compare against Jordan--Wigner (JW), JW-CAS, and JW-CAS with symmetry filtering \cite{Romero2018}; we additionally report a head-to-head comparison of SAE-CAS and SAE-CAS-BK. Finally, in Section~\ref{sec:conclusion} we conclude with an outlook on resource-efficient quantum chemistry simulations.

\section{Method}
\label{sec:method}

\subsection{Molecular electronic Hamiltonian.}
\label{subsec:hamiltonian}
The (non-relativistic, Born--Oppenheimer) molecular electronic structure Hamiltonian in second quantisation is
\begin{equation}
{H}
= \sum_{pq} h_{pq}\, {a}_p^\dagger {a}_q
+ \tfrac12 \sum_{pqrs} g_{prqs}\, {a}_p^\dagger {a}_q^\dagger {a}_s {a}_r
+ h_0 .
\label{eq:hamiltonian}
\end{equation}
where $ a^\dagger_p$ and $ a_q$ are the fermionic creation and annihilation operators, and the one- and two-electron integrals $h_{pq}$ and $g_{pqrs}$ and the nuclear--nuclear repulsion constant $h_0$ are defined as \cite{Helgaker2000}
\begin{align}
\label{eq:hamiltonian2}
h_{pq}
&= \int \phi^*_p(\mathbf{x}) 
\left( -\frac{\hbar^2}{2m}\nabla^2 - \sum_k \frac{e^2 Z_k}{4\pi\varepsilon_0  r_k} \right)
\phi_q(\mathbf{x})   d\mathbf{x},\\
g_{pqrs}
&= \frac{e^2}{4 \pi \varepsilon_0}
\iint \frac{\phi^*_p(\mathbf{x}_1) \phi_q(\mathbf{x}_1) \phi^*_r(\mathbf{x}_2) \phi_s(\mathbf{x}_2)}
{r_{12}}   d\mathbf{x}_1   d\mathbf{x}_2,\\
h_0 &= \frac{e^{2}}{4 \pi \varepsilon_{0}} \sum_{A<B} \frac{Z_{A} Z_{B}}{R_{AB}}
\end{align}

Here, $\phi_p(\mathbf{x})$ are orthonormal spin-orbitals with $\mathbf{x}=(\mathbf{r},\sigma)$ and $d\mathbf{x}$ denoting integration over space $\mathbf{r}$ and a sum over the spin degree of freedom $\sigma$, $\hbar$ is the reduced Planck constant, $m$ the electron mass, $e$ the elementary charge, $\varepsilon_0$ the vacuum permittivity, and $Z_k$ the charge of nucleus $k$.

\subsection{The complete active space (CAS) approximation.}
\label{subsec:cas}
In the complete active space (CAS) approach, the spin-orbital basis is split into three subspaces: the frozen---core orbitals which are assumed to be occupied; the active orbitals whose occupations are allowed to vary to capture correlation; and the virtual orbitals that are assumed to be unoccupied. Defining projectors ${P}_F = \bigotimes_{i\in F} \ket{1} \bra{1}_i,$ and ${P}_V = \bigotimes_{i\in V} \ket{0} \bra{0}_i$, the complete active space projector is 
\begin{equation}
     P
= {P}_F
\otimes \mathbf{1}_A \otimes 
{P}_V,
\end{equation}
\qquad
where the sets $F$, $A$ and $V$ correspond respectively to the indices of the frozen-core, active and virtual spin orbitals. The complete active space Hamiltonian is 
\begin{equation}
    {H'}= P  H  P
\end{equation}, where $H$ is the molecular electronic structure Hamiltonian in Eq. \eqref{eq:hamiltonian}.

Then, the constant term and the one-electron integrals of $H'$ can be written in terms of those of $H$ as:
\begin{align}
h_0'&=h_0+\sum_{i\in F} h_{ii}
+\tfrac12\sum_{i,j\in F}\bigl(g_{iijj}-g_{ijji}\bigr),
\label{eq:fcc-spin}\\
h_{pq}'&=h_{pq}+\sum_{i\in F}\bigl(g_{p i q i}-g_{p i i q}\bigr),
\label{eq:fc1ei-spin}
\end{align}
where $p,q\in A$. All two-electron terms with indices entirely in $A$  are unchanged.

The Hartree--Fock energy is given by
\begin{equation}
E_{\mathrm{HF}}
= h_0 + \sum_{i\in \mathrm{occ}} h_{ii}
+ \tfrac{1}{2}\sum_{i,j\in \mathrm{occ}}\!\bigl(g_{iijj}-g_{ijji}\bigr),
\label{eq:ehf-spin}
\end{equation}
where $\mathrm{occ}$ denotes the spin-orbitals occupied in the Hartree--Fock state. If the frozen-core is taken to be these occupied spin-orbitals and there are no active orbitals, the projected Hamiltonian $ H'= P H P$ reduces to the scalar $E_{\mathrm{HF}}$.

These results are proven in Appendix~\ref{app:cas-second-quantised-proof}. 

\subsection{Spin-parity symmetries, Boolean point-group and their character table}
\label{subsec:boolean-syms}
The electronic Hamiltonian \eqref{eq:hamiltonian} commutes with several symmetries that can be encoded as binary constraints on spin--orbital occupations. Unless stated otherwise, all matrix--vector products below are over $\mathbb F_2$ (arithmetic mod 2), and bitstrings $a\in\mathbb F_2^n$ denote spin--orbital occupancies in the Jordan--Wigner computational basis.

\paragraph{Spin symmetries and parities.}
In the non-relativistic setting, $H$ commutes with $S^2$ and $S_z$. For binary constraints we use their $\mathbb Z_2$ parities
\begin{align}
P_\alpha &= (-1)^{N_\alpha},\\
P_\beta  &= (-1)^{N_\beta},\\
P_N      &= (-1)^{N}=P_\alpha P_\beta ,
\end{align}
where $N_\alpha, N_\beta$ count $\alpha$ and $\beta$ electrons. Together they form a group isomorphic to the Klein four group $\mathbb{Z}_2^2$. Any two are independent; their fixed eigenvalues are determined by the chosen $N$ and $M_S$.

\paragraph{Point-group symmetry.}
A molecule’s point group is the group of symmetry operations (rotations, reflections, inversion and improper rotations, together with the identity) that leave its geometry unchanged about a fixed point. Its irreducible representations govern symmetry-adapted orbitals, normal-mode degeneracies, and spectroscopic selection rules. 

A Boolean group is one in which every element squares to the identity; the eight Boolean molecular point groups are $C_1$ (the trivial group), $C_s$, $C_2$ and $C_i$ (each isomorphic to $\mathbb{Z}_2$), $C_{2v}$, $C_{2h}$ and $D_2$ (each isomorphic to $\mathbb{Z}_2^2$) and $D_{2h}$ (isomorphic to $\mathbb{Z}_2^3$). In practice, in quantum chemistry we descend to the largest Boolean subgroup of the full point group to make the irreducible representations one-dimensional and real: in this case each symmetry-adapted orbital is either symmetric (the symmetry acts on the orbital as multiplication by $+1$) or antisymmetric ($-1$) under each symmetry. The character tables of each point group summarise this behaviour. Under the Jordan-Wigner mapping, each symmetry is mapped to the product of Pauli $Z$ operators on each qubit that corresponds to a spin-orbital that is antisymmetric with respect to it. An example for the water molecule is provided in Appendix \ref{app:water_example}.

\paragraph{Orbital symmetry matrix.}
Let the full Boolean symmetry group be the one generated by the Boolean point-group and spin-parity symmetries. Define the orbital symmetry matrix
\begin{equation}
S_{ij}=\begin{cases}
1,& \text{spin-orbital } j \text{ has character } -1 \text{ under } g_i,\\
0,& \text{otherwise}.
\end{cases}
\label{eq:orbital-character-matrix}
\end{equation}
Then the many-electron eigenvalue of the symmetry $g_i$ on a Slater determinant with occupations $a\in\mathbb F_2^n$ is
\begin{equation}
\lambda_i(a)=(-1)^{(Sa)_i}.
\end{equation}
$P_{\alpha}$ and $P_{\beta}$ each contribute a row to $S$: the $\alpha$ row has ones on $\alpha$ spin-orbitals and the $\beta$ row has ones on $\beta$ spin-orbitals.

Since each generator $g_i$ maps under Jordan--Wigner to a Pauli $Z$-string that commutes exactly with $H$, the set $\{g_1,\dots,g_k\}$ constitutes the stabiliser group of an $[[n,n{-}k]]$ stabiliser code whose code space is the target symmetry sector. This observation suggests an alternative use of the generators: retaining all $n$ qubits and employing them for \emph{symmetry verification}~\cite{Bonet-Monroig2018,McArdle2019err} --- measuring each $g_i$ after circuit execution and post-selecting on shots that return the correct eigenvalues, thereby discarding shots corrupted by errors that anticommute with at least one generator. However, this use of the generators is less advantageous than qubit removal. Because all stabilisers are $Z$-type, they are entirely insensitive to $Z$-type (dephasing) errors, which yield a trivial syndrome and go undetected; only errors with $X$ or $Y$ support on a qubit appearing in the corresponding generator are detectable. More fundamentally, a generator used for qubit removal permanently eliminates one qubit from the circuit, reducing circuit depth, entangling-gate count, and Hamiltonian Pauli weight at the source of the noise and without statistical overhead from post-selection, whereas a generator reserved for parity verification leaves that qubit and its noise in the circuit and incurs a shot overhead from post-selection. Using all $k$ independent generators for qubit removal, as we do throughout, therefore maximises these resource savings.

\subsection{Affine Clifford operators and their tableaux}
A Clifford operator $C$ is a unitary that, under conjugation, maps Pauli operators to Pauli operators (up to phases $\pm 1,\pm i$) and preserves their multiplication structure.

Given the Clifford tableau for $C$
\begin{equation}
M=\begin{pmatrix}M_{ZZ}&M_{ZX}\\ M_{XZ}&M_{XX}\end{pmatrix},\qquad
s=\begin{pmatrix}s_Z\\ s_X\end{pmatrix},
\label{eq:general_clifford1}
\end{equation}
one can read off the action of $C$ on the single-qubit Pauli generators $Z_i$ and $X_i$ from the $i$-th and $(i+n)$-th columns of $M$, with the corresponding entries of $s$ fixing the phases. Because Cliffords preserve Pauli multiplication, this completely determines their action on all Pauli strings.

We previously proved that any qubit encoding whose action on Jordan--Wigner computational-basis bitstrings is the affine map $a \mapsto q = T a \oplus b$ over $\mathbb{F}_2$ is implemented by the Clifford $C$ with tableau
\begin{equation}
M=\begin{pmatrix}(T^{-1})^{\mathsf T}&0\\ 0&T\end{pmatrix},\qquad
s=\begin{pmatrix}T^{-1} b\\ 0\end{pmatrix},
\label{eq:affine-clifford-tableau}
\end{equation}
so $Z$’s map to products of $Z$’s (with phases set by $s$) and $X$’s to products of $X$’s. These affine Cliffords are precisely those that permute computational-basis states; equivalently, they are exactly those generated by CNOTs and $X$ gates. The definition of a Clifford tableau and compact proofs of these result are provided in Sec.~\ref{app:clifford-tableau}. Further properties of affine Clifford maps are collected in Sec.~\ref{app:clifford-maps}.

\subsection{Symmetry reduction via an affine Clifford map.}
\label{subsec:symmetry-clifford}

We encode the occupations of $n$ spin-orbitals into $n-k$ qubits by exploiting $k$ independent Boolean symmetries. 

\begin{enumerate}[label=(\roman*)]

\item \textit{Choose the generators.}
Choose $g_1,\dots,g_k$ be $k$ independent generators for the Boolean symmetry group, so that (after a reordering of qubits) $g_i$ has eigenvalue $-1$ on qubit $i$, while $g_j$ has eigenvalue $+1$ there for all $j\neq i$. Let $S\in\{0,1\}^{k\times n}$ be the binary orbital-character matrix $S$ as in Eq.~\eqref{eq:orbital-character-matrix} for the generators $g_i$. Fix the target sector (the simultaneous eigenspace of the Boolean symmetry group of interest) by choosing $c\in\mathbb F_2^k$ so that occupation in the correct target $a\in\mathbb F_2^n$ satisfy
\begin{equation}
Sa \;=\; c .
\label{eq:sector-constraint}
\end{equation}
In practice, the canonical set of generators is found through row-reduction of the orbital-character matrix of a (possibly larger) set of generators.

\item \textit{Define the affine map.}
Let $T\in\{0,1\}^{n\times n}$ be the identity with its first $k$ rows replaced by those of $S$, and set
\begin{equation}
b \;=\; \binom{c}{0_{n-k}} .
\label{eq:b-definition}
\end{equation}
By construction $T$ has full rank and is therefore invertible over $\mathbb F_2$ (i.e.\ $T\in GL(n,2)$).

\item \textit{Change basis with the affine Clifford.}
There is a Clifford $C$ that realises the affine action
\begin{equation}
\ket{a}\ \mapsto\ \ket{q} \;=\; C\ket{a} \;=\; \ket{T a \oplus b},
\label{eq:affine-action}
\end{equation}
and whose tableau is as in Eq.~\eqref{eq:affine-clifford-tableau}; this is proved in Appendix~\ref{app:sae-clifford}. Under $C$, each generator $g_i$ is mapped (by conjugation) to a single-qubit $Z_i$ on the first $k$ qubits (up to a phase fixed by the tableau), and every state in the sector \eqref{eq:sector-constraint} is mapped to one with the first $k$ bits equal to $0$.

\item \textit{Project and drop symmetry qubits.}
Conjugate the Hamiltonian and project onto the chosen sector:
\begin{equation}
H' \;=\; P\, C\, H\, C^\dagger\, P .
\label{eq:clifford-projection}
\end{equation}
In practice we apply the tableau of $C$ to each Pauli term of $H$; discard any term that has $X$ or $Y$ on the first $k$ qubits; replace $Z_i$ by $1$ for $1\le i\le k$; then remove those $k$ qubits. The resulting $(n-k)$-qubit Hamiltonian is isospectral to $H$ within the selected symmetry sector.

\end{enumerate}

\subsection{SAE-CAS via approximate $Z$-symmetries.}
In SAE-CAS, the projection onto frozen-core and virtual orbitals is implemented as approximate $Z$-symmetries on the corresponding qubits. One first applies the exact-symmetry symmetry-adapted encoding to reduce qubits, and then projects out the frozen or virtual qubits to obtain the active-space qubit Hamiltonian.

The advantage of this approach is that it composes naturally with symmetry-based qubit removal.
In practice, this is done as in Eq.~\eqref{eq:clifford-projection}, where the overall Clifford for the encoding is $C=C_2 C_1$. Here $C_1$ is the Clifford constructed in \ref{subsec:symmetry-clifford}, with the restriction that the symmetry bits are only allowed to correspond to active space spin-orbitals (when we row-reduce $S$ we pivot on the active-space bits and discard the remaining rows). The operator $C_2$ is the Clifford implementing the complete-active-space approximation: its tableau has $T_2=I$ and $b_2$ with nonzero entries exactly on the frozen-core spin-orbitals (so that each approximate $Z$-symmetry acts on a single spin-orbital, taking eigenvalue $-1$ for frozen-core qubits and $+1$ for virtual qubits). We prove in Appendix~\ref{app:compatibility} that exact-symmetry affine basis changes commute with CAS qubit removal, so the order of applying $C_1$ and the CAS projection does not affect the resulting active-space Hamiltonian.

In Appendix~\ref{app:clifford-composition} we prove the general composition law for affine Clifford maps. In the present setting (with $T_2=I$), this yields
\begin{equation}
M=\begin{pmatrix} (T_1^{-1})^{T}&0\\0&T_1\end{pmatrix},
\qquad
s=\begin{pmatrix}T_1^{-1} (b_1 \oplus b_2)\\0\end{pmatrix}.
\end{equation}

Unlike the case of qubit removal using only exact symmetries, introducing CAS approximate $Z$-symmetries induces an approximation error in the energies. However, this error is identical to that of the canonical complete-active-space approximation. Indeed, the Hamiltonian obtained by (i) mapping the full problem to qubits via the Jordan--Wigner transform and then removing the qubits corresponding to frozen-core and virtual orbitals is exactly the same as (ii) forming the CAS Hamiltonian of \ref{subsec:cas} and then applying the Jordan--Wigner mapping on the active orbitals only. This equivalence is shown explicitly in Appendix~\ref{app:cas-equivalence} and illustrated on several examples in Section~\ref{sec:results}.

\subsection{Combination with the Bravyi--Kitaev mapping.}
\label{subsec:sae-cas-bk}

The Bravyi--Kitaev (BK) mapping \cite{Bravyi2002,Seeley2012} is an alternative fermion-to-qubit encoding to Jordan--Wigner. Whereas in JW each qubit stores the occupation number of a single spin-orbital, in BK each qubit stores instead the parity of the occupation numbers of a particular subset of orbitals; in this way both occupation-number and parity information are spread across the register, so that the Pauli weight of fermionic creation and annihilation operators scales as $\mathcal{O}(\log n)$ rather than $\mathcal{O}(n)$. Concretely, the JW-to-BK basis change is a Clifford operator $C_{\mathrm{BK}}$ whose action on a Jordan--Wigner computational-basis bitstring $f\in\mathbb F_2^n$ (the occupation numbers of the $n$ spin-orbitals) is the affine map $f\mapsto T_n f$, where the binary BK matrix $T_n$ is defined recursively, on registers of size $n=2^x$, by $T_1=(1)$ and
\begin{equation}
T_{2n}=
\begin{pmatrix}
T_n & 0\\
A_n & T_n
\end{pmatrix},
\label{eq:bk-matrix-recursion}
\end{equation}
with $A_n$ the $n\times n$ binary matrix whose last row is all ones and whose other entries are zero. Reading off the rows of $T_n$, each BK qubit thus encodes the parity of a particular subset of orbitals: most qubits store the occupancy of a single orbital, while at each recursion level one ``summary'' qubit (the last of each block) stores the total parity of all preceding orbitals in that block. $C_{\mathrm{BK}}$ satisfies the form of Eq.~\eqref{eq:affine-clifford-tableau} with $T=T_n$ and $b=0$, and its tableau is
\begin{equation}
M_{\mathrm{BK}}=\begin{pmatrix}(T_n^{-1})^{\mathsf T}&0\\ 0&T_n\end{pmatrix},
\qquad s_{\mathrm{BK}}=0 .
\label{eq:bk-clifford-tableau}
\end{equation}
The invertibility of $T_n$ over $\mathbb F_2$ and the derivation of \eqref{eq:bk-clifford-tableau} from Eq.~\eqref{eq:affine-clifford-tableau} are recorded in Appendix~\ref{app:bk-affine-clifford}.
The SAE-CAS-BK encoding is obtained by appending $C_{\mathrm{BK}}$ to the SAE-CAS Clifford on the reduced active subspace, $C\to C_{\mathrm{BK}}\, C_2\, C_1$, and using the composition law of Eq.~\eqref{eq:clifford-theorem} to assemble a single affine-Clifford tableau. Because $C_{\mathrm{BK}}$ is unitary and acts only on the $n-k$ active-space qubits, SAE-CAS-BK is unitarily equivalent to SAE-CAS: it has the same number of qubits, the same number of distinct Pauli terms in the qubit Hamiltonian, the same number of UCCSD variational parameters, and the same eigenspectrum within the chosen sector. The two encodings differ only in the locality structure of the qubit Hamiltonian (the per-term Pauli weight) and consequently in circuit depth and CNOT count. The motivation for SAE-CAS-BK is therefore the standard motivation for BK over JW: at sufficiently large active spaces the logarithmic per-term weight is expected to give shorter Trotterised excitation circuits and lower entangling-gate counts than JW-based SAE-CAS, while inheriting all qubit and Pauli-term reductions of the SAE-CAS construction. More generally, this construction illustrates that any fermion-to-qubit encoding that can be expressed as an affine Clifford basis change of Jordan--Wigner---of which Bravyi--Kitaev is one example, alongside parity and other variants \cite{Seeley2012}---can be composed with SAE-CAS qubit removal in exactly the same way, by appending its Clifford tableau to $C_2 C_1$ on the active subspace.

\section{Numerical results}
\label{sec:results}

To evaluate the SAE-CAS approach across a representative set of small molecules, we consider water (\ce{H2O}), ethene (\ce{C2H4}), oxygen (\ce{O2}), methylene (\ce{CH2}), carbon monoxide (\ce{CO}), formaldehyde (\ce{H2CO}), nitrogen (\ce{N2}), cyclobutadiene (\ce{C4H4}) and 1,3-butadiene (\ce{C4H6}). This selection spans a range of point-group symmetries, bonding configurations and active-space sizes that are representative of small-molecule electronic-structure problems. All geometries are taken at their experimental equilibrium configurations in a minimal STO-3G basis. For each molecule the active space is taken as the top-occupation MP2 natural orbitals for closed-shell strongly correlated systems (\ce{H2O}, \ce{C2H4}, \ce{H2CO}, \ce{C4H4}, \ce{C4H6}) and the HOMO/LUMO frontier RHF window for the remaining open-shell or diatomic systems (\ce{O2}, \ce{CH2}, \ce{CO}, \ce{N2}); the corresponding $(n,m)$-CAS sizes are listed in Tables~\ref{tab:resources-part1}--\ref{tab:resources-part2}.

The variational algorithms employ two distinct ansätze. First, the unitary coupled-cluster with single and double excitations (UCCSD) ansatz constructs the trial state by exponentiating fermionic excitation operators corresponding to single and double excitations, yielding a chemically motivated circuit \cite{Bartlett2007,Peruzzo2014,Romero2018}. Second, the hardware-efficient shifted-circular-alternating (HE-SCA) ansatz, consisting of repeated layers of parameterised $R_Y$ rotations (one on each qubit) followed by CNOT entangling blocks; we use the shifted-circular-alternating pattern characterised in \cite{Kandala2017,HE-SCA19}. For HE-SCA we scanned the number of layers and, for each molecule, retained the best-performing circuit (lowest converged energy): for the four-orbital active spaces (\ce{H2O}, \ce{C2H4}, \ce{CH2}, \ce{H2CO}, \ce{N2}, \ce{C4H4}, \ce{C4H6}) we tested $0$--$60$ layers; for the five-orbital active spaces (\ce{O2}, \ce{CO}) we tested $0$--$200$ layers.

For UCCSD we compare four strategies:
\begin{enumerate}
  \item JW: the standard Jordan--Wigner mapping without any symmetry-based qubit removal.
  \item JW-CAS: the Jordan--Wigner mapping of the complete active-space orbitals only, with excitations restricted to the chosen active orbitals but not exploiting any point-group symmetries.
  \item JW-CAS (SF): the same active-space encoding as JW-CAS with manual screening of excitations to enforce point-group symmetry by including only operators in the totally symmetric representation; this matches the number of variational parameters of SAE-CAS and reduces circuit complexity.
  \item SAE-CAS: the symmetry-adapted qubit encoding with complete active space mapping which incorporates both point-group and spin-parity symmetries to remove redundant qubits.
\end{enumerate}
For the HE-SCA ansatz we compare Jordan--Wigner mapping of the complete active-space orbitals only (JW-CAS) and our proposal, SAE-CAS.

Tables~\ref{tab:resources-part1} and \ref{tab:resources-part2} report, for each encoding, (a) number of qubits and (b) Hamiltonian Pauli count (also shown in Fig.~\ref{fig:resources_a}\subref{subfig:qubits},~\subref{subfig:pauli_weight}) together with (c) circuit depth, (d) CNOT count, (e) number of variational parameters, (f) VQE iterations, (g) VQE ground-state energy, and (h) the CAS(\textsf{PySCF}) reference (where circuit resource metrics are shown in Fig.~\ref{fig:resources_b} and convergence and accuracy are shown in Fig.~\ref{fig:performance}). All classical optimisations were performed with the SLSQP algorithm \cite{Kraft1988}.

Across molecules and both ansätze, incorporating symmetry at the {encoding} level provides consistent and substantial savings. Relative to JW-CAS, SAE-CAS removes additional qubits, lowers the Pauli count of the Hamiltonian, and yields shallower, less entangling circuits with fewer parameters. Enforcing point-group symmetry only at the circuit level (JW-CAS (SF)) helps reduce circuit depth and parameter count but leaves the qubit register and Pauli count unchanged from JW-CAS; SAE-CAS still outperforms it across all resource metrics, because the symmetry sector is enforced natively in the qubit map rather than post hoc in the ansatz.

For UCCSD with $\theta=0$ initialisation, the three encodings (JW-CAS, JW-CAS (SF) and SAE-CAS) all reach essentially the same converged energy on every molecule for which all three could be evaluated, with residual differences at the level of $10^{-8}\,E_h$ in nearly all cases. The exception is the JW-CAS (SF) ansatz for \ce{N2}, where the Trotterised circuit on the full eight-qubit JW-CAS register reaches a minimum $\sim 2\times10^{-3}\,E_h$ above SAE-CAS: the same fermionic excitations, applied without symmetry-based qubit removal, generate a larger set of noncommuting qubit operators on the unreduced register whose Trotter approximation no longer reproduces the symmetric-sector minimum (\ce{CO} shows a much milder version of the same effect). What distinguishes SAE-CAS at the UCCSD level is therefore not the absolute energy, but the cost at which that energy is reached: SAE-CAS uses 2--3$\times$ fewer SLSQP iterations than JW-CAS (e.g.\ 88 vs 278 for \ce{CO}, 21 vs 109 for \ce{N2}, 111 vs 195 for \ce{H2O}), in addition to the reduced qubit, Pauli, depth, CNOT and parameter counts. For \ce{C4H4} and \ce{C4H6} we report SAE-CAS numbers but were unable to evaluate JW-CAS or JW-CAS (SF) UCCSD because constructing the corresponding ansatz exhausts memory (footnote a in Table~\ref{tab:resources-part2}); SAE-CAS does not have this issue because symmetry-based qubit removal filters the excitation list to a much smaller subset at the mapping stage.

For HE-SCA, we find a particularly clear separation: with SAE-CAS, every molecule in our benchmark set converges within the layer budgets above to the CAS reference with errors in the range $\mathcal{O}(10^{-6})\,E_h$ (see the ``VQE energy error'' rows). With JW-CAS the four-orbital systems also converge, but require five to six times more layers than SAE-CAS to reach chemical accuracy (e.g.\ \ce{C4H6} reaches $\sim 10^{-3}\,E_h$ at 30 layers under JW-CAS versus 6 layers under SAE-CAS). The five-orbital cases are qualitatively different: for \ce{O2} (6,5) and \ce{CO} (6,5), the JW-CAS HE-SCA simulations did not converge within the 0--200 layer budget; the corresponding entries are left blank in Table~\ref{tab:resources-part1}. Two diagnostics in Appendix~\ref{app:hesca-limitations} identify the mechanism. The gradient variance $\mathrm{Var}_\theta[\partial\langle H\rangle/\partial\theta_0]$ on the 10-qubit JW-CAS register drops two orders of magnitude from $L{=}1$ to $L{=}5$ and then saturates at $\sim 10^{-3}$ across $L\in[25,200]$, ruling out a classical barren-plateau (no exponential decay with depth). Repeated SLSQP runs from random initialisations on the JW-CAS HE-SCA ansatz then reveal a non-particle-number-preserving drift: HE-SCA (a circuit of $R_y$ and CNOT gates) does not commute with $(N_\alpha, N_\beta)$, and without symmetry-based qubit removal at the encoding level the variational manifold spans all spin-orbital occupation sectors. Symmetry-preserving state-preparation circuits address related leakage at the ansatz level \cite{Gard2020}; here the restriction is instead imposed by the encoding. For \ce{O2} this leads the optimiser into a wrong $(N_\alpha,N_\beta)$ sector in 6/10 trials (including one converged state with non-integer $\langle N_\alpha\rangle$, $\langle N_\beta\rangle$, i.e.\ a literal sector superposition); for \ce{CO} the sector is mostly preserved but the optimiser gets trapped in higher-energy local minima inside the target sector (mean error $+0.097\,E_h$). SAE-CAS removes the wrong-sector channel by removing spin-parity qubits in the encoding, restricting the variational manifold to a 5- or 6-qubit target symmetry sector with markedly fewer local minima.

HE-SCA with SAE-CAS provides a significantly more compact circuit than the UCCSD ansatz and, in noiseless simulations, recovers the CAS-CI ground state once a modest number of layers is used. This compactness comes with a larger raw parameter count than UCCSD (some parameters are redundant), which is known to affect convergence under noise; additionally, the optimal number of HE-SCA layers must be found empirically. Even so, the layer scan described above reliably identified convergent circuits for all systems under SAE-CAS, whereas JW-CAS lacked convergent HE-SCA solutions for \ce{O2} and \ce{CO} within the same budgets.

\begin{table}
  \centering
  \caption{Resource comparison for H\textsubscript{2}O, C\textsubscript{2}H\textsubscript{4}, CH\textsubscript{2}, O\textsubscript{2} and CO}
  \label{tab:resources-part1}
  \includestandalone[mode=buildmissing, height=0.95\textheight]{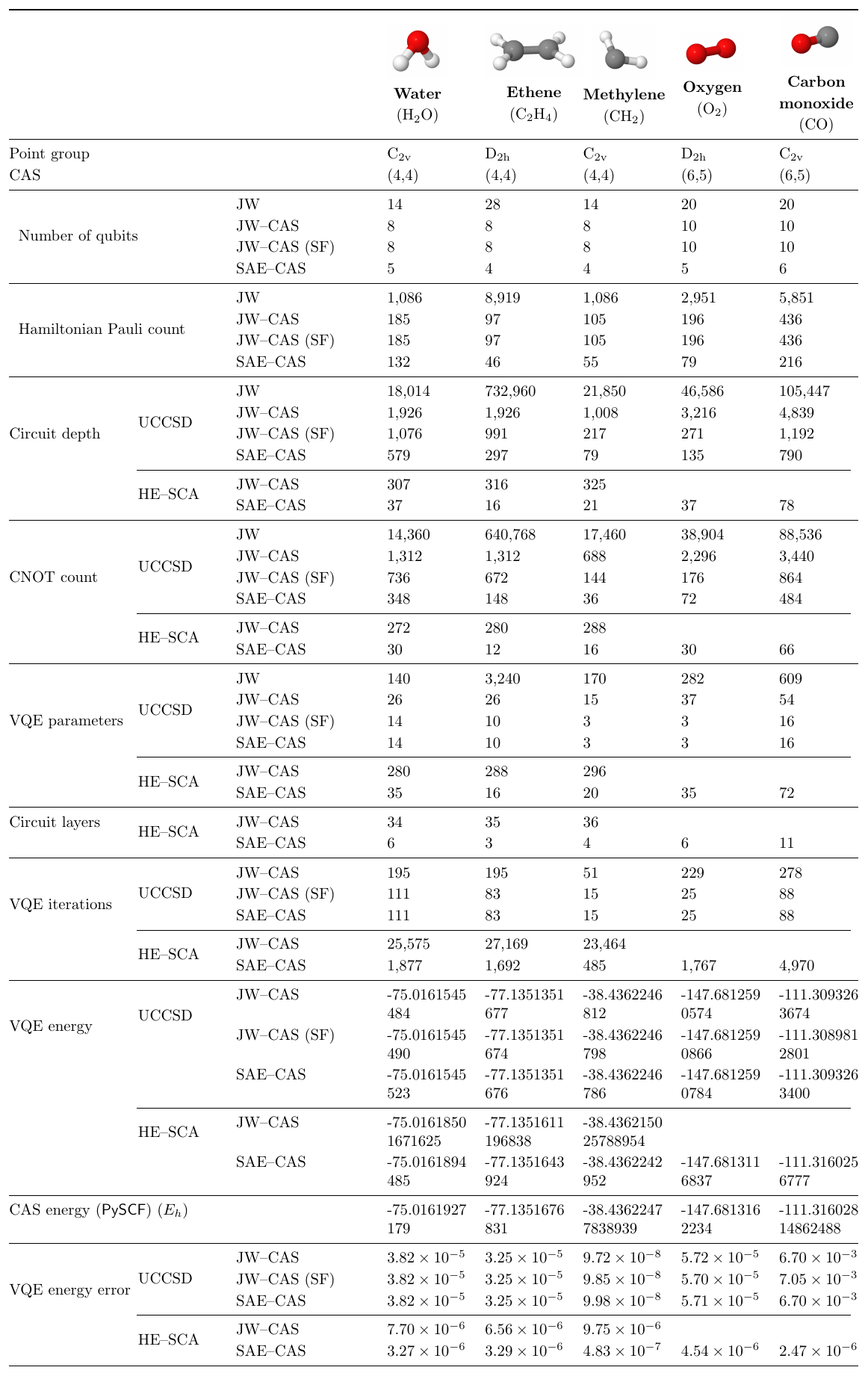}
\end{table}

\begin{table}
  \centering
  \caption{Resource comparison for H\textsubscript{2}CO, N\textsubscript{2}, C\textsubscript{4}H\textsubscript{4}, and C\textsubscript{4}H\textsubscript{6}}
  \label{tab:resources-part2}
  \includestandalone[mode=buildmissing, height=0.95\textheight]{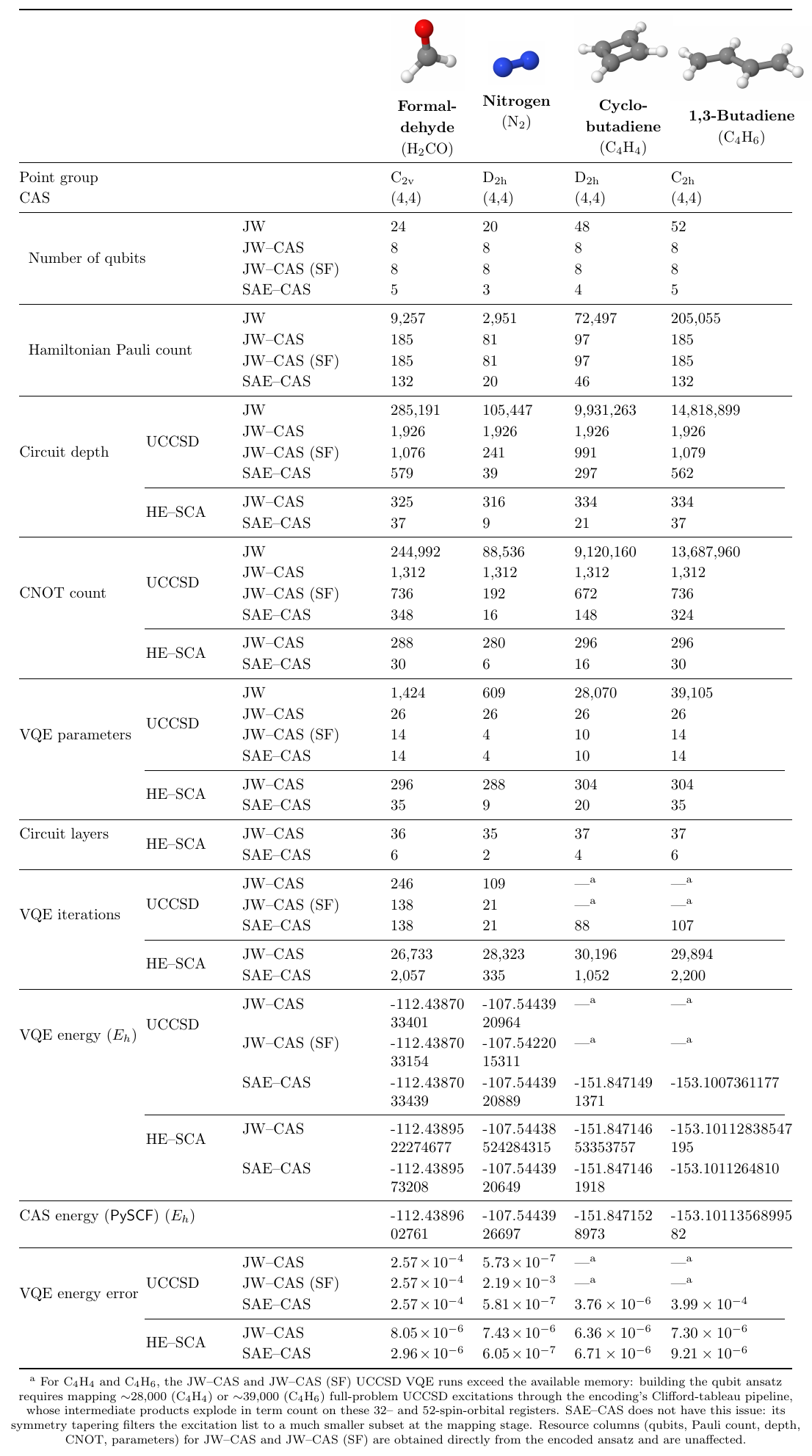}
\end{table}

\subsection{Bravyi--Kitaev variant: SAE-CAS-BK}
\label{subsec:results-bk}

We now compare SAE-CAS to its Bravyi--Kitaev composition, SAE-CAS-BK (Section~\ref{subsec:sae-cas-bk}), on the same nine molecules using the UCCSD ansatz initialised at the Hartree--Fock reference ($\theta=0$). Because the additional Bravyi--Kitaev Clifford acts only as a basis change on the reduced active subspace, every quantity invariant under such a basis change should agree between the two encodings; Table~\ref{tab:resources-bk} confirms this empirically. SAE-CAS and SAE-CAS-BK match exactly on the qubit count, Pauli term count and variational-parameter count, and produce the same converged VQE energy (to within $\sim 10^{-8}\,E_h$) in the same number of SLSQP iterations for every molecule in the benchmark set. The two encodings differ only in the basis-dependent resources, namely the per-term Pauli weight and the UCCSD circuit depth and CNOT count: SAE-CAS-BK matches SAE-CAS in CNOT count for \ce{CH2} and \ce{N2}, differs by within $\pm 6\,\%$ for the remaining molecules except \ce{CO}, and shows a $+20\,\%$ CNOT increase for \ce{CO}. On these CAS($n$,$m$) sizes the asymptotic $\mathcal{O}(\log n)$ locality advantage of BK is therefore not yet realised; we expect the choice between the two encodings to matter more at larger active spaces.

\begin{table*}
  \centering
  \caption{Comparison of SAE-CAS and SAE-CAS-BK for the same nine molecules and active spaces as Tables~\ref{tab:resources-part1}--\ref{tab:resources-part2}, with the UCCSD ansatz initialised at the Hartree--Fock reference ($\theta=0$). The two encodings are unitarily equivalent and consequently agree in qubit count, Hamiltonian Pauli count, variational-parameter count and converged VQE energy (the last to $\sim 10^{-8}\,E_h$) by construction; the only resource-level differences are confined to UCCSD circuit depth and CNOT count.}
  \label{tab:resources-bk}
  \includestandalone[mode=buildmissing, width=\textwidth]{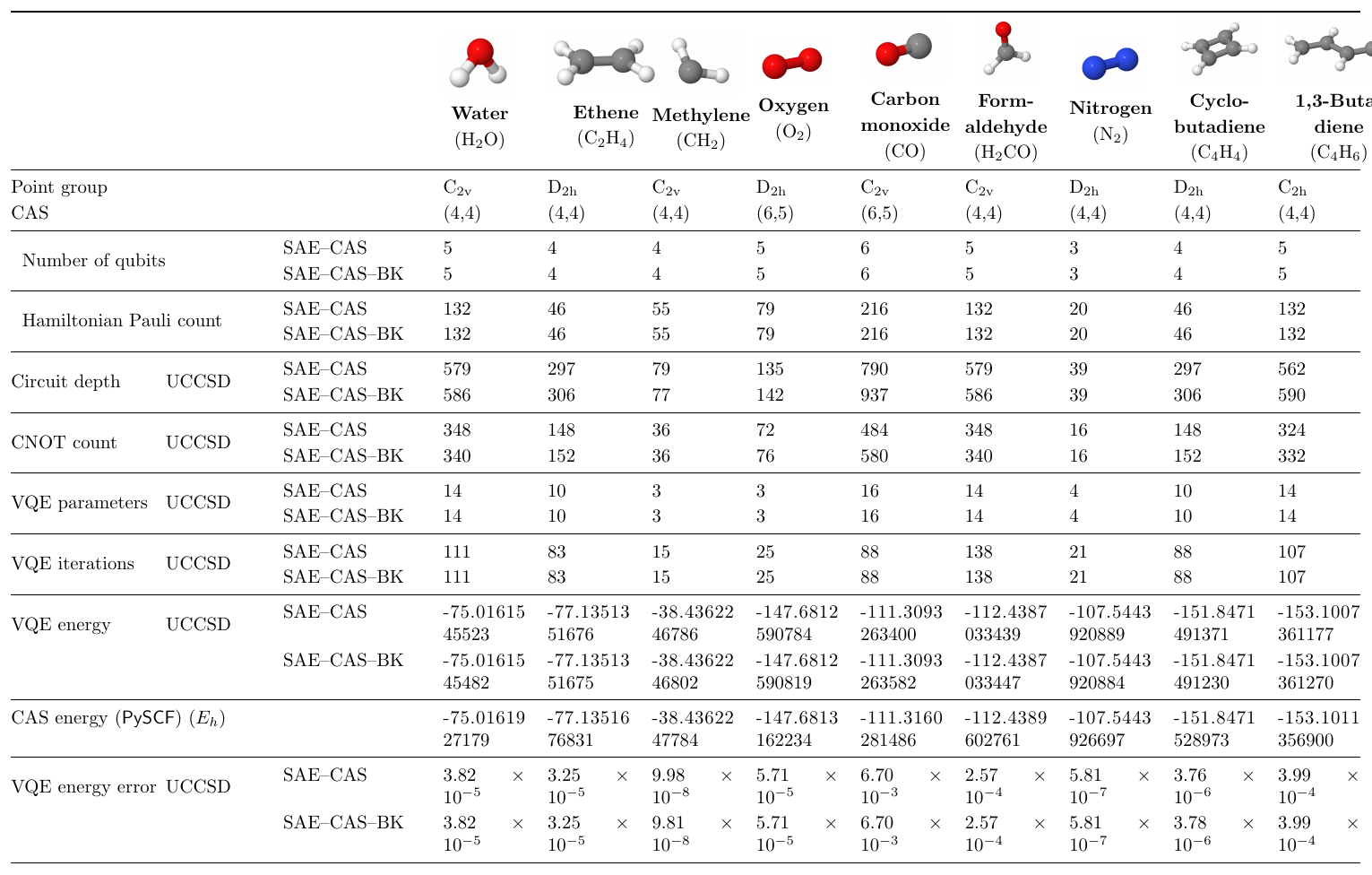}
\end{table*}

\section{Conclusion}
\label{sec:conclusion}
We introduced a symmetry-adapted qubit encoding with complete active space (SAE-CAS) that treats the CAS approximation as a set of approximate $Z$-symmetries and composes it with \emph{exact} point-group and spin-parity symmetries via affine Clifford maps. We proved that the resulting active-space qubit Hamiltonian is exactly equivalent to the canonical CAS Hamiltonian with frozen-core and virtual-orbital projection and that exact-symmetry basis changes commute with CAS qubit removal. This places SAE-CAS on firm theoretical footing: any approximation error arises solely from the chosen active space, not from the encoding itself.

Our numerical results across nine small molecules (Tables~\ref{tab:resources-part1}--\ref{tab:resources-part2}) demonstrate consistent resource savings relative to Jordan--Wigner baselines. For both UCCSD and HE-SCA ansätze, SAE-CAS reduces the number of qubits, lowers Pauli count, and yields shallower circuits with fewer entangling gates and variational parameters than JW-CAS. Enforcing point-group symmetry only at the circuit level (JW-CAS with symmetry filtering) helps, but encoding the symmetry natively (SAE-CAS) provides superior reductions across all metrics and removes redundant search directions from the variational landscape.

For the hardware-efficient HE-SCA ansatz, SAE-CAS converged to CAS energies for every molecule in our benchmark set within our layer budgets in a fraction of the layers needed by JW-CAS, and JW-CAS did not converge within the tested layer and optimisation budgets for \ce{O2} (6,5) and \ce{CO} (6,5). This highlights a practical advantage of enforcing symmetry at the encoding stage rather than post hoc in the ansatz.

We further showed that the same affine-Clifford framework composes SAE-CAS with the Bravyi--Kitaev mapping, yielding the unitarily equivalent SAE-CAS-BK encoding. Across our nine benchmarks the two encodings produce the same converged VQE energies (to $\sim 10^{-8}\,E_h$) in the same number of SLSQP iterations, as expected from unitary equivalence; they differ only in the basis-dependent UCCSD circuit depth and CNOT count, by within $\pm 6\,\%$ except for \ce{CO} ($+20\,\%$). We expect the choice between SAE-CAS and SAE-CAS-BK to become more consequential at larger active spaces, where BK's $\mathcal{O}(\log n)$ Pauli-weight scaling is known to be advantageous.

In summary, SAE-CAS offers a composable and systematic pathway to resource-efficient molecular simulations on fault-tolerant and near-term quantum processors, reducing qubits and operator complexity while preserving target symmetry sectors and CAS accuracy. We expect these savings to translate to larger active spaces and more expressive bases, especially when combined with measurement-reduction, error-mitigation, and compact ansätze. An open-source implementation of both SAE-CAS and SAE-CAS-BK is available in \texttt{QuantumSymmetry}.

\begin{acknowledgments}
D.P.'s research is supported by an EPSRC Postdoctoral Prize Fellowship [EP/W524335/1] at UCL and an EPSRC Research Fellowship [EP/S021582/1] at the London Centre for Nanotechnology. D.P.'s PhD was supported by an EPSRC Industrial CASE studentship [EP/T517793/1].
\end{acknowledgments}

\section*{Competing interests}

The Authors declare no competing financial or non-financial interests.

\section*{Data availability}

The code implementation described in the paper is available as an open-source Python package, whose source code is hosted on GitHub and archived on Zenodo (concept DOI \href{https://doi.org/10.5281/zenodo.7724696}{10.5281/zenodo.7724696})~\cite{picozzi2023quantumsymmetry}.

\section*{Author contributions}

D.P. conceived the main ideas introduced in the paper, wrote the initial version of the manuscript, and carried out the code implementation. J.T. contributed to discussions and the review of the manuscript.

\pagebreak

\begin{figure*}
  \centering

  \subfloat[UCCSD circuit depth\label{subfig:uccsd_depth}]{
    \includegraphics[width=0.48\textwidth]{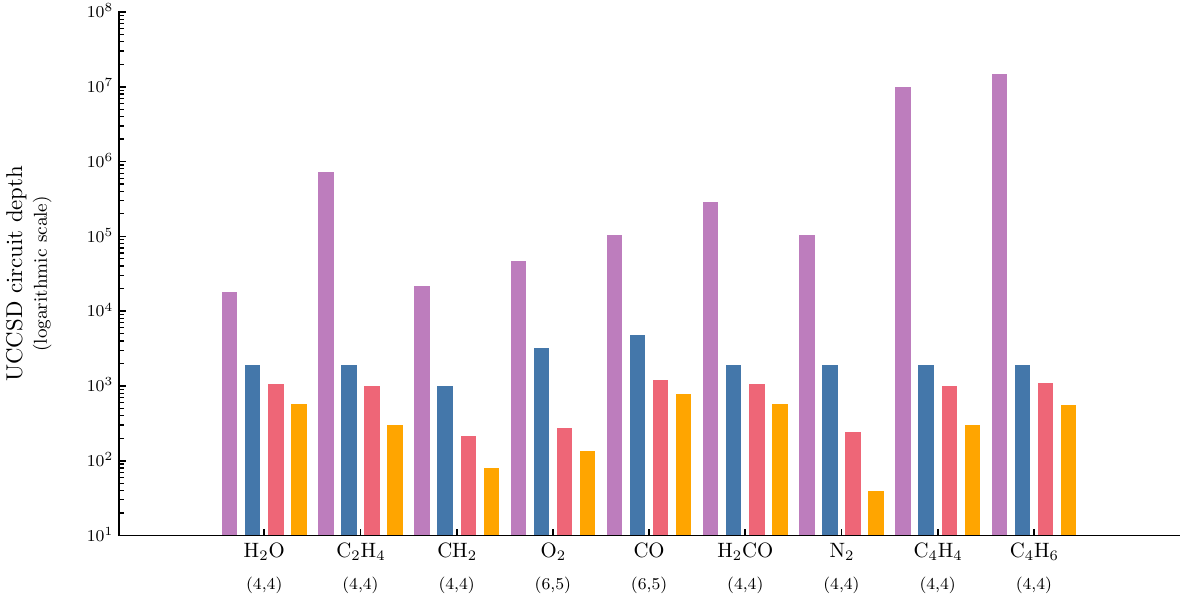}
  }\hfill
  \subfloat[HE-SCA circuit depth\label{subfig:hea_depth}]{
    \includegraphics[width=0.48\textwidth]{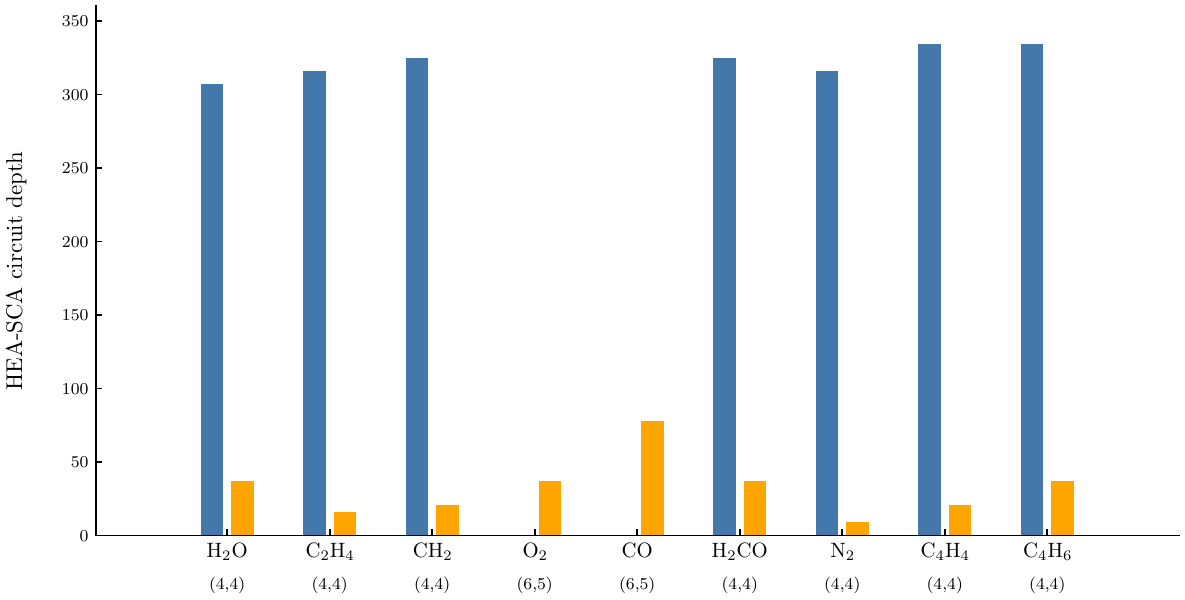}
  }\\[6pt]

  \subfloat[UCCSD CNOT count\label{subfig:uccsd_cnot}]{
    \includegraphics[width=0.48\textwidth]{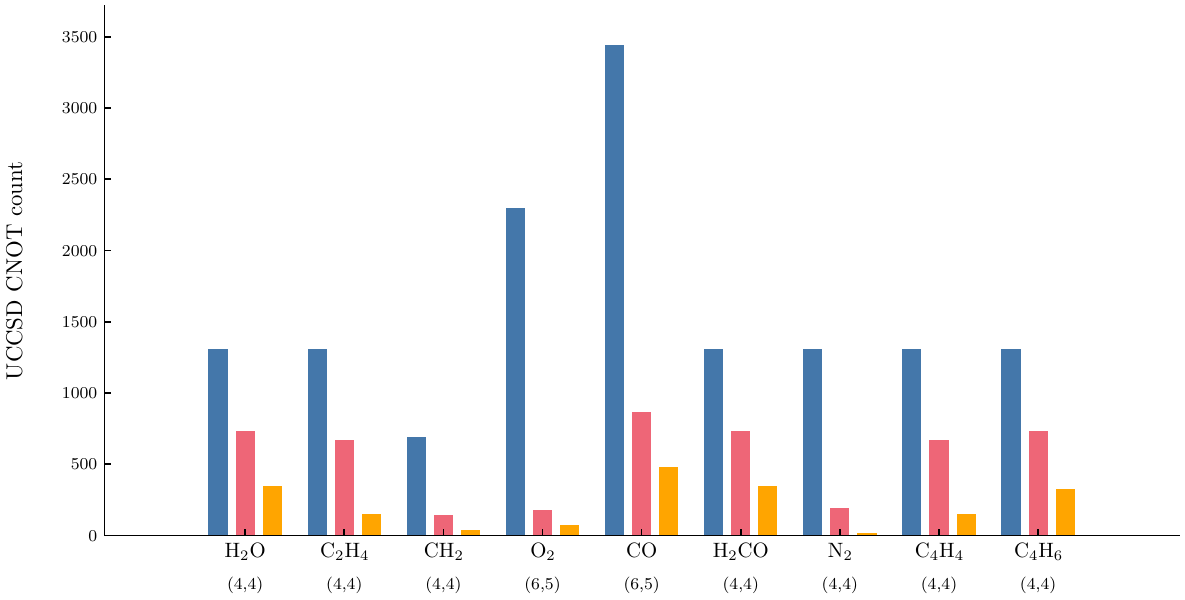}
  }\hfill
  \subfloat[HE-SCA CNOT count\label{subfig:hea_cnot}]{
    \includegraphics[width=0.48\textwidth]{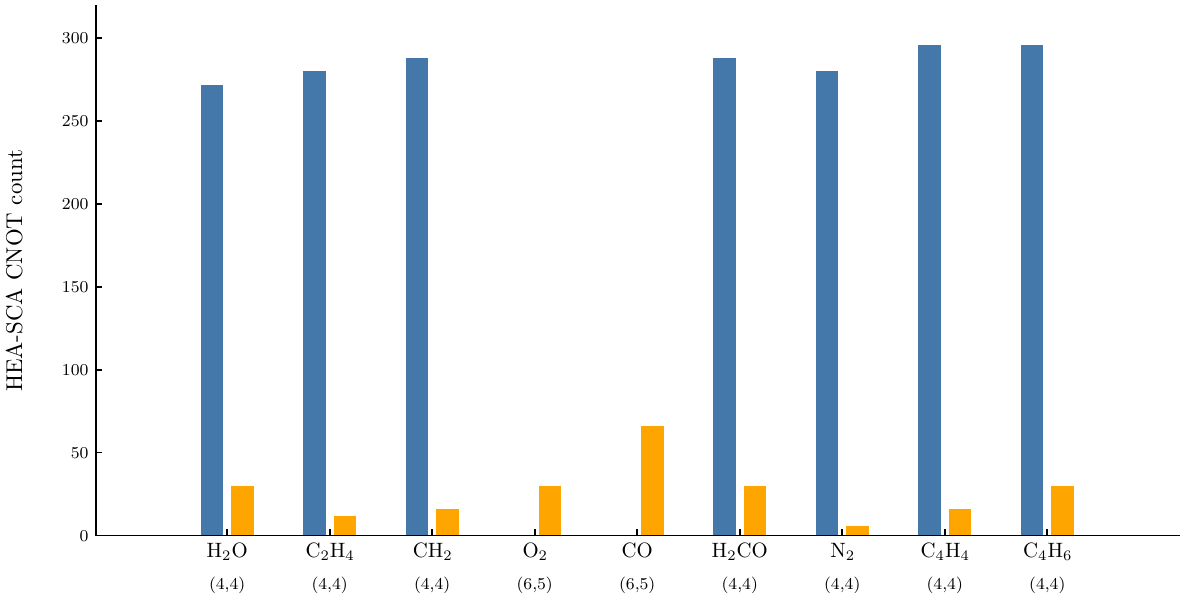}
  }\\[6pt]

  \subfloat[UCCSD VQE parameters\label{subfig:uccsd_params}]{
    \includegraphics[width=0.48\textwidth]{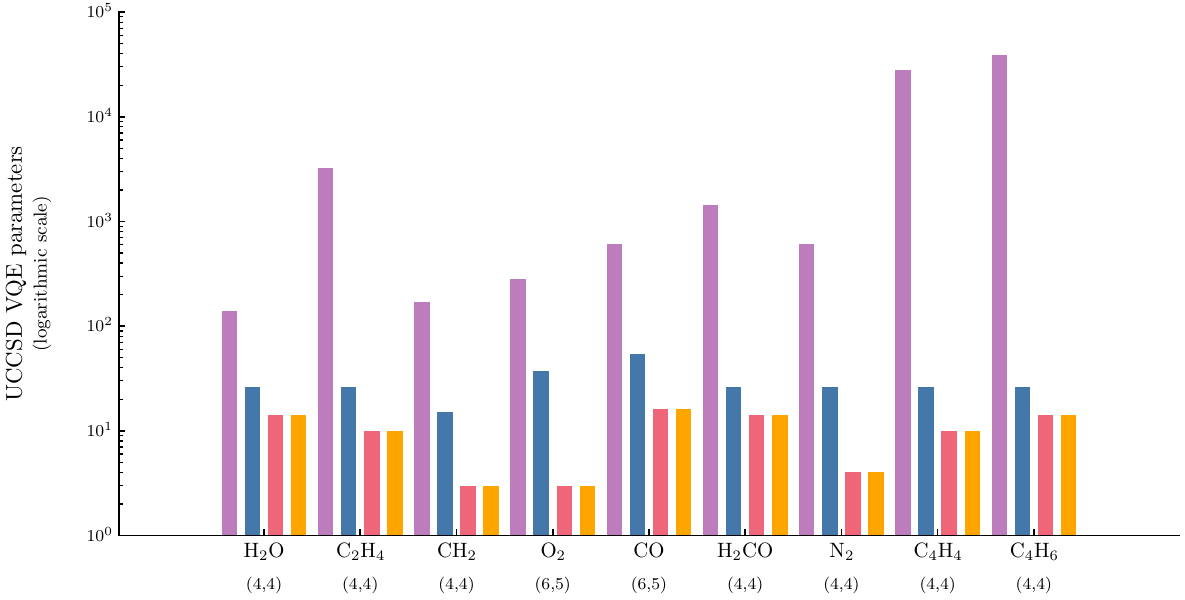}
  }\hfill
  \subfloat[HE-SCA VQE parameters\label{subfig:hea_params}]{
    \includegraphics[width=0.48\textwidth]{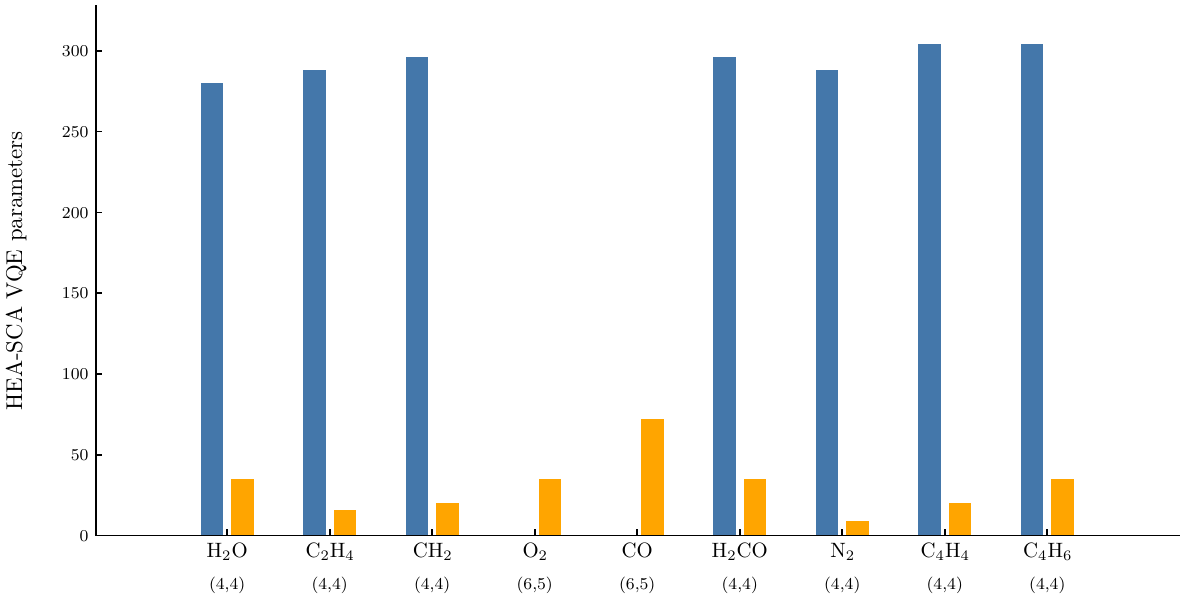}
  }\\[6pt]

  \hspace*{\fill}
  \subfloat[HE-SCA circuit layers\label{subfig:hea_layers}]{
    \includegraphics[width=0.48\textwidth]{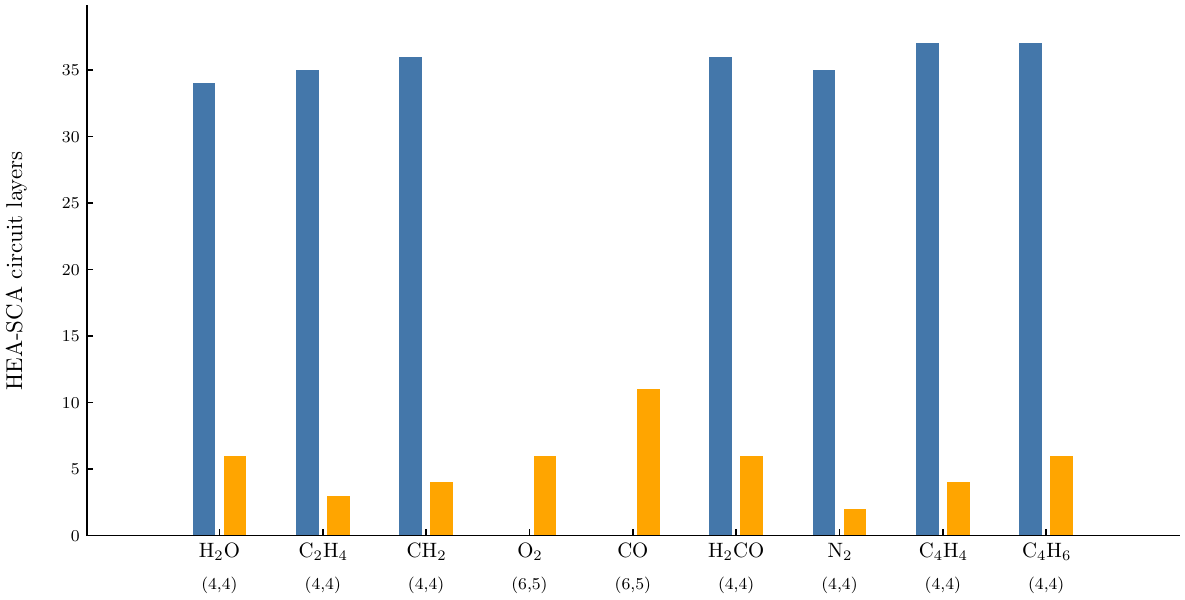}
  }
  \hspace*{\fill}

  \caption{%
    Circuit complexity metrics corresponding to Tables~\ref{tab:resources-part1}--\ref{tab:resources-part2}: circuit depth (UCCSD/HE-SCA), CNOT counts (UCCSD/HE-SCA),
    VQE parameter counts (UCCSD/HE-SCA), and HE-SCA circuit layers. SAE-CAS (\legendsquare{saecas}) yields shallower circuits,
    fewer entangling gates, and fewer parameters than JW (\legendsquare{jw}), JW-CAS (\legendsquare{jwcas}) and JW-CAS (SF) (\legendsquare{jwcassf}).
  }
  \label{fig:resources_b}
\end{figure*}

\begin{figure*}
  \centering
  \subfloat[Number of qubits\label{subfig:qubits}]{
    \includegraphics[width=0.48\textwidth]{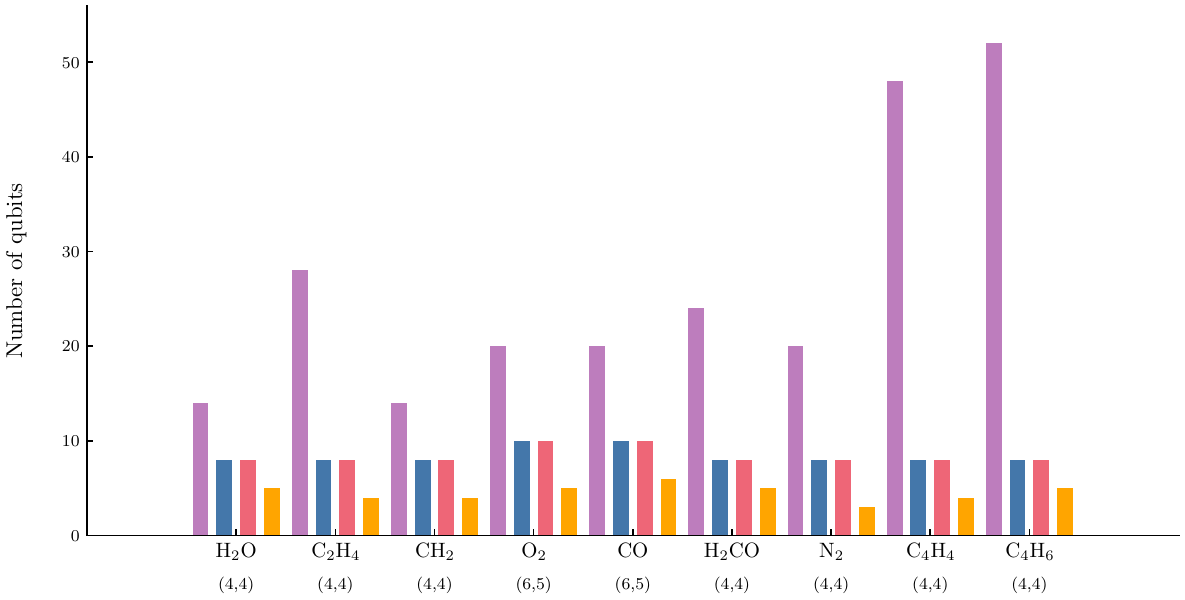}
  }\hfill
  \subfloat[Hamiltonian Pauli count\label{subfig:pauli_weight}]{
    \includegraphics[width=0.48\textwidth]{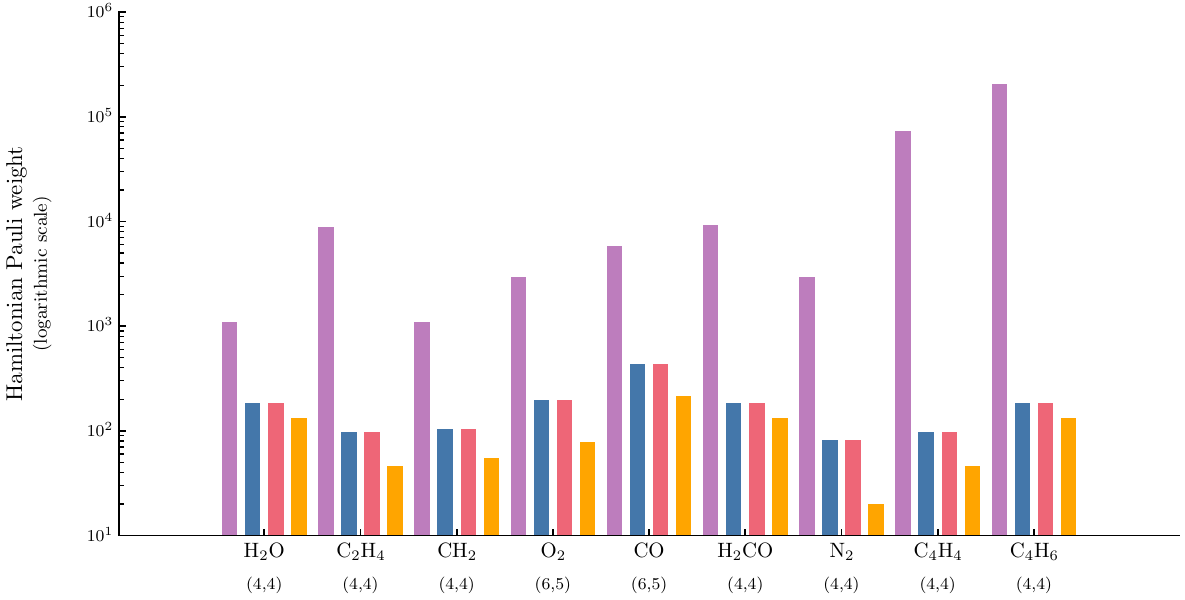}
  }
  \caption{%
    Qubit resource metrics corresponding to Tables~\ref{tab:resources-part1}--\ref{tab:resources-part2}: number of qubits and Hamiltonian Pauli count across all molecules in
    Tables~\ref{tab:resources-part1}--\ref{tab:resources-part2}. SAE-CAS (\legendsquare{saecas}) reduces both relative to JW (\legendsquare{jw}), JW-CAS (\legendsquare{jwcas}) and JW-CAS (SF) (\legendsquare{jwcassf}).
  }
  \label{fig:resources_a}
\end{figure*}

\begin{figure*}
  \centering

  \subfloat[UCCSD VQE iterations\label{subfig:uccsd_iters}]{
    \includegraphics[width=0.48\textwidth]{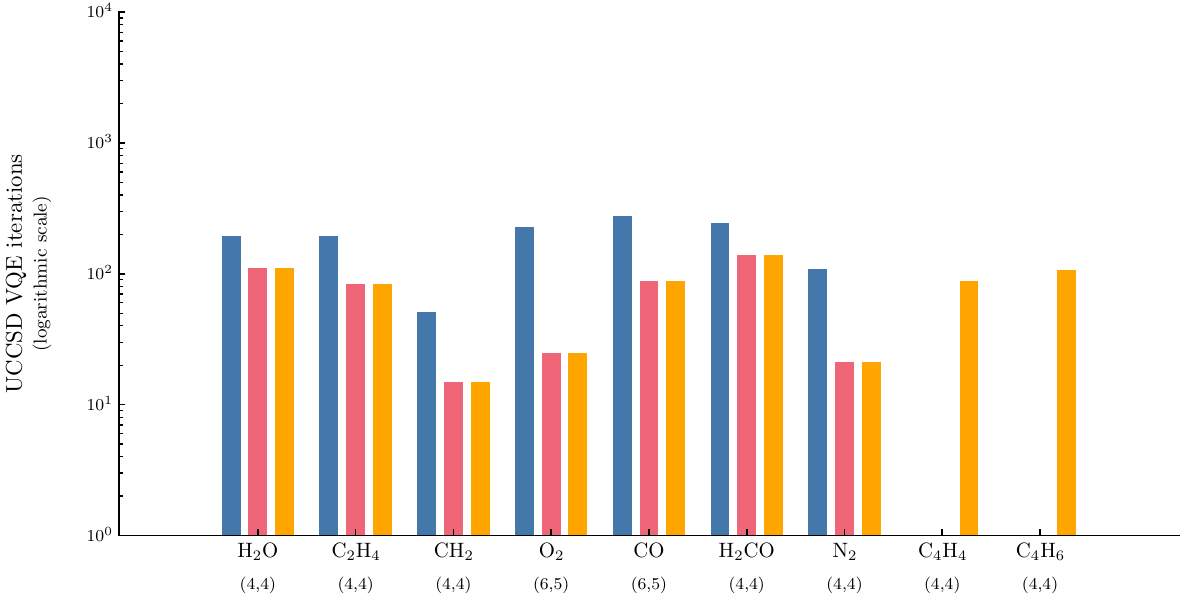}
  }\hfill
  \subfloat[HE-SCA VQE iterations\label{subfig:hea_iters}]{
    \includegraphics[width=0.48\textwidth]{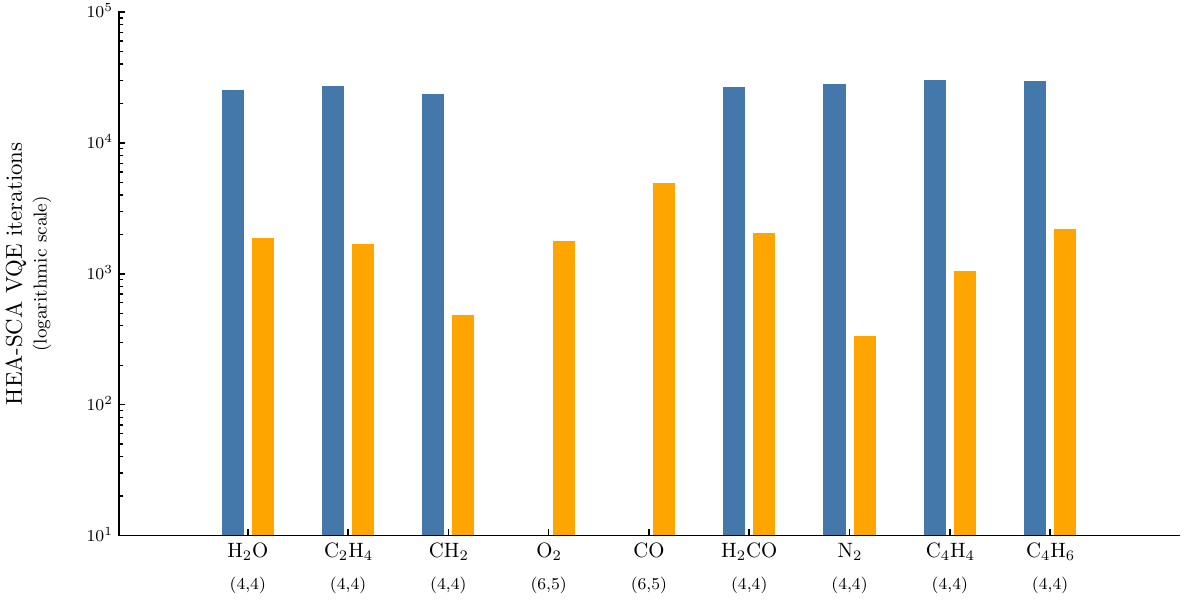}
  }\\[6pt]

  \subfloat[UCCSD VQE energy error ($E_h$)\label{subfig:uccsd_err}]{
    \includegraphics[width=0.48\textwidth]{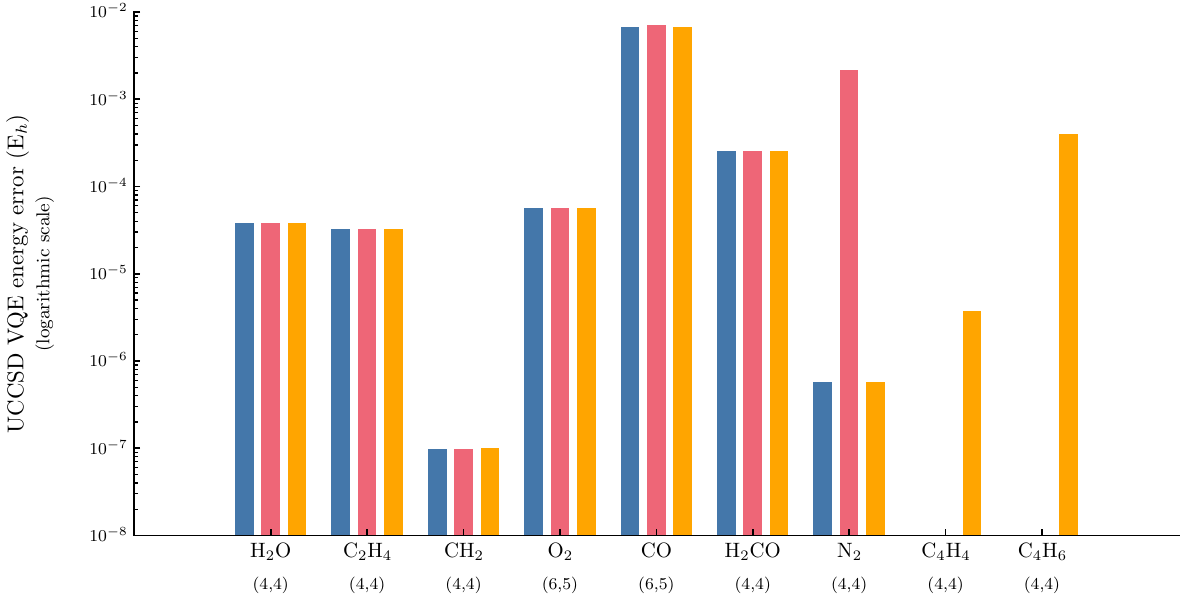}
  }\hfill
  \subfloat[HE-SCA VQE energy error ($E_h$)\label{subfig:hea_err}]{
    \includegraphics[width=0.48\textwidth]{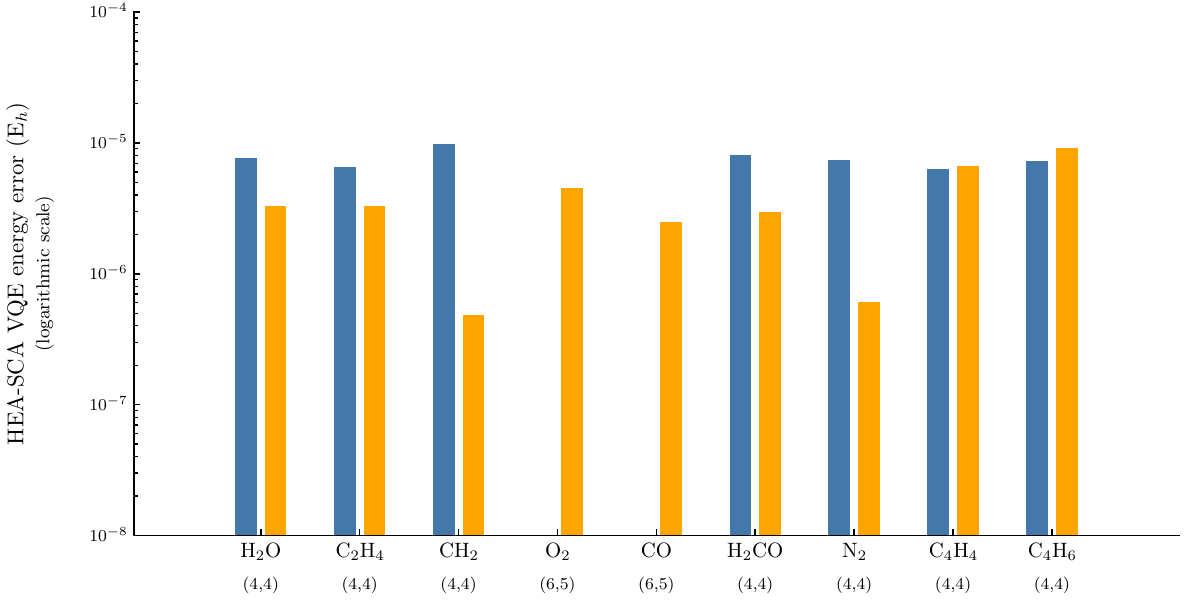}
  }

  \caption{%
    Optimisation and accuracy metrics corresponding to Tables~\ref{tab:resources-part1}--\ref{tab:resources-part2}.
    Panels: (a,b) VQE iteration counts for UCCSD and HE-SCA; (c,d) absolute VQE energy errors relative to the CAS energy calculated in \textsf{PySCF} in units of Hartree.
    Encoding symmetry at the mapping level with SAE-CAS (\legendsquare{saecas}) reduces iteration counts with respect to JW-CAS (\legendsquare{jwcas}) and JW-CAS (SF) (\legendsquare{jwcassf}) and allows us to attain CAS energies with HE-SCA in cases where the unreduced JW-CAS ansatz did not converge within the same budgets.
  }
  \label{fig:performance}
\end{figure*}

\pagebreak

\appendix

\section{Water molecule example}
\label{app:water_example}

The water molecule (\ce{H2O}) has point group symmetry $C_{2v}$, with symmetries $E$ (the identity), $C_2$ (the rotation by $\pi$ around the $z$ axis), $\sigma_v$ (the reflection across the $xz$-plane) and $\sigma_v'$ (the reflection across the $yz$-plane):

\begin{center}
  \includegraphics[width=0.15\linewidth]{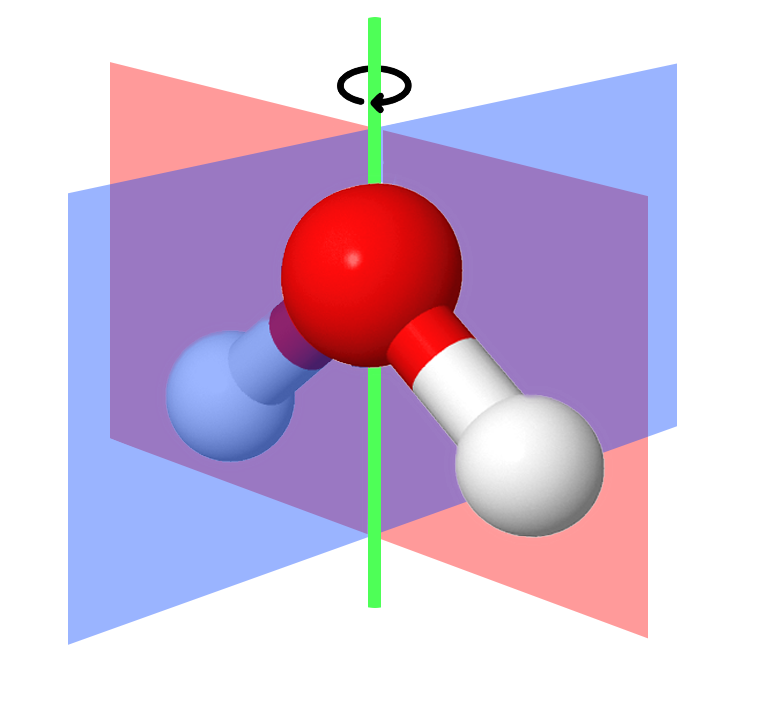}
\end{center}

The corresponding four one-dimensional irreps are $A_1$, $A_2$, $B_1$ and $B_2$, defined by the character table of $C_{2v}$:

\begin{center}
  \begin{tabular}{ c|c c c c}
    & \cellcolor{brown!10} $E$ & \cellcolor{green!10} $C_2$ & \cellcolor{blue!10} $\sigma_v$ & \cellcolor{red!10} $\sigma_v'$ \\ 
    \hline
      \cellcolor{lime!20}$A_1$ & $1$ & $1$ & $1$ & $1$ \\
      \cellcolor{cyan!20}$A_2$ & $1$ & $1$ & $-1$ & $-1$ \\
      \cellcolor{purple!20}$B_1$ & $1$ & $-1$ & $1$ & $-1$ \\
      \cellcolor{orange!20}$B_2$ & $1$ & $-1$ & $-1$ & $1$ \\
  \end{tabular}
\end{center}

Using symmetry-adapted molecular orbitals each orbital lies in a $C_{2v}$ irreducible representation. Then the qubit form of point-group symmetries in the Jordan--Wigner basis is particularly simple: each generator maps to a tensor product of $Z$’s acting exactly on the spin--orbitals that are antisymmetric under it. For water in a minimal basis, the 7 symmetry-adapted spatial orbitals correspond to 14 spin-orbitals, labelled by their irreducible representation. Each point-group element maps to products of $Z$’s on the Jordan--Wigner qubits in a way determined by the point-group character table and by the corresponding orbital's irreducible representation. Similarly, $P_{\alpha}$ and $P_{\beta}$ act as $Z$ operators on the qubits corresponding to spin-orbitals with spin $\alpha$ and $\beta$ respectively. This is shown for the example in the table below:

\begin{center}
\begin{tabular}{cccccccccccccccc}
     & & \multicolumn{2}{c}{\includegraphics[width=0.08\linewidth]{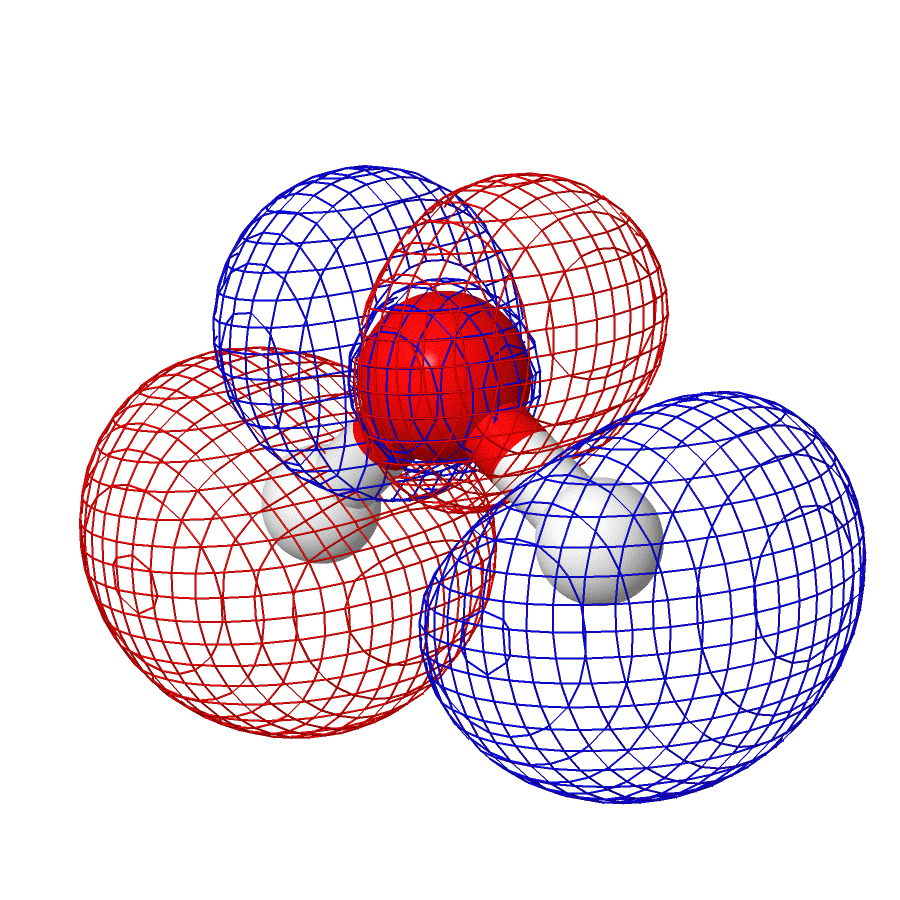}} & \multicolumn{2}{c}{\includegraphics[width=0.09\linewidth]{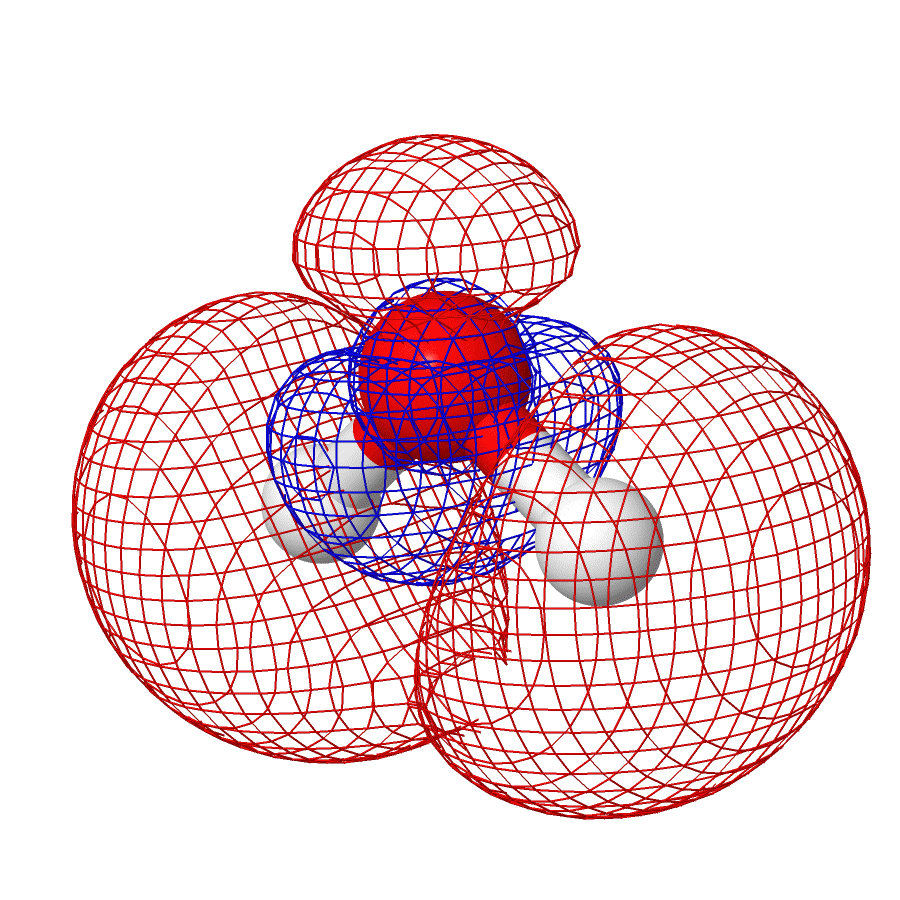}} & \multicolumn{2}{c}{\includegraphics[width=0.09\linewidth]{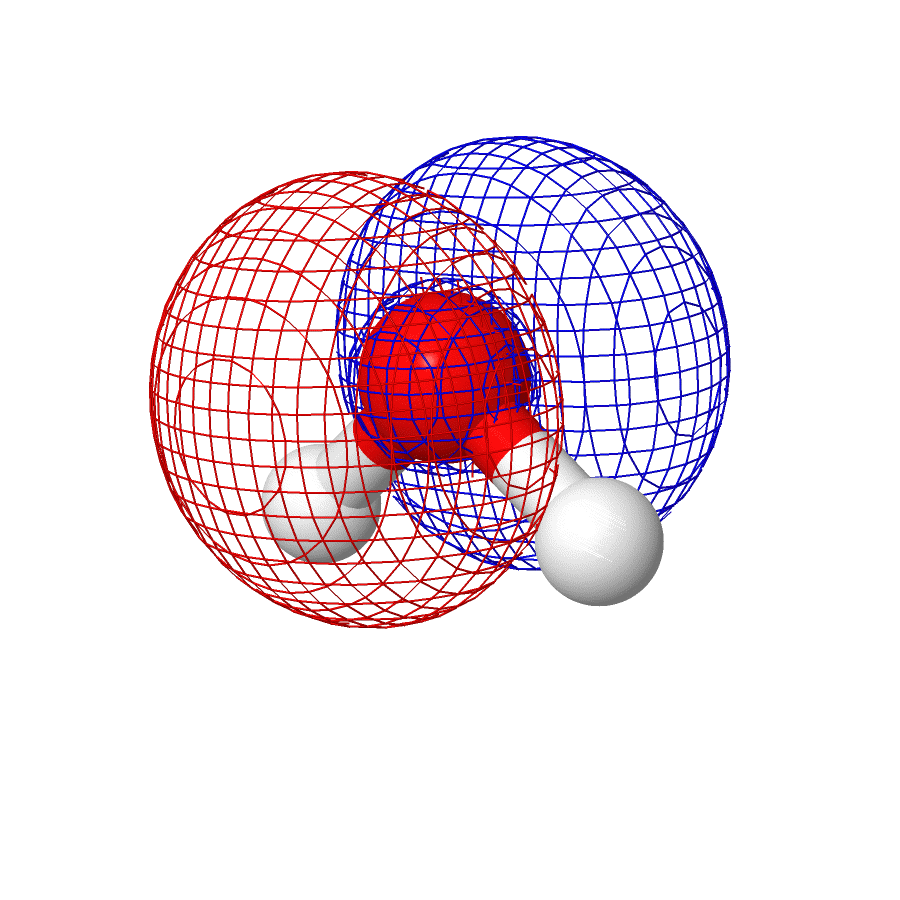}} & \multicolumn{2}{c}{\includegraphics[width=0.09\linewidth]{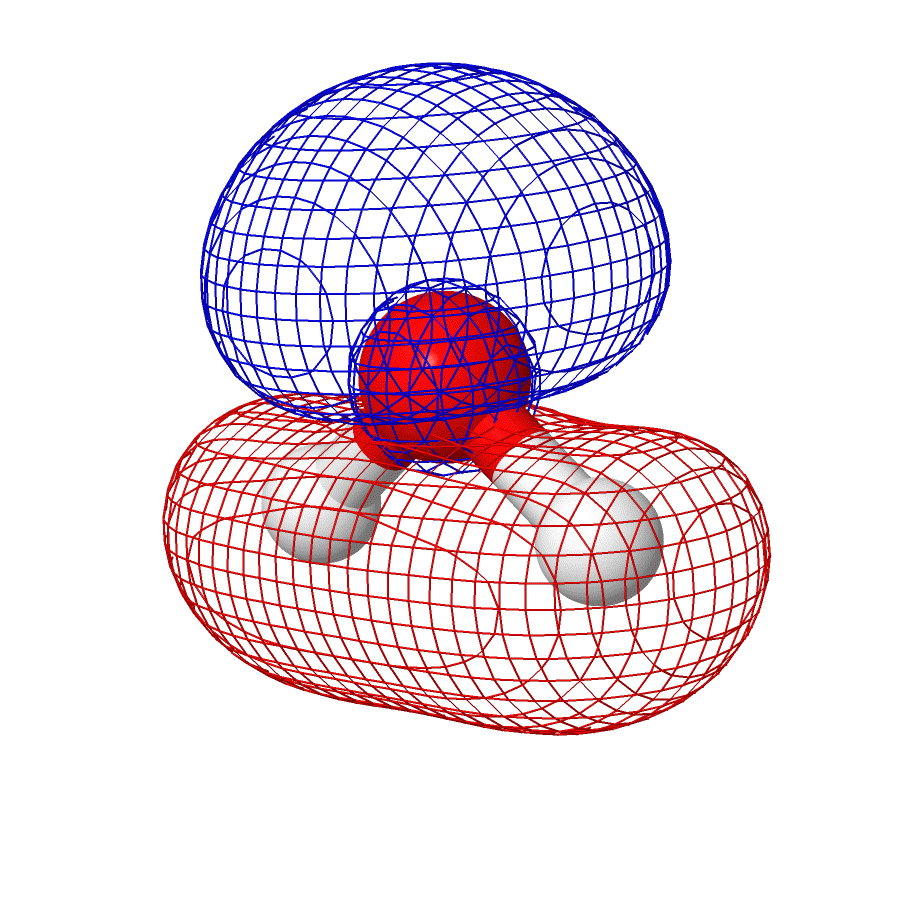}} & \multicolumn{2}{c}{\includegraphics[width=0.09\linewidth]{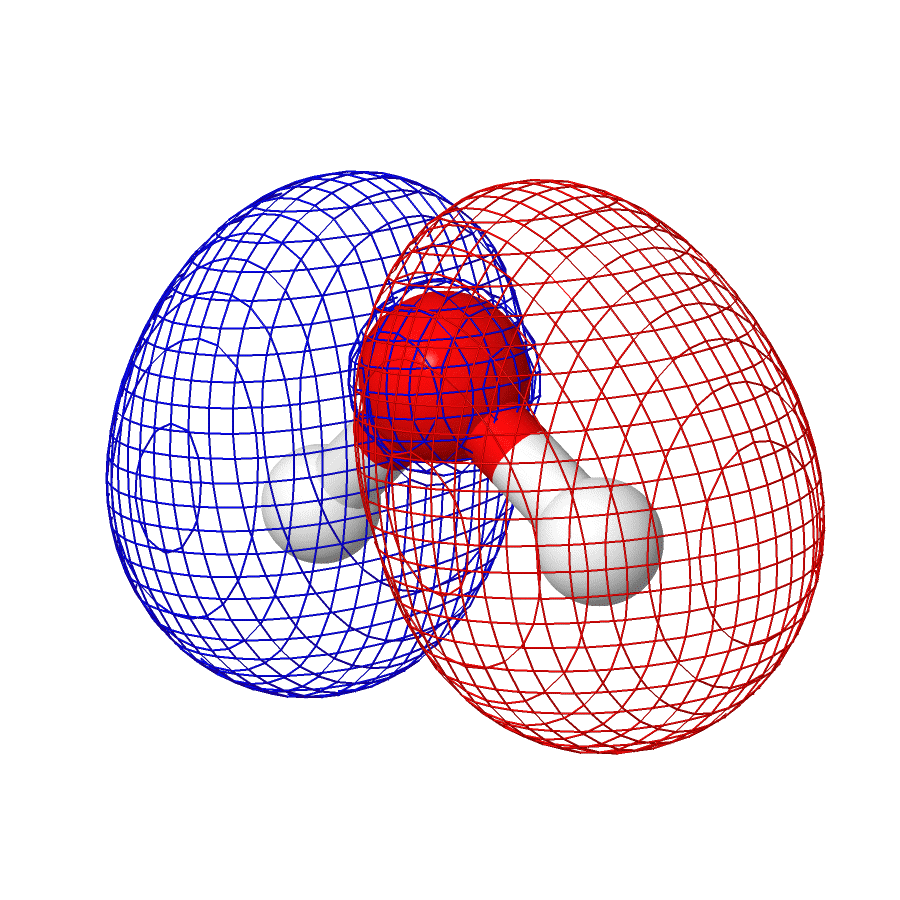}} & \multicolumn{2}{c}{\includegraphics[width=0.09\linewidth]{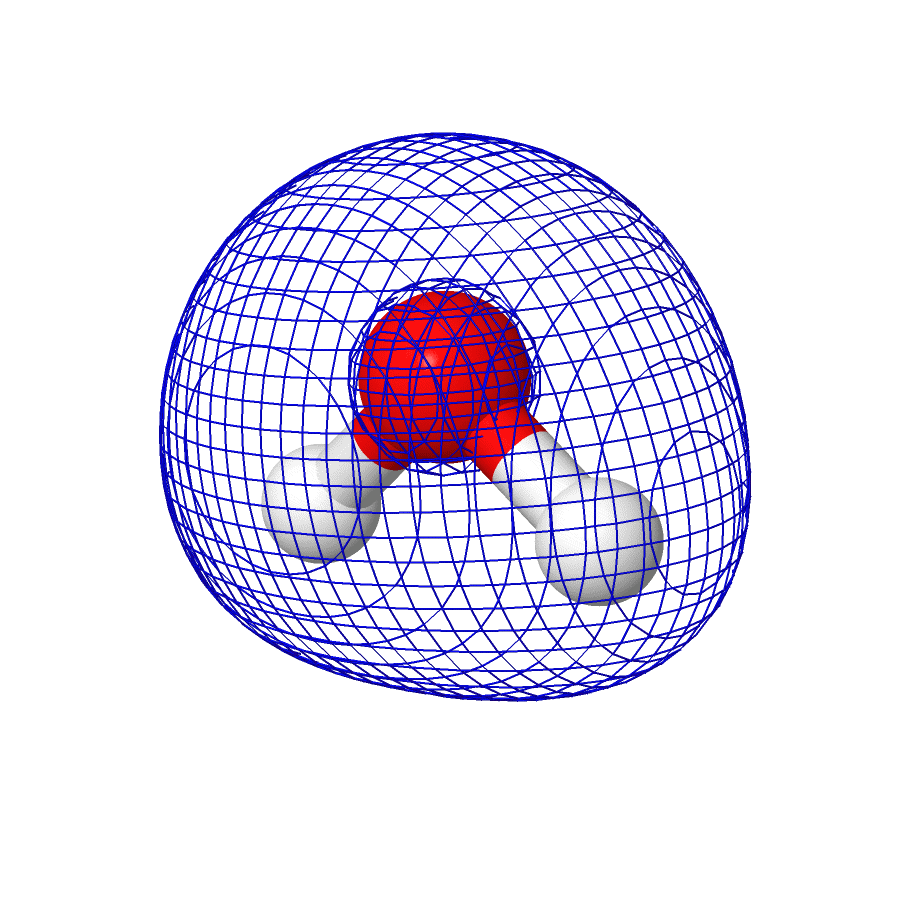}} & \multicolumn{2}{c}{\includegraphics[width=0.09\linewidth]{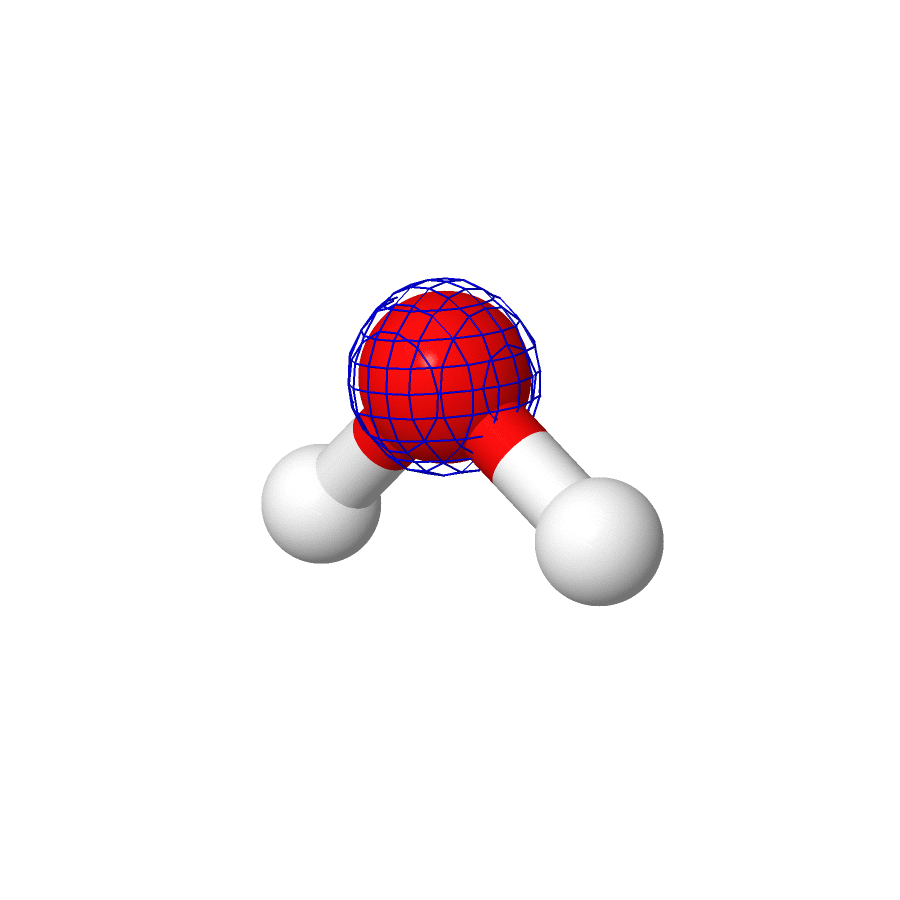}}
     \\
          & & \multicolumn{2}{c}{\cellcolor{orange!20} $2b_2$} 
          & \multicolumn{2}{c}{\cellcolor{lime!20} $4a_1$} 
          & \multicolumn{2}{c}{\cellcolor{purple!20} $1b_1$} 
          & \multicolumn{2}{c}{\cellcolor{lime!20} $3a_1$} 
          & \multicolumn{2}{c}{\cellcolor{orange!20} $1b_2$} 
          & \multicolumn{2}{c}{\cellcolor{lime!20} $2a_1$} 
          & \multicolumn{2}{c}{\cellcolor{lime!20} $1a_1$}
     \\
     & & \footnotesize $\beta$ 
     & \footnotesize $\alpha$
     & \footnotesize $\beta$ 
     & \footnotesize $\alpha$
     & \footnotesize $\beta$ 
     & \footnotesize $\alpha$
     & \footnotesize $\beta$ 
     & \footnotesize $\alpha$
     & \footnotesize $\beta$ 
     & \footnotesize $\alpha$
     & \footnotesize $\beta$ 
     & \footnotesize $\alpha$
     & \footnotesize $\beta$ 
     & \footnotesize $\alpha$
     \\
     \cellcolor{green!10} $C_2$ & $\rightarrow$ &  $Z_{13}$ &  $Z_{12}$ & & &  $Z_{9}$ &  $Z_{8}$ & & &  $Z_{5}$ &  $Z_{4}$ & & & & \\
     \cellcolor{blue!10} $\sigma_v$ & $\rightarrow$ &  $Z_{13}$ &  $Z_{12}$ & & & & & & &  $Z_{5}$ &  $Z_{4}$ & & & & \\
     \cellcolor{red!10} $\sigma_v'$ & $\rightarrow$  & & & & &  $Z_{9}$ &  $Z_{8}$ & & & & & & & & \\ \\
    $P_{\alpha}$ & $\rightarrow$  & &  $Z_{12}$ & &   $Z_{10}$ & &  $Z_{8}$ & &  $Z_{6}$ & &  $Z_{4}$ & &  $Z_{2}$ & &  $Z_{0}$ \\ 
    $P_{\beta}$ & $\rightarrow$  &  $Z_{13}$ & &   $Z_{11}$ & &  $Z_{9}$ & &  $Z_{7}$ & &  $Z_{5}$ & &  $Z_{3}$ & &  $Z_{1}$ &
\end{tabular}
\end{center}

The CAS(4, 4) active space selection corresponds to freezing three of those seven molecular orbitals, and can be understood as introducing additional approximate symmetries each acting as a one-qubit $Z$.

\section{Properties of affine Clifford maps}
\label{app:clifford-maps}

We work over the field $\mathbb F_2$. Bitstrings are column vectors $a \in\mathbb F_2^n$. The standard basis vectors are $e_j$ for $j=1,\dots,n$. For bitstrings $x, z \in\mathbb F_2^n$ we define the multi-qubit Paulis:
\begin{equation}\label{eq:def-Paulis}
X(x):=\bigotimes_{k=1}^n X_k^{x_k},
\qquad
Z(z):=\bigotimes_{k=1}^n Z_k^{z_k}.
\end{equation}
They act on computational basis states $\ket a$ by
\begin{align}
\label{eq:x-action}
X(x)\ket a &= \ket{a\oplus x},
\\
\label{eq:z-action}
Z(z)\ket a &= (-1)^{z\cdot a}\ket a,
\end{align}
where $v\cdot a$ is the standard dot product over $\mathbb F_2$.
In particular, $X_j:=X(e_j)$ and $Z_j:=Z(e_j)$.

\begin{proposition}[Pauli permutations are Pauli $X$]
\label{prop:pauli-permutation-test}
Any $n$-qubit Pauli can be written as $\omega\,X(x)Z(z)$ with $\omega\in\{\pm1,\pm i\}$ and $x,z\in\mathbb{F}_2^n$.
Its nonzero computational-basis matrix elements satisfy
\begin{equation}
\label{eq:pauli-matrix-elements}
\big\langle a\oplus x \,\big|\, \omega X(x)Z(z) \,\big|\, a \big\rangle
= \omega\,(-1)^{z\cdot a},
\end{equation}
so it is a signed permutation matrix up to a global phase. Moreover, it is a permutation (all nonzero matrix elements are $+1$) if and only if $z=0$ and $\omega=1$, in which case the operator is $X(x)$.
\end{proposition}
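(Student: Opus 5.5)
The plan is to work directly from the definitions \eqref{eq:def-Paulis} and the actions \eqref{eq:x-action}--\eqref{eq:z-action}, in three steps.

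\textbf{Normal form.} First we establish that every Pauli has the form $\omega X(x)Z(z)$. An $n$-qubit Pauli is a tensor product $\bigotimes_k \sigma_k$ with $\sigma_k\in\{I,X,Y,Z\}$, possibly times a phase in $\{\pm1,\pm i\}$. Using $Y=iXZ$ on each factor, we set $x_k=1$ exactly when $\sigma_k\in\{X,Y\}$ and $z_k=1$ exactly when $\sigma_k\in\{Z,Y\}$. The factors act on different qubits, so the single-qubit $X$'s can be collected to the left of the $Z$'s with no sign changes. This gives $\omega X(x)Z(z)$, where $\omega$ is the original phase times $i^{\#Y}$. Since $\omega$ is a product of fourth roots of unity, it stays in $\{\pm1,\pm i\}$.

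\textbf{Matrix elements.} We apply \eqref{eq:z-action} and then \eqref{eq:x-action}:
\[
\omega X(x)Z(z)\ket a=\omega(-1)^{z\cdot a}\ket{a\oplus x}.
\]
Taking the inner product with $\bra{a\oplus x}$ gives \eqref{eq:pauli-matrix-elements}. The same identity shows that $\bra{b}\,\omega X(x)Z(z)\,\ket a=0$ whenever $b\neq a\oplus x$.

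The map $a\mapsto a\oplus x$ is a bijection of $\mathbb F_2^n$ (it is its own inverse). Hence each column and each row of the matrix has exactly one nonzero entry. That entry has the form $\omega\cdot(\pm1)$, so the matrix equals $\omega$ times a signed permutation matrix.

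\textbf{Characterisation.} For the ``if'' direction: if $z=0$ and $\omega=1$, every nonzero entry equals $1$, so the operator is the permutation $X(x)$.

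For the ``only if'' direction: suppose all nonzero entries equal $+1$.
\begin{itemize}
\item Taking $a=0$ gives $\omega(-1)^{0}=\omega=1$.
\item Then $(-1)^{z\cdot a}=1$ for every $a$. Choosing $a=e_j$ gives $z_j=0$ for each $j$, so $z=0$.
\end{itemize}

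\textbf{Main obstacle.} There is no substantive obstacle here. The only care needed is in the phase bookkeeping of the normal form: we must check that reordering $X$ and $Z$ on different qubits introduces no signs, and that converting each $Y$ contributes exactly a factor $i$, so that $\omega$ remains a fourth root of unity.
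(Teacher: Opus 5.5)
Your proposal is correct and follows essentially the same route as the paper: the normal form via $Y=iXZ$ and the commutation of operators on different qubits, the direct evaluation $\omega X(x)Z(z)\ket{a}=\omega(-1)^{z\cdot a}\ket{a\oplus x}$, and a characterisation obtained by testing specific inputs $a$. Your ``only if'' step (first $a=0$ to force $\omega=1$, then $a=e_j$ to force $z_j=0$) is a slightly tidier, direct version of the paper's contrapositive argument.
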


\begin{proof}
Because operators acting on different qubits commute and on each qubit we have $Y = iXZ$, any Pauli can be written as $\omega\,X(x)Z(z)$ with $\omega \in \{\pm1,\pm i\}$.
Then
\begin{align}
\omega X(x)Z(z)\ket{a}
&= \omega\,(-1)^{z\cdot a}\ket{a} && \text{by \eqref{eq:z-action}}\\
&= \omega\,(-1)^{z\cdot a}\ket{a\oplus x} && \text{by \eqref{eq:x-action}}\\
\end{align}
which yields \eqref{eq:pauli-matrix-elements} and shows there is exactly one nonzero matrix element per row and per column (a permutation up to a phase).
If $z\neq 0$, choose $a$ with $z\cdot a=1$ to obtain a $-\omega$ entry; if $\omega\in\{-1,\pm i\}$, some nonzero matrix element do not equal $1$.
Thus all nonzero entries are $+1$ exactly when $z=0$ and $\omega=1$, in which case the operator is $X(x)$.
\end{proof}

\begin{proposition}[CNOTs generate linear maps on bitstrings]\label{lem:CNOT-generate-GL}
For $i\neq j$, let $T_{j,i}:=I+E_{j,i}$, the transvection that sends $x_j\mapsto x_j\oplus x_i$ (a CNOT with control $i$ and target $j$). Then $\langle T_{j,i}:i\neq j\rangle = GL(n,2)$. Equivalently, every $A\in GL(n,2)$ can be written as a product of $T_{j,i}$’s.
\end{proposition}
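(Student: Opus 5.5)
The plan is to run Gaussian elimination over $\mathbb F_2$ and record every row operation as a left multiplication by a transvection. First, each $T_{j,i}=I+E_{j,i}$ lies in $GL(n,2)$, since $T_{j,i}^2=I+2E_{j,i}+E_{j,i}^2=I$ over $\mathbb F_2$ (because $E_{j,i}^2=0$ for $i\neq j$). So $T_{j,i}$ is its own inverse, and the group generated by these matrices is contained in $GL(n,2)$. The content of the proposition is therefore the reverse inclusion. For a matrix $A$, left multiplication by $T_{j,i}$ adds row $i$ of $A$ to row $j$. It suffices to show that some product $W$ of transvections satisfies $WA=I$. Then $A=W^{-1}$ is again a product of transvections, because each one is an involution and we can simply reverse the order of the factors.

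The elimination proceeds column by column. At step $k$, assume columns $1,\dots,k-1$ have already been reduced to $e_1,\dots,e_{k-1}$. Because $A$ is invertible and row operations preserve invertibility, column $k$ must have a $1$ in some row $r\ge k$. Otherwise the first $k$ columns would all lie in $\mathrm{span}(e_1,\dots,e_{k-1})$ and so be linearly dependent. If the entry in position $(k,k)$ is $0$, apply $T_{k,r}$ to make it $1$. Then apply $T_{j,k}$ for every $j\neq k$ whose entry in column $k$ is $1$; this clears column $k$ to $e_k$. These operations add a multiple of row $k$ to other rows. Row $k$ has zeros in columns $1,\dots,k-1$, so the columns already reduced are left unchanged. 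After $n$ steps the matrix is $I$.

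The main point needing care is the pivot step. General Gaussian elimination uses row swaps, which are not transvections, but they are not needed here. We never swap rows: we add row $r$ to row $k$, and this creates a pivot just as well. If one prefers to think in terms of swaps, note that the swap of rows $i$ and $j$ equals $T_{i,j}T_{j,i}T_{i,j}$, the standard three-CNOT SWAP. Over $\mathbb F_2$ there are no scalings to handle, since the only nonzero scalar is $1$. The equivalent reformulation, that every $A\in GL(n,2)$ is a product of $T_{j,i}$'s, is immediate from $A=W^{-1}$. In circuit terms, every invertible linear map on bitstrings is realised by a CNOT circuit.
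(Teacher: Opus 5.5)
Your proof is correct and takes essentially the same route as the paper: Gauss--Jordan elimination over $\mathbb F_2$, with each row addition realised as a transvection $T_{j,i}$, then inversion using $T_{j,i}^{-1}=T_{j,i}$. The only difference is cosmetic: you create each pivot by adding row $r$ to row $k$, while the paper allows row swaps and writes each one as $T_{i,j}T_{j,i}T_{i,j}$, the same identity you also mention.
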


\begin{proof}
Over $\mathbb F_2$ there is no nontrivial scaling, so Gauss--Jordan elimination uses only row additions and row swaps. Left-multiplication by $T_{j,i}$ adds row $i$ to row $j$, and a swap is a product of three transvections by the identity
\begin{equation}\label{eq:xor-swap}
\mathrm{swap}_{i,j}=T_{i,j}\,T_{j,i}\,T_{i,j}.
\end{equation}
Corresponding to the well-known quantum circuit identity
\begin{equation}
\begin{quantikz} 
& \gate[wires=2]{\text{SWAP}} & \gate[wires=2,style={draw=none,fill=none}]{=} & \ctrl{1} & \targ{} & \ctrl{1} & \qw \\ 
& & & \targ{} & \ctrl{-1} & \targ{} & \qw 
\end{quantikz}
\end{equation}
Thus there exist transvections $R_1,\ldots,R_m$ with $R_m\cdots R_1A=I$. Taking inverses and using $T_{j,i}^{-1}=T_{j,i}$ gives
$A=R_1\cdots R_m$, a product of transvections, as claimed.
\end{proof}

\begin{proposition}[X and CNOTs generate affine maps on bitstrings]\label{thm:affine-equals-XCNOT}
A unitary $U$ acts as $\ket x\mapsto \ket{A x\oplus b}$ with $A\in GL(n,2)$ and $b\in\mathbb F_2^n$
if and only if $U$ is a circuit over $X$ and $\mathrm{CNOT}$.
\end{proposition}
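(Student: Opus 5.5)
We prove the two directions separately, using Proposition~\ref{lem:CNOT-generate-GL} as the main ingredient.

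\emph{($\Leftarrow$) Circuits over $X$ and CNOT act affinely.} Each generator acts affinely on computational-basis states. $X_j=X(e_j)$ sends $\ket{x}\mapsto\ket{x\oplus e_j}$ by \eqref{eq:x-action}, which is the affine map with $A=I$, $b=e_j$. A CNOT with control $i$ and target $j$ sends $\ket{x}\mapsto\ket{T_{j,i}x}$, with $T_{j,i}\in GL(n,2)$ and $b=0$. Affine maps of the form $x\mapsto Ax\oplus b$ with $A$ invertible are closed under composition:
\begin{equation}
A_2(A_1x\oplus b_1)\oplus b_2=(A_2A_1)x\oplus(A_2b_1\oplus b_2),
\end{equation}
and $A_2A_1\in GL(n,2)$. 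An induction on the number of gates then shows that any circuit over $X$ and CNOT acts as $\ket{x}\mapsto\ket{Ax\oplus b}$ with $A\in GL(n,2)$.

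\emph{($\Rightarrow$) Every affine action is realised by such a circuit.} Suppose $U\ket{x}=\ket{Ax\oplus b}$ for all $x$, with $A\in GL(n,2)$. By Proposition~\ref{lem:CNOT-generate-GL}, write $A=R_1\cdots R_m$ with each $R_\ell$ a transvection $T_{j,i}$. Let $V$ be the circuit that applies the CNOTs for $R_m,\dots,R_1$ in that temporal order, so that $V\ket{x}=\ket{Ax}$. Then let $W=X(b)=\prod_{k:\,b_k=1}X_k$, so that $WV\ket{x}=\ket{Ax\oplus b}$. The unitaries $U$ and $WV$ agree on every computational-basis vector, and those vectors form a basis, so $U=WV$ exactly, with no phase ambiguity.

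The argument is routine. The only point needing care is the one just made: the hypothesis fixes $U$ on basis states with no relative phases, so no diagonal phase gates are needed. This is consistent with Proposition~\ref{prop:pauli-permutation-test}, which shows that phase-free permutations are generated without any $Z$ component. The remaining bookkeeping is to check that the gate order reproduces the product $R_1\cdots R_m$ rather than its reverse; since each $T_{j,i}$ is an involution, either order can be fixed by relabelling.
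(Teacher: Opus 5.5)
Your proof is correct and follows essentially the same route as the paper: composing the affine actions of $X$ and CNOT for one direction, and using Proposition~\ref{lem:CNOT-generate-GL} with trailing $X$ gates for the other (your check that $U=WV$ exactly on the computational basis is a useful detail the paper leaves implicit). One minor quibble is that your closing remark about involutivity is off, since applying the gates in the reverse order yields $A^{-1}$ rather than $A$; this is harmless, because the temporal order $R_m,\dots,R_1$ you actually specified already gives $A$.
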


\begin{proof}
\emph{Only if.} Each generator is affine: $X_k$ adds $e_k$, and $\mathrm{CNOT}_{i\to j}$ applies $T_{j,i}$.
Composing these gives $\ket{A x\oplus b}$ with $A\in GL(n,2)$.

\emph{If.} By Lemma~\ref{lem:CNOT-generate-GL}, write $A$ as a product of additions and realize each by a $\mathrm{CNOT}$.
Append $X_k$ for each $b_k=1$ to implement $x\mapsto x\oplus b$.
\end{proof}

\begin{proposition}[Permutation Cliffords are affine maps on bitstrings]\label{thm:perm-Clifford-affine}
An $n$-qubit unitary $U$ is a Clifford that permutes the computational basis
if and only if $U$ is affine (hence $U\in\langle X,\mathrm{CNOT}\rangle$).
\end{proposition}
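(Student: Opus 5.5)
The plan is to prove the two directions separately. Throughout, let $U$ permute the computational basis via a bijection $\pi$ of $\mathbb F_2^n$, so that $U\ket a=\ket{\pi(a)}$. The ``hence'' clause will follow at the end from Proposition~\ref{thm:affine-equals-XCNOT}.

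\emph{Affine $\Rightarrow$ Clifford permutation.} Suppose $U\ket a=\ket{Aa\oplus b}$ with $A\in GL(n,2)$. Then $U$ permutes basis states by construction. By Proposition~\ref{thm:affine-equals-XCNOT}, $U$ is a circuit over $X$ and CNOT. Both gates are Clifford, and Cliffords are closed under composition, so $U$ is Clifford. As a direct cross-check, I would compute the conjugations: $UX(x)U^\dagger=X(Ax)$, and $UZ(z)U^\dagger=(-1)^{z\cdot A^{-1}b}Z((A^{-1})^{\mathsf T}z)$, which uses $z\cdot A^{-1}(a\oplus b)=((A^{-1})^{\mathsf T}z)\cdot a \oplus z\cdot A^{-1}b$.

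\emph{Clifford permutation $\Rightarrow$ affine.} This is the main step. First I will show that $\pi(a\oplus x)=\pi(a)\oplus\pi(x)\oplus\pi(0)$. Fix $x$ and consider $P:=UX(x)U^\dagger$. Since $U$ is Clifford, $P$ is a Pauli. It is also a product of permutation matrices, hence itself a permutation matrix. By Proposition~\ref{prop:pauli-permutation-test}, $P=X(y)$ for some $y\in\mathbb F_2^n$, with no phase and no $Z$ part. Evaluating on $\ket{\pi(a)}$ gives $UX(x)U^\dagger\ket{\pi(a)}=U\ket{a\oplus x}=\ket{\pi(a\oplus x)}$. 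On the other hand this equals $X(y)\ket{\pi(a)}=\ket{\pi(a)\oplus y}$. Hence $\pi(a\oplus x)=\pi(a)\oplus y_x$ for every $a$, where $y_x$ is independent of $a$. Setting $a=0$ gives $y_x=\pi(x)\oplus\pi(0)$.

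Now define $b=\pi(0)$ and $L(a)=\pi(a)\oplus b$. The identity above becomes $L(a\oplus x)=L(a)\oplus L(x)$, so $L$ is additive. Over $\mathbb F_2$, additivity is linearity, because the only scalars are $0$ and $1$. Therefore $L(a)=Aa$ for the matrix $A$ whose columns are $L(e_j)$. Since $\pi$ is a bijection, $L$ is also a bijection, so $A\in GL(n,2)$ and $\pi(a)=Aa\oplus b$.

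The only subtle point is justifying that $UX(x)U^\dagger$ has no leftover phase or $Z$ component. This is exactly what Proposition~\ref{prop:pauli-permutation-test} supplies, since a Pauli that is a permutation matrix must equal $X(x)$. Note that the argument needs conjugation of $X$-type Paulis only; nothing about the images of $Z$-type Paulis is required. Finally, Proposition~\ref{thm:affine-equals-XCNOT} turns the affine form into a circuit over $X$ and CNOT, which gives $U\in\langle X,\mathrm{CNOT}\rangle$.
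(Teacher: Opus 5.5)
Your proof is correct and follows essentially the same route as the paper's. You conjugate $X$-type Paulis by $U$, use Proposition~\ref{prop:pauli-permutation-test} to force the image to be a phase-free $X(y)$, read off $\pi(a\oplus x)=\pi(a)\oplus y_x$, conclude that $a\mapsto\pi(a)\oplus\pi(0)$ is linear, and obtain the converse from Proposition~\ref{thm:affine-equals-XCNOT}. The only differences are minor: you conjugate arbitrary $X(x)$ rather than only $X(e_j)$, which gives full additivity directly instead of the paper's ``additivity on generators'', and you state explicitly that $X$ and CNOT are Clifford in the reverse direction, which the paper leaves implicit.
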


\begin{proof}
\emph{Forward direction.}
Suppose $U$ is a permutation matrix: $U\ket x=\ket{\pi(x)}$.
Then
\begin{equation}\label{eq:conj-perm-X}
U X(e_j) U^\dagger=\sum_x \ket{\pi(x\oplus e_j)}\!\bra{\pi(x)} .
\end{equation}
This is a permutation matrix with all nonzero entries $+1$. Since $U$ is Clifford, the left-hand side is a Pauli,
so by Proposition~\ref{prop:pauli-permutation-test} it must equal $X(u_j)$ for some $u_j$.
Reading off matrix entries gives $\pi(x\oplus e_j)=\pi(x)\oplus u_j$ for all $x,j$.
Let $b=\pi(0)$ and define $f(x)=\pi(x)\oplus b$. Then $f(x\oplus e_j)=f(x)\oplus f(e_j)$, hence by additivity on generators
\begin{equation}\label{eq:perm-linear-part}
f(x)=\sum_{j=1}^n x_j f(e_j)=A x .
\end{equation}
Thus $\pi(x)=A x\oplus b$; since $\pi$ is bijective, $A\in GL(n,2)$.

\emph{Reverse direction.}
If $U\ket x=\ket{A x\oplus b}$ with $A\in GL(n,2)$, then $x\mapsto A x\oplus b$ is a bijection,
so $U$ permutes the basis. By Theorem~\ref{thm:affine-equals-XCNOT}, $U\in\langle X,\mathrm{CNOT}\rangle$.
\end{proof}

\section{Tableau of an affine Clifford map}
\label{app:clifford-tableau}

\begin{proposition}[Tableau of an affine Clifford map]\label{prop:block-tableau}
Let $T\in GL(n,2)$ and $b\in\mathbb F_2^n$, and define the unitary
\begin{equation}\label{eq:affine-U}
C\ket a=\ket{T a\oplus b},\qquad a\in\mathbb F_2^n .
\end{equation}
Write the Clifford tableau in block form
\begin{equation}\label{eq:tableau-blocks}
M=\begin{pmatrix} M_{ZZ} & M_{ZX} \\ M_{XZ} & M_{XX} \end{pmatrix},\qquad
s=\binom{s_Z}{s_X}\in\mathbb F_2^{2n},
\end{equation}
meaning that for each $j=1,\dots,n$ (with $e_j$ the $j$th standard basis vector),
\begin{align}
\label{eq:tableau-definition-z}
C\,Z_j\,C^\dagger&=(-1)^{(s_Z)_j}\,Z\!\big(M_{ZZ}e_j\big)\,X\!\big(M_{XZ}e_j\big)\\
\label{eq:tableau-definition-x}
C\,X_j\,C^\dagger&=(-1)^{(s_X)_j}\,Z\!\big(M_{ZX}e_j\big)\,X\!\big(M_{XX}e_j\big).
\end{align}
Then 
\begin{equation}\label{eq:tableau-result}
M_{ZZ}=(T^{-1})^{\!T}, \qquad M_{XX}=T, \qquad M_{ZX} = M_{XZ}=0
\qquad s_Z=T^{-1}b,\quad s_X=0 .
\end{equation}
\end{proposition}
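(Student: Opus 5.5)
The plan is to compute the conjugates $C Z_j C^\dagger$ and $C X_j C^\dagger$ directly, by their action on computational basis states, and read off the tableau blocks from \eqref{eq:tableau-definition-z}--\eqref{eq:tableau-definition-x}. Since $C$ is the permutation $a\mapsto Ta\oplus b$, it is unitary with inverse $C^\dagger\ket q=\ket{T^{-1}(q\oplus b)}$. I will evaluate each conjugated operator on an arbitrary basis state $\ket q$, and check that the result coincides with the claimed Pauli on every basis vector.

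\emph{$X$ generators.} For any $x\in\mathbb F_2^n$,
\begin{equation*}
C X(x) C^\dagger\ket q = C X(x)\ket{T^{-1}(q\oplus b)} = C\ket{T^{-1}(q\oplus b)\oplus x}=\ket{q\oplus b\oplus Tx\oplus b}=\ket{q\oplus Tx}=X(Tx)\ket q,
\end{equation*}
using linearity of $T$ over $\mathbb F_2$ and \eqref{eq:x-action}. Taking $x=e_j$ gives $C X_j C^\dagger = X(Te_j)$ with no $Z$ part and sign $+1$. Hence $M_{XX}=T$, $M_{ZX}=0$ and $s_X=0$. This is consistent with Proposition~\ref{prop:pauli-permutation-test}: the conjugate is again a permutation matrix, so it must be a pure $X(\cdot)$ with $\omega=1$.

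\emph{$Z$ generators.} For any $z$,
\begin{equation*}
C Z(z) C^\dagger\ket q = (-1)^{z\cdot T^{-1}(q\oplus b)}\ket q.
\end{equation*}
Here I use the transpose identity over $\mathbb F_2$, namely $z\cdot T^{-1}v=((T^{-1})^{\mathsf T}z)\cdot v$. Applying it with $v=q\oplus b$ and splitting the exponent gives
\begin{equation*}
(-1)^{z\cdot T^{-1}b}\,(-1)^{((T^{-1})^{\mathsf T}z)\cdot q},
\end{equation*}
so $C Z(z) C^\dagger=(-1)^{z\cdot T^{-1}b}Z\big((T^{-1})^{\mathsf T}z\big)$ by \eqref{eq:z-action}. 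Setting $z=e_j$, the sign is $(-1)^{(T^{-1}b)_j}$ and the operator is $Z((T^{-1})^{\mathsf T}e_j)$, with no $X$ component. This yields $M_{ZZ}=(T^{-1})^{\mathsf T}$, $M_{XZ}=0$ and $s_Z=T^{-1}b$.

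The only delicate step is the phase bookkeeping in the $Z$ computation. Two things need care there. First, the constant shift $b$ must be carried through $C^\dagger$ correctly, giving $T^{-1}(q\oplus b)$ rather than $T^{-1}q\oplus b$. Second, the transpose identity has to be applied correctly, so that the sign comes out as $T^{-1}b$ and not as $(T^{-1})^{\mathsf T}b$ or $b$ itself. Once these are handled, both computations hold on every basis vector, which establishes equality of operators and completes \eqref{eq:tableau-result}.
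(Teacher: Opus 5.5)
Your proposal is correct and follows essentially the same route as the paper's proof. Like the paper, you invert the affine bijection to get $C^\dagger\ket{q}=\ket{T^{-1}(q\oplus b)}$, verify $C X(x) C^\dagger = X(Tx)$ and $C Z(z) C^\dagger = (-1)^{z\cdot T^{-1}b}\, Z\big((T^{-1})^{\mathsf T} z\big)$ on computational basis states using the transpose identity, and read off the tableau blocks and signs at $x=z=e_j$.
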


\begin{proof}From \eqref{eq:affine-U} we immediately have
\begin{equation}\label{eq:clifford-proof1}
C\ket{a}=\ket{T a\oplus b}.
\end{equation}
Inverting the bijection $q=T a\oplus b$ gives
\begin{equation}\label{eq:clifford-proof4}
C^\dagger\ket{q}=\ket{T^{-1}(q\oplus b)} .
\end{equation}

\emph{Pauli X.} For any $x\in\mathbb F_2^n$,
\begin{equation}\label{eq:clifford-proof6}
X(x)\ket{q}=\ket{q\oplus x}.
\end{equation}
Thus, for any $a$,
\begin{equation}
\begin{aligned}
C\,X(x)\,C^\dagger\ket a
&= C\!\left(X(x)\ket{T^{-1}(a\oplus b)}\right) && \text{by \eqref{eq:clifford-proof4}}\\
&= C\ket{T^{-1}(a\oplus b)\oplus x} && \text{by \eqref{eq:clifford-proof6}}\\
&= \ket{T\!\big(T^{-1}(a\oplus b)\oplus x\big)\oplus b} && \text{by \eqref{eq:clifford-proof1}}\\
&= \ket{(a\oplus b)\oplus T x\oplus b}\\
&= \ket{a\oplus T x}\\
&= X(Tx)\ket a,
\end{aligned}
\end{equation}
so
\begin{equation}\label{eq:X-conjugates}
C\,X(x)\,C^\dagger=X(Tx).
\end{equation}
Taking $x=e_j$ and comparing with \eqref{eq:tableau-definition-x} yields $M_{XX}e_j=Te_j$, $M_{ZX}e_j=0$, and $(s_X)_j=0$ in \eqref{eq:tableau-result}.

\emph{Pauli Z.} For any $z\in\mathbb F_2^n$ and any $q$,
\begin{equation}\label{eq:clifford-proof2}
Z(z)\ket{q}=(-1)^{q\cdot z}\ket{q}.
\end{equation}
Hence, for any $a$,
\begin{equation}
\begin{aligned}
C\,Z(z)\,C^\dagger\ket a
&= C\!\left(Z(z)\ket{T^{-1}(a\oplus b)}\right) && \text{by \eqref{eq:clifford-proof4}}\\
&= (-1)^{\,z\cdot T^{-1}(a\oplus b)}\,C \ket{T^{-1}(a\oplus b)} && \text{by \eqref{eq:clifford-proof2}}\\
&= (-1)^{\,z\cdot T^{-1}(a\oplus b)}\,\ket a\\
&= (-1)^{\,z\cdot T^{-1}b}\,(-1)^{\,z\cdot T^{-1}a}\ket a\\
&= (-1)^{\,z\cdot T^{-1}b}\,(-1)^{\,((T^{-1})^{\!T}z)\cdot a}\ket a && \text{by \eqref{eq:clifford-proof7}}\\
&= (-1)^{\,z\cdot T^{-1}b}\,Z\!\big((T^{-1})^{\!T}z\big)\ket a,
\end{aligned}
\end{equation}
where we have used the transpose property of the (binary) inner product:
\begin{equation}
\label{eq:clifford-proof7}
A^{-1} x  \cdot y = x \cdot (A^{-1})^{T} y,
\end{equation}
so
\begin{equation}\label{eq:Z-conjugates}
C\,Z(z)\,C^\dagger=(-1)^{z\cdot T^{-1}b}\,Z\!\big((T^{-1})^{\!T}z\big).
\end{equation}
Taking $z=e_j$ and comparing with \eqref{eq:tableau-definition-z} yields $M_{ZZ}e_j=(T^{-1})^{\!T}e_j$, $M_{XZ}e_j=0$, and $(s_Z)_j=(T^{-1}b)_j$.

Collecting the columns over $j=1,\dots,n$ we recover \eqref{eq:tableau-result}.
\end{proof}

\section{Affine Clifford for Boolean symmetries}
\label{app:sae-clifford}

\begin{proposition}[Affine Clifford for Boolean symmetries]
Let $g_1,\dots,g_k$ be $k$ independent Boolean symmetry generators acting diagonally on Jordan--Wigner computational-basis states of an $n$-qubit register. Choose them so that, for each $i\in[k]$, $g_i$ has eigenvalue $-1$ on qubit $i$, while $g_j$ has eigenvalue $+1$ there for all $j\neq i$. Let $S\in\{0,1\}^{k\times n}$ be the orbital-character matrix of the generators $g_i$.

Then any Clifford $C$ that
\begin{enumerate}[label=(\roman*)]
\item maps computational-basis states to computational-basis states without phases;
\item acts as the identity on the last $n-k$ qubits; and
\item maps, by conjugation, each $g_i$ to a single-qubit $Z$ operator on qubit $i$, up to a phase;
\item maps each of the first $k$ bits to $0$ on every joint eigenstate of all $g_i$ with eigenvalues $(-1)^{c_i}$ for some $c\in\{0,1\}^k$;
\end{enumerate}
is unique up to the choice and ordering of generators, and acts on bitstrings as
\begin{equation}
\label{eq:clifford-affine-action}
C\ket{a}=\ket{T a \oplus b},
\end{equation}
where $a\in\{0,1\}^n$, all arithmetic is over $\mathbb{F}_2$, and
\begin{enumerate}
\item the linear map $T\in\{0,1\}^{n\times n}$ is obtained from the identity $I_n$ by replacing its first $k$ rows with $S$;
\item the shift $b\in\{0,1\}^n$ is $b=\begin{pmatrix} c\\0\end{pmatrix}$.
\end{enumerate}
\end{proposition}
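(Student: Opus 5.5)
Condition (i) combined with Proposition~\ref{thm:perm-Clifford-affine} reduces everything to linear algebra over $\mathbb F_2$. A Clifford that maps computational-basis states to computational-basis states without phases is a permutation Clifford. By that proposition it acts as $C\ket a=\ket{Ta\oplus b}$ for some $T\in GL(n,2)$ and $b\in\mathbb F_2^n$. It therefore remains to show that conditions (ii)--(iv) force the rows of $T$ and the entries of $b$ to be exactly as stated.

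The last $n-k$ rows of $T$ and $b$ come from condition (ii). I read (ii) as saying that output bits $k+1,\dots,n$ equal input bits $k+1,\dots,n$. Then row $j>k$ of $T$ is $e_j^{\mathsf T}$ and $b_j=0$.

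The first $k$ rows of $T$ come from condition (iii) via Proposition~\ref{prop:block-tableau}. Under Jordan--Wigner, $g_i=Z(S^{\mathsf T}e_i)$, where $S^{\mathsf T}e_i$ is the $i$-th row of $S$ written as a column. By Eq.~\eqref{eq:Z-conjugates},
\begin{equation}
Cg_iC^\dagger=\pm Z\big((T^{-1})^{\mathsf T}S^{\mathsf T}e_i\big).
\end{equation}
Requiring this to be $\pm Z_i$ gives $(T^{-1})^{\mathsf T}S^{\mathsf T}e_i=e_i$, that is, $S^{\mathsf T}e_i=T^{\mathsf T}e_i$. So the $i$-th row of $T$ equals the $i$-th row of $S$ for $i\le k$. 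Together with the previous step, this says $T$ is $I_n$ with its first $k$ rows replaced by $S$. Invertibility follows from the normalisation of the generators: the leading $k\times k$ block of $S$ is $I_k$, because $g_i$ acts with $-1$ on qubit $i$ and trivially on qubits $j\le k$, $j\neq i$. Hence $T$ is block lower-triangular with identity diagonal blocks.

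The first $k$ entries of $b$ come from condition (iv). On a joint eigenstate with eigenvalues $(-1)^{c_i}$, the occupations satisfy $Sa=c$, so the first $k$ output bits are $(Ta\oplus b)_{1..k}=Sa\oplus b_{1..k}=c\oplus b_{1..k}$. Setting these to zero gives $b_{1..k}=c$. Hence $b=\binom{c}{0}$.

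For uniqueness, each of $T$ and $b$ has been determined row by row, so $C$ is fixed once the generators and their ordering are fixed. Any residual global phase is excluded by (i). Existence follows by checking the converse: the stated $(T,b)$ defines an affine permutation, which is Clifford by Proposition~\ref{thm:perm-Clifford-affine}, and it satisfies (i)--(iv) by running the computations above backwards.

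The main obstacle is interpreting (ii) precisely. It cannot literally mean that $C$ acts as the identity operator on the last $n-k$ qubits, since conjugating $X_j$ with $j>k$ gives $X(Te_j)$, which has support on the first $k$ qubits wherever $S$ has a 1 in column $j$. I would therefore formalise (ii) as "output bits $k+1,\dots,n$ equal the corresponding input bits". In Pauli terms this means $CZ_jC^\dagger=Z_j$ for $j>k$ up to phase; note this is a statement about the rows of $T$, not about $(T^{-1})^{\mathsf T}$ acting on $e_j$. I would state this interpretation explicitly before the proof.
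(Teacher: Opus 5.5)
Your proposal is correct and follows the paper's proof step for step. Condition (i) gives the affine form via Proposition~\ref{thm:perm-Clifford-affine}. Condition (ii) is read exactly as the paper reads it: rows $j>k$ of $T$ equal $e_j^{\mathsf T}$ and $b_j=0$, which is also how you resolve the ambiguity you flag. Condition (iii), combined with Eq.~\eqref{eq:Z-conjugates}, forces the first $k$ rows of $T$ to equal $S$.

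The only variation is in fixing the shift. You obtain $b_{1..k}=c$ directly from $(Ta\oplus b)_{1..k}=c\oplus b_{1..k}$ on the target sector. The paper instead reads it off the sign $(-1)^{z_i\cdot T^{-1}b}$ by requiring $Cg_iC^\dagger=(-1)^{c_i}Z_i$. Your version is slightly more direct.

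One minor slip: with $S=(I_k\ S')$ one has $T=\begin{pmatrix} I_k & S'\\ 0 & I_{n-k}\end{pmatrix}$, which is block upper-triangular, not lower-triangular.
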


\begin{proof}
Since each $g_i$ acts diagonally in the JW computational basis and its action commutes with a reordering of the spin-orbitals, its JW image is a $Z$--string:
\begin{equation}
\Phi_{\mathrm{JW}}(g_i)=Z(z_i),\qquad z_i \in\mathbb F_2^n.
\end{equation}

\emph{By (i).} By Proposition~\ref{thm:perm-Clifford-affine}, if $C$ permutes computational-basis states without phases it is an affine Clifford:
\begin{equation}
\label{eq:affine-form-in-proof}
C\ket{a}=\ket{T a\oplus b},\qquad T\in GL(n,2),\ b\in\mathbb F_2^n .
\end{equation}

\emph{By (ii).} Acting as the identity on the last $n-k$ qubits means 
\begin{equation}
T_{ij} = \delta_{ij}, \qquad b_i = 0 \qquad (k < i \le n)
\label{eq:boolean-cliff-result1}
\end{equation}

\emph{By (iii) and (iv).} By Proposition~\ref{prop:block-tableau}, the action of $C$ on $Z$--strings is
\begin{equation}
\label{eq:Z-conj-again}
C Z(z_i) C^\dagger  =  (-1)^{z_i \cdot T^{-1} b} Z \big((T^{-1})^{T} z_i\big).
\end{equation}
Condition \emph{(iii)} and \emph{(iv)} together then require for $1 \le i \le k$:
\begin{align}
(-1)^{c_i} Z_{i} &= (-1)^{z_i \cdot T^{-1} b} Z \big((T^{-1})^{T} z_{i}\big)\\
z_i &= S^{T}e_i,
\end{align}
This gives
\begin{equation}
(-1)^{c_i} Z_{i} = (-1)^{(S^{T}e_i) \cdot (T^{-1} b)} Z \big((T^{-1})^{T} S^{T}e_i\big),
\end{equation}
and hence the two conditions on $T$ and $b$:
\begin{align}
\label{eq:boolean-cliff-eq1}
({T^{-1}})^T S^T e_i &= e_i\\
\label{eq:boolean-cliff-eq2}
(S^{T}e_i) \cdot (T^{-1} b) &= c_i
\end{align}
From \eqref{eq:boolean-cliff-eq1} we get 
\begin{equation}
S^T e_i = T^T e_i \qquad (1 \le i \le k)
\label{eq:boolean-cliff-result2}
\end{equation}
and hence the first $k$ rows of $T$ and $S$ are equal.

\eqref{eq:boolean-cliff-eq2} is equivalent to
\begin{equation}
((T^{-1})^TS^{T}e_i) \cdot b = c_i
\label{eq:boolean-cliff-eq3}
\end{equation}
\eqref{eq:boolean-cliff-eq1} and \eqref{eq:boolean-cliff-eq3} then give 
\begin{equation}
b_i = c_i \qquad (1 \le i \le k)
\label{eq:boolean-cliff-result3}
\end{equation}
and hence the first $k$ rows of $b$ and $c$ are equal.

\eqref{eq:boolean-cliff-result1}, \eqref{eq:boolean-cliff-result2} and \eqref{eq:boolean-cliff-result3} determine $T$ and $b$ uniquely given the choice and ordering of generators.
\end{proof}

\section{Composition law for affine Clifford maps}
\label{app:clifford-composition}

\begin{proposition}[Composition law for affine Clifford maps]
Let $C_1,C_2$ be Clifford operators that act on computational basis states as an affine map on the bitstrings (i.e. $C_1\ket{q}=\ket{T_1 q \oplus b_1}$ and $C_2\ket{q}=\ket{T_2 q \oplus b_2}$).
Then the composition $C = C_2C_1$ is affine with tableau
\begin{equation}
\label{eq:clifford-theorem}
M=\begin{pmatrix} (T_2^{-1})^T(T_1^{-1})^{T}&0\\0&T_2T_1\end{pmatrix},
\qquad
s=\begin{pmatrix}T_1^{-1} ( b_1 \oplus T_2^{-1} b_2)\\0\end{pmatrix}.
\end{equation}
\end{proposition}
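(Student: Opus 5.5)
The plan is to reduce the statement to Proposition~\ref{prop:block-tableau}: first show that $C=C_2C_1$ is itself an affine permutation of the computational basis, then read off its tableau from the affine data.

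\textbf{Step 1: composing the affine actions.} For any $a\in\mathbb F_2^n$, apply the two hypotheses in turn:
\begin{equation}
C\ket a = C_2\ket{T_1a\oplus b_1} = \ket{T_2(T_1a\oplus b_1)\oplus b_2}.
\end{equation}
Linearity of $T_2$ over $\mathbb F_2$ turns this into
\begin{equation}
C\ket a = \ket{T_2T_1a\oplus (T_2b_1\oplus b_2)}.
\end{equation}
So $C$ is affine with linear part $T=T_2T_1$ and shift $b=T_2b_1\oplus b_2$. The linear part is invertible, $T\in GL(n,2)$, because $GL(n,2)$ is closed under products. By Proposition~\ref{thm:perm-Clifford-affine} (or simply because Cliffords are closed under products), $C$ is an affine Clifford.

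\textbf{Step 2: the $M$ blocks.} Proposition~\ref{prop:block-tableau} now gives $M_{XX}=T_2T_1$ and $M_{ZX}=M_{XZ}=0$. For the remaining block, the inverse of a product reverses the order, $(T_2T_1)^{-1}=T_1^{-1}T_2^{-1}$. Transposing reverses the order again:
\begin{equation}
M_{ZZ}=\big(T_1^{-1}T_2^{-1}\big)^{T}=(T_2^{-1})^{T}(T_1^{-1})^{T}.
\end{equation}
This is the stated block.

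\textbf{Step 3: the phase vector.} Proposition~\ref{prop:block-tableau} also gives $s_X=0$ and
\begin{equation}
s_Z=T^{-1}b=T_1^{-1}T_2^{-1}(T_2b_1\oplus b_2)=T_1^{-1}\big(b_1\oplus T_2^{-1}b_2\big).
\end{equation}
The last equality uses only the linearity of $T_1^{-1}$ and $T_2^{-1}$ over $\mathbb F_2$. This matches Eq.~\eqref{eq:clifford-theorem}.

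The only place to be careful is the order of composition: $C_1$ acts first, so $T_2$ multiplies $b_1$ and not the other way round. Getting this right is what produces $T_2^{-1}b_2$ inside the bracket rather than $T_1^{-1}$ applied to $b_2$ alone.

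As a consistency check, setting $T_2=I$ recovers the special case quoted in the main text, $s_Z=T_1^{-1}(b_1\oplus b_2)$.
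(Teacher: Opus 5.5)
Your proof is correct and follows the same route as the paper: compose the two affine actions to get $T=T_2T_1$ and $b=T_2b_1\oplus b_2$, then substitute into the affine tableau of Proposition~\ref{prop:block-tableau}. You additionally write out the inverse-transpose and phase-vector simplifications, $(T_1^{-1}T_2^{-1})^{T}=(T_2^{-1})^{T}(T_1^{-1})^{T}$ and $T_1^{-1}T_2^{-1}(T_2b_1\oplus b_2)=T_1^{-1}(b_1\oplus T_2^{-1}b_2)$, which the paper leaves implicit.
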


\begin{proof}
For any $\ket{q}$,
\begin{align}
C_1 \ket{q} = \ket{T_1 q \oplus b_1}, \qquad C_2 \ket{q} &= \ket{T_2 q \oplus b_2}, \label{eq:affine-defs}
\end{align}
Hence
\begin{align}
C \ket{q} &= C_2 C_1 \ket{q}\\
&= C_2 \ket{T_1 q \oplus b_1} \label{eq:compose-step1}\\
&= \ket{\,T_2\!\left(T_1 q \oplus b_1\right) \oplus b_2\,} \label{eq:compose-step2}\\
&= \ket{\, (T_2 T_1) q \oplus (T_2 b_1 \oplus b_2) \,}. \label{eq:compose-final}
\end{align}
Thus $T=T_2T_1$ and $b=b_2\oplus T_2 b_1$. Substituting these into the affine Clifford tableau form of Appendix~\ref{app:clifford-tableau}
($M=\mathrm{diag}\!\big((T^{-1})^T,\,T\big)$ and $s=\binom{T^{-1}b}{0}$) yields \eqref{eq:clifford-theorem}.
\end{proof}

\section{Bravyi--Kitaev as an affine Clifford}
\label{app:bk-affine-clifford}

We collect here the elementary facts needed to apply the affine-Clifford framework of Appendices~\ref{app:clifford-tableau}--\ref{app:clifford-composition} to the Bravyi--Kitaev (BK) mapping. Throughout, $n=2^x$ for some integer $x\geq 0$, and qubits/orbitals are indexed by $0,\dots,n-1$.

\begin{proposition}[Invertibility of the BK matrix]\label{lem:bk-invertible}
Let $T_n\in\mathbb F_2^{n\times n}$ be defined recursively by $T_1=(1)$ and
\begin{equation}\label{eq:bk-recursion-app}
T_{2n}=
\begin{pmatrix}
T_n & 0\\
A_n & T_n
\end{pmatrix},
\end{equation}
with $A_n\in\mathbb F_2^{n\times n}$ the matrix whose last row is $(1,\dots,1)$ and whose other entries are zero. Then $T_n$ is lower-triangular with unit diagonal, and in particular $T_n\in GL(n,\mathbb F_2)$.
\end{proposition}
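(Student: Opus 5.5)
We argue by induction on $x$, where $n=2^x$, proving the stronger statement that $T_n$ is lower-triangular with all diagonal entries equal to $1$. For the base case $x=0$, $T_1=(1)$ is a $1\times 1$ lower-triangular matrix with unit diagonal.

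For the inductive step, assume $T_n$ is lower-triangular with unit diagonal and consider the block form \eqref{eq:bk-recursion-app} of $T_{2n}$. We check the entries $(i,j)$ with $j>i$ in $\{0,\dots,2n-1\}$ by cases.
\begin{itemize}
\item If both indices lie in the first block ($i,j<n$), the entry is $(T_n)_{ij}=0$ by the inductive hypothesis.
\item If both lie in the second block, the entry is $(T_n)_{i-n,\,j-n}=0$ by the same hypothesis.
\item If $i<n\le j$, the entry lies in the upper-right zero block.
\end{itemize}
The remaining case $i\ge n>j$ has $j<i$, so it lies strictly below the diagonal and imposes no constraint. Hence the only place $A_n$ appears is the lower-left block, which is entirely below the diagonal. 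This shows $T_{2n}$ is lower-triangular. Its diagonal consists of the diagonals of the two copies of $T_n$, which are all ones.

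Finally, the determinant of a lower-triangular matrix over any field is the product of its diagonal entries. So $\det T_n=1$ in $\mathbb F_2$, and therefore $T_n\in GL(n,\mathbb F_2)$.

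The inverse can also be obtained explicitly, again by induction:
\begin{equation}
T_{2n}^{-1}=\begin{pmatrix} T_n^{-1} & 0\\ T_n^{-1}A_nT_n^{-1} & T_n^{-1}\end{pmatrix}.
\end{equation}
Over $\mathbb F_2$ the minus sign in the usual block-inverse formula disappears, since $-1=1$. This explicit form is not needed for the proposition itself, but it is what the tableau \eqref{eq:bk-clifford-tableau} uses through Proposition~\ref{prop:block-tableau}.

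I do not expect a genuine obstacle here. The only point requiring care is the indexing convention: confirming that with rows ordered $0,\dots,n-1$ the zero block really sits in the upper-right position. With that convention the whole matrix is lower-triangular rather than merely block-triangular.
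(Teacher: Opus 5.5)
Your proof is correct and follows the same route as the paper: induction along the recursion, noting that the diagonal blocks are copies of $T_n$ and that $A_n$ lies strictly below the diagonal, then concluding $\det T_n = 1$ over $\mathbb F_2$. The explicit block inverse you add is also correct, since the sign drops out over $\mathbb F_2$, but the paper does not need or give it.
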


\begin{proof}
Both properties are preserved by the recursion \eqref{eq:bk-recursion-app}: if $T_n$ is lower-triangular with unit diagonal, the block form shows the same for $T_{2n}$, since $A_n$ is strictly below the diagonal of $T_{2n}$ and the two diagonal blocks of $T_{2n}$ are $T_n$. The base case $T_1=(1)$ is trivial, and a lower-triangular matrix with unit diagonal over $\mathbb F_2$ has determinant $1$, hence is invertible.
\end{proof}

\begin{proposition}[BK is an affine Clifford]\label{prop:bk-affine-clifford}The Bravyi--Kitaev encoding is the unitary $C_{\mathrm{BK}}$ defined on Jordan--Wigner computational-basis states by
\begin{equation}\label{eq:bk-affine-action}
C_{\mathrm{BK}}\ket{f}=\ket{T_n f},\qquad f\in\mathbb F_2^n,
\end{equation}
with $T_n$ as in Lemma~\ref{lem:bk-invertible}~\cite{Bravyi2002,Seeley2012}. In particular $C_{\mathrm{BK}}$ is an affine Clifford map in the sense of Appendix~\ref{app:clifford-tableau} (with $T=T_n$ and $b=0$), and its tableau is
\begin{equation}\label{eq:bk-tableau-app}
M_{\mathrm{BK}}=
\begin{pmatrix}(T_n^{-1})^{\mathsf T}&0\\ 0&T_n\end{pmatrix},
\qquad s_{\mathrm{BK}}=0.
\end{equation}
\end{proposition}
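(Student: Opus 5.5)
The statement is essentially a corollary of Proposition~\ref{prop:block-tableau} once the BK encoding is identified with the linear map $f\mapsto T_nf$ on occupation bitstrings. The plan is to take \eqref{eq:bk-affine-action} as the definition of $C_{\mathrm{BK}}$, following \cite{Bravyi2002,Seeley2012}: the BK qubit $j$ stores $\bigoplus_{i} (T_n)_{ji} f_i$, which is exactly the $j$-th entry of $T_nf$. I would first check that this defines a unitary. By Lemma~\ref{lem:bk-invertible}, $T_n\in GL(n,\mathbb F_2)$, so $f\mapsto T_nf$ is a bijection of $\mathbb F_2^n$. Hence $C_{\mathrm{BK}}$ permutes the computational basis and is a permutation matrix, in particular unitary.

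Next I would establish the Clifford property and the affine form with $T=T_n$, $b=0$. By the reverse direction of Proposition~\ref{thm:perm-Clifford-affine} (equivalently Proposition~\ref{thm:affine-equals-XCNOT}), any basis permutation of the form $x\mapsto Ax\oplus b$ with $A\in GL(n,2)$ is a circuit over $X$ and CNOT. Such a circuit is Clifford. Concretely, Proposition~\ref{lem:CNOT-generate-GL} decomposes $T_n$ into transvections, i.e.\ CNOTs. Since $b=0$, no $X$ gates are needed.

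Finally I would obtain the tableau by applying Proposition~\ref{prop:block-tableau} with $T=T_n$ and $b=0$. This gives $M_{ZZ}=(T_n^{-1})^{\mathsf T}$, $M_{XX}=T_n$, vanishing off-diagonal blocks, $s_X=0$ and $s_Z=T_n^{-1}0=0$, which is \eqref{eq:bk-tableau-app}. As a sanity check I would verify on $n=2$ that $C_{\mathrm{BK}}X_0C_{\mathrm{BK}}^\dagger=X_0X_1$ and $C_{\mathrm{BK}}Z_1C_{\mathrm{BK}}^\dagger=Z_0Z_1$. These reproduce the familiar BK update and parity strings.

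The only nontrivial point I foresee is matching conventions: checking that the recursive $T_n$ in \eqref{eq:bk-recursion-app}, with its qubit ordering and the ``summary'' qubit last in each block, agrees with the BK matrix of \cite{Seeley2012}, which is written with the opposite index ordering. I would handle this by defining BK as the map \eqref{eq:bk-affine-action}, and noting that any reindexing is a permutation matrix. Conjugating by a permutation preserves both invertibility and the affine form, so the tableau formula is unaffected. The remaining steps are direct invocations of earlier results.
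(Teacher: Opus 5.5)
Your proposal is correct and follows essentially the same route as the paper. Both identify BK with $f\mapsto T_nf$ by citing the parity-tree construction, use the invertibility of $T_n$ (Lemma~\ref{lem:bk-invertible}) to get a basis permutation and hence a unitary, and read off the tableau from Proposition~\ref{prop:block-tableau} with $T=T_n$, $b=0$. The only difference is that you establish the Clifford property separately via the $X$/CNOT decomposition (Proposition~\ref{thm:affine-equals-XCNOT}), whereas the paper gets it directly from the conjugation formulas of Proposition~\ref{prop:block-tableau}, which already show that Paulis map to Paulis.
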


\begin{proof}
That \eqref{eq:bk-affine-action} defines the BK encoding is the standard parity-tree construction of \cite{Bravyi2002,Seeley2012}: the rows of $T_n$ select, for each qubit, the subset of orbitals whose occupation parities are stored on that qubit, and the recursion \eqref{eq:bk-recursion-app} reproduces the parity tree. By Lemma~\ref{lem:bk-invertible}, $T_n\in GL(n,\mathbb F_2)$, so \eqref{eq:bk-affine-action} extends to a unitary on $(\mathbb C^2)^{\otimes n}$. Applying Proposition~\ref{prop:block-tableau} of Appendix~\ref{app:clifford-tableau} with $T=T_n$ and $b=0$ then gives \eqref{eq:bk-tableau-app}; in particular $C_{\mathrm{BK}}\in\mathrm{Cliff}_n$.
\end{proof}

\begin{proposition}[Tableau of SAE-CAS-BK]\label{cor:sae-cas-bk-tableau}
Let $C=C_2C_1$ be the SAE-CAS Clifford with tableau given by Eq.~\eqref{eq:clifford-theorem}, and let $C_{\mathrm{BK}}$ act on the active-space register as in Proposition~\ref{prop:bk-affine-clifford}. Then $C_{\mathrm{BK}}\,C_2\,C_1$ is an affine Clifford whose tableau is obtained by a single further application of the composition law of Appendix~\ref{app:clifford-composition}, with $(T_2,b_2)$ replaced by $(T_n,0)$ acting on the active subspace. In particular, the SAE-CAS-BK active-space qubit Hamiltonian is unitarily equivalent to the SAE-CAS one and shares its eigenspectrum.
\end{proposition}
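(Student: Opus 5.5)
The plan is to split the statement into two parts: the tableau claim, which is pure composition of affine maps, and the spectral claim, which follows from unitarity once we handle the projection and qubit removal carefully.

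\emph{Tableau claim.} First lift $C_{\mathrm{BK}}$ from the $m=n-k-|F|-|V|$ active qubits to the full $n$-qubit register. Take $\widetilde T=I\oplus T_m$, acting as the identity on the removed (symmetry, frozen-core and virtual) qubits and as $T_m$ on the active block, with $\widetilde b=0$. If $m$ is not a power of two, one can use the standard truncation of the BK matrix, or pad the register. Either way, all we need is some $\widetilde T\in GL(n,2)$; for $T_m$ itself this is Proposition~\ref{lem:bk-invertible}. The operator $\widetilde C_{\mathrm{BK}}$ is then affine by Proposition~\ref{prop:bk-affine-clifford} or Proposition~\ref{thm:perm-Clifford-affine}. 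Next apply the composition law of Appendix~\ref{app:clifford-composition} to $\widetilde C_{\mathrm{BK}}$ and $C=C_2C_1$, where $C$ has $T=T_1$ and $b=b_1\oplus b_2$ because $T_2=I$. This gives linear part $\widetilde T T_1$ and shift $\widetilde T(b_1\oplus b_2)$. Substituting these into Proposition~\ref{prop:block-tableau} yields the claimed tableau, and it is exactly one further application of Eq.~\eqref{eq:clifford-theorem} with $(T_2,b_2)\to(\widetilde T,0)$.

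\emph{Spectral claim.} Let $P$ be the projector onto the first $k$ symmetry bits being $0$ and the CAS bits taking their fixed values. Because $\widetilde T$ is block diagonal with the identity on the removed qubits, $\widetilde C_{\mathrm{BK}}=I\otimes C_{\mathrm{BK}}$ commutes with $P$. Hence
\[
P\,\widetilde C_{\mathrm{BK}}\,(CHC^\dagger)\,\widetilde C_{\mathrm{BK}}^\dagger\,P=\widetilde C_{\mathrm{BK}}\,(P\,CHC^\dagger P)\,\widetilde C_{\mathrm{BK}}^\dagger .
\]
Now write $P=\ket{\beta}\bra{\beta}\otimes \mathbf 1_A$, where $\beta$ is the fixed bitstring on the removed qubits. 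The reduced Hamiltonian is $H_{\mathrm{red}}=(\bra{\beta}\otimes\mathbf 1)\,X\,(\ket{\beta}\otimes\mathbf 1)$. For $X'=(I\otimes C_{\mathrm{BK}})X(I\otimes C_{\mathrm{BK}})^\dagger$ this gives $H'_{\mathrm{red}}=C_{\mathrm{BK}}H_{\mathrm{red}}C_{\mathrm{BK}}^\dagger$. Therefore the SAE-CAS-BK active-space Hamiltonian is unitarily conjugate to the SAE-CAS one and has the same spectrum. This is consistent with the operational recipe of dropping terms with $X$ or $Y$ on removed qubits and setting $Z\to\pm1$ there, because by Proposition~\ref{prop:block-tableau} the BK tableau maps $Z$-strings to $Z$-strings and $X$-strings to $X$-strings within the active block only.

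\emph{Main obstacle.} The delicate step is making rigorous that the BK Clifford acts only on the active subspace, so that it commutes with the projection and qubit removal. The issue is index bookkeeping: embedding $T_m$ as a block of an $n\times n$ matrix after the qubit reordering used in $C_1$. I would handle this by stating the reordering explicitly as a permutation matrix, which is itself an affine map. Everything else is direct substitution into results already proved.
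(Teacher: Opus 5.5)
Your proposal is correct and follows the same route as the paper, which gives no separate proof but justifies the statement in Section~\ref{subsec:sae-cas-bk} by one further application of Eq.~\eqref{eq:clifford-theorem} and by the fact that $C_{\mathrm{BK}}$ is unitary and supported only on the active qubits; your block-diagonal lift $I\oplus T_m$ and the compression identity $(\bra{\beta}\otimes\mathbf 1)(I\otimes C_{\mathrm{BK}})\,CHC^\dagger\,(I\otimes C_{\mathrm{BK}})^\dagger(\ket{\beta}\otimes\mathbf 1)=C_{\mathrm{BK}}H_{\mathrm{red}}C_{\mathrm{BK}}^\dagger$ make that remark rigorous. When the active register size is not a power of two, use truncation (the leading $m\times m$ block of $T_{2^x}$ is still unit lower-triangular, hence in $GL(m,2)$) rather than padding, because padding changes the register size and so does not give the $\widetilde T\in GL(n,2)$ your argument relies on.
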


The same construction applies verbatim to any other fermion-to-qubit encoding that can be expressed as an affine Clifford basis change of Jordan--Wigner---for instance the parity encoding~\cite{Seeley2012}---by replacing $T_n$ with the corresponding affine-Clifford matrix.

\section{Complete active space Hamiltonian}
\label{app:cas-second-quantised-proof}

\begin{proposition}[Complete active space Hamiltonian]
Define projectors ${P}_F = \bigotimes_{i\in F} \ket{1} \bra{1}_i,$ and ${P}_V = \bigotimes_{i\in V} \ket{0} \bra{0}_i$, and the complete active space projector 
\begin{equation}
     P
= {P}_F
\otimes I_A \otimes 
{P}_V,
\end{equation}
\qquad
where the sets $F$, $A$ and $V$ correspond respectively to the indices of the frozen-core, active and virtual spin orbitals. The complete active space Hamiltonian is 
\begin{equation}
    {H'}= P  H  P
\end{equation}, where $H$ is the molecular electronic structure Hamiltonian in Eq. \eqref{eq:hamiltonian}.

The constant term and the one-electron integrals of $H'$ can be written in terms of those of $H$ as in Eqs.~\eqref{eq:fcc-spin}--\eqref{eq:fc1ei-spin}, where $p,q\in A$, and all two--electron terms with indices entirely in $A$ are unchanged.
\end{proposition}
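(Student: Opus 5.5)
The plan is to compute $PHP$ term by term, using how a product of creation and annihilation operators acts between states whose frozen-core modes are all occupied and whose virtual modes are all empty. Every Slater determinant in the range of $P$ has $n_i=1$ for $i\in F$ and $n_i=0$ for $i\in V$. I will use the Jordan--Wigner-ordered Fock basis and write $PHP$ as an operator on the active modes only, keeping the frozen and virtual modes fixed.

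The first step is a selection rule. For a string such as $a_p^\dagger a_q$ or $a_p^\dagger a_q^\dagger a_s a_r$, sandwiching with $P$ gives zero unless the string preserves every frozen and virtual occupation. Each frozen index must therefore be created and annihilated the same number of times, and any virtual index gives zero outright, since $P_V a_v^\dagger=0$ and $a_v P_V=0$. A string that preserves all frozen and virtual occupations commutes with $P$, so it is enough to classify the index patterns of $H$.

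For the one-body part, the survivors are $p,q\in A$, which are unchanged, and $p=q=i\in F$. In the latter case $a_i^\dagger a_i=n_i$ acts as $1$ on the range of $P$, which contributes $\sum_{i\in F}h_{ii}$ to the constant.

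For the two-body part, I split into cases by how many of the four indices lie in $F$.
\begin{enumerate}
\item All four indices active: the term is unchanged.
\item Exactly one frozen creator index and one frozen annihilator index equal to the same $i$, with the other two indices active.
\item All indices frozen, which requires $\{p,q\}=\{r,s\}$ as sets.
\end{enumerate}
In case 2, I anticommute the operators so that $a_i^\dagger$ and $a_i$ are adjacent, forming $n_i$, and then replace $n_i$ by $1$. Doing this over the four placements (frozen index in position $p$ or $q$, and in position $r$ or $s$) gives two direct contributions $+g\,a_p^\dagger a_q$ and two exchange contributions with a minus sign. Combined with the prefactor $\tfrac12$ and the symmetry $g_{prqs}=g_{qspr}$ of the real integrals, these assemble to $\sum_{i\in F}(g_{piqi}-g_{piiq})\,a_p^\dagger a_q$, where the index order follows the paper's convention in Eq.~\eqref{eq:hamiltonian}. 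In case 3, the two assignments $p=r,q=s$ and $p=s,q=r$ yield $n_in_j$ with opposite signs, giving $\tfrac12\sum_{i,j\in F}(g_{iijj}-g_{ijji})$. The $i=j$ terms cancel automatically and are consistent with $a_i^\dagger a_i^\dagger=0$.

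The remaining patterns, such as one frozen index paired with a virtual one or a frozen index appearing only once, are killed by the selection rule.

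The main obstacle is the bookkeeping in case 2. I need to get the anticommutation signs right in all four placements, handle the $i=j$ and coincident active-index subcases, and match the paper's index convention for $g$. I will first fix the convention by checking the Hartree--Fock expectation value. Then I will do one placement explicitly and obtain the other three by relabelling combined with the permutational symmetry of $g$. Finally, I will note that the fermionic sign strings from the frozen modes are constant on the range of $P$, so the active-mode operators are the correctly signed restrictions. This gives Eqs.~\eqref{eq:fcc-spin}--\eqref{eq:fc1ei-spin}.
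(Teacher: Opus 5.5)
Your case analysis is the same as the paper's. The paper states the selection rule through $Pa_i=0$, $a_i^\dagger P=0$ for $i\in F$ and $a_vP=0$, $Pa_v^\dagger=0$ for $v\in V$, and commutes active operators through $P$. You state it as preservation of frozen and virtual occupations, but the surviving index patterns and the replacement of $n_i$ by $1$ are identical. Your explicit pairing of the four placements against the prefactor $\tfrac12$ is more careful than the paper's ``up to relabeling''.

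The step you assert rather than compute does not come out as you claim. Take $H$ written as $\tfrac12\sum_{pqrs}g_{prqs}\,a_p^\dagger a_q^\dagger a_s a_r$, with $g_{pqrs}$ in the chemists' order of its definition. The direct placements $q=s=i$ and $p=r=i$ give $\tfrac12 g_{prii}\,a_p^\dagger n_i a_r$ and $\tfrac12 g_{iiqs}\,n_i a_q^\dagger a_s$. The exchange placements $q=r=i$ and $p=s=i$ give $-\tfrac12 g_{piis}\,a_p^\dagger n_i a_s$ and $-\tfrac12 g_{irqi}\,n_i a_q^\dagger a_r$. Using only the particle-exchange symmetry $g_{pqrs}=g_{rspq}$ (reality is not needed) and relabelling, case 2 yields
\[
h'_{pq}=h_{pq}+\sum_{i\in F}\bigl(g_{pqii}-g_{piiq}\bigr).
\]
This is Coulomb minus exchange, not $g_{piqi}-g_{piiq}$. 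The direct terms differ, and for real orbitals $g_{piqi}=g_{piiq}$, so Eq.~\eqref{eq:fc1ei-spin} as written would predict no frozen-core shift at all.

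This is an index slip in the statement, and the paper's proof contains it too. It attaches $g_{piqi}$ to the monomial $a_p^\dagger a_i^\dagger a_i a_q$, which is the physicists' reading, whereas Eq.~\eqref{eq:hamiltonian} gives that monomial the coefficient $g_{pqii}$. No single convention rescues both formulas:
\begin{itemize}
\item Your Hartree--Fock check pins the chemists' order, which is the one consistent with Eqs.~\eqref{eq:fcc-spin} and \eqref{eq:ehf-spin}.
\item If you instead read $g$ in physicists' order, so that $g_{piqi}$ becomes the Coulomb integral, case 3 gives $\tfrac12\sum_{i,j\in F}(g_{ijij}-g_{ijji})$ rather than Eq.~\eqref{eq:fcc-spin}.
\end{itemize}
Carried out honestly, your plan proves Eq.~\eqref{eq:fcc-spin}, the invariance of the all-active two-electron block, and the corrected one-body formula above. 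You should state that correction instead of claiming agreement with Eq.~\eqref{eq:fc1ei-spin}.
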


\begin{proof}
Write the second-quantized Hamiltonian as in Eq.~\eqref{eq:hamiltonian},
\begin{equation}
 H
= h_0
+ \sum_{pq} h_{pq}\, a_p^\dagger a_q
+ \tfrac12 \sum_{pqrs} g_{prqs}\, a_p^\dagger a_q^\dagger a_s a_r .
\label{eq:ham-as-used}
\end{equation}
By definition $ P$ projects onto configurations with frozen orbitals $F$ doubly (spin-)occupied and virtual orbitals $V$ empty, while acting as the identity on the active sector $A$. We have for frozen orbitals ($i \in F$)
\begin{equation} 
 P \, n_i =  n_i  P,\qquad
 P\,  a_i =0,\qquad
 a_a^\dagger \,  P =0, \
\label{eq:proj-core}
\end{equation} 
for virtual orbitals  ($a \in V$)
\begin{equation}  P \, n_a = 0, \qquad
 P\,  a_a^\dagger =0,\qquad
 a_a \,  P =0 \label{eq:proj-virt}
\end{equation}
and active operators commute through $ P$ ($p \in A$)
\begin{equation}
 P \, n_p =  n_p  P, \qquad  P\,  a_p^\dagger =  a_p^\dagger \,  P, \qquad  P\,  a_p =  a_p \,  P. \qquad 
\label{eq:proj-active}
\end{equation}

Because $ P^2= P$, we have $ H' P= P H P= H'$ and $ P H'= H'$, hence $[ H', P]=0$. Thus it suffices to work with $ H' P$; keeping a trailing $ P$ does not change the operator and simply records that all intermediate expressions are restricted to the CAS.

\medskip\noindent
\textbf{One-electron part.}
Consider $ P \Bigl(\sum_{p q} h_{p q}\,  a_p^\dagger  a_q\Bigr)  P$ and separate cases by orbital type.
\begin{enumerate}
\item If $p\in V$ or $q\in V$, the term vanishes by \eqref{eq:proj-virt}.
\item If $p\in F$ and $q\in A$, then
$ P  a_i^\dagger  a_p  P = ( a_i^\dagger  P)\,  a_p  P = 0$.
Similarly, if $p\in A$ and $q\in F$,
$ P  a_p^\dagger  a_i  P
=  a_p^\dagger ( P  a_i) P=0$,
using \eqref{eq:proj-active} and \eqref{eq:proj-core}.
\item If $p=q=i\in F$, then
$ P  a_i^\dagger  a_i  P
=  P  n_i  P =  P$.
\item If $p,q\in A$, the term is unchanged because active operators commute with $ P$.
\end{enumerate}
Therefore the projected one-electron operator equals
\begin{equation}
 P \Bigl(\sum_{p q} h_{p q}\,  a_p^\dagger  a_q\Bigr)  P
= \sum_{p,q\in A} h_{p q}\,  a_p^\dagger  a_q \,  P
+ \sum_{i\in F} h_{i i}\,  P .
\label{eq:1e-result}
\end{equation}

\medskip\noindent
\textbf{Two-electron part.}
Consider
$ P \Bigl(\tfrac12 \sum_{p q r s} g_{p q r s}\,  a_p^\dagger  a_q^\dagger  a_s  a_r\Bigr) P$.
Any term that contains a virtual index vanishes by \eqref{eq:proj-virt}.
Terms with exactly one or three frozen indices also vanish: after commuting active operators through $ P$ (by \eqref{eq:proj-active}), one is left with either $ P  a_i$ or $ a_i^\dagger  P$, which is zero by \eqref{eq:proj-core}.

\emph{(i) Two frozen and two active indices.}
Up to relabeling, the only nonzero monomials are
$ a_p^\dagger  a_i^\dagger  a_i  a_q$ and
$ a_p^\dagger  a_i^\dagger  a_q  a_i$
with $p,q\in A$ and $i\in F$. Using \eqref{eq:proj-active}--\eqref{eq:proj-active} and $\{ a_i^\dagger, a_q\}=0$ ($i\in F$, $q\in A$), we obtain
\begin{equation}
 P\,  a_p^\dagger  a_i^\dagger  a_i  a_q \,  P
=  a_p^\dagger ( P  n_i  P)  a_q
=  a_p^\dagger  a_q \,  P,
\qquad
 P\,  a_p^\dagger  a_i^\dagger  a_q  a_i \,  P
= -  a_p^\dagger  a_q ( P  n_i  P)
= -  a_p^\dagger  a_q \,  P.
\label{eq:2e-1e-reduction}
\end{equation}
Hence the two-electron part contributes the one-body shift
\begin{equation}
\sum_{p,q\in A}\sum_{i\in F}
\bigl(g_{p i q i}-g_{p i i q}\bigr)\,
 a_p^\dagger  a_q \,  P ,
\label{eq:fc1e-from-2e}
\end{equation}
which yields \eqref{eq:fc1ei-spin}.

\emph{(ii) Four frozen indices.}
The only nonzero monomials are, for $i,j\in F$,
$ a_i^\dagger  a_j^\dagger  a_j  a_i$ and
$ a_i^\dagger  a_j^\dagger  a_i  a_j$. Using fermionic anticommutation and \eqref{eq:proj-core},
\begin{equation}
 P\,  a_i^\dagger  a_j^\dagger  a_j  a_i \,  P
=  P\,  n_i  n_j \,  P =  P,
\qquad
 P\,  a_i^\dagger  a_j^\dagger  a_i  a_j \,  P
= -  P\,  n_i  n_j \,  P = -  P,
\label{eq:fourcore-evals}
\end{equation}
while the $i=j$ cases vanish identically. Thus the constant contributed by the two-electron part is
\begin{equation}
\tfrac12 \sum_{i,j\in F}\bigl(g_{i i j j}-g_{i j j i}\bigr),
\label{eq:const-from-2e}
\end{equation}
and together with the core trace from \eqref{eq:1e-result} this gives \eqref{eq:fcc-spin}.

\emph{(iii) All-active indices.}
If $p,q,r,s\in A$, then
\begin{equation}
 P\,  a_p^\dagger  a_q^\dagger  a_s  a_r\,  P
=  a_p^\dagger  a_q^\dagger  a_s  a_r\,  P,
\label{eq:allactive-unchanged}
\end{equation}
so the two-electron block over $A$ is unchanged.

\medskip
The resulting expressions \eqref{eq:1e-result}, \eqref{eq:fc1e-from-2e}, \eqref{eq:const-from-2e}, and \eqref{eq:allactive-unchanged} only have terms acting on active space orbital except for the projection ${P}$, which when restricted to the active space acts as the identity. Combining them we then obtain:
$ H'= P  H  P$ with
\eqref{eq:fcc-spin}--\eqref{eq:fc1ei-spin}, and all two-electron terms with indices in $A$ unchanged. This proves the proposition.
\end{proof}

\begin{proposition}[Hartree-Fock energy]
The Hartree--Fock energy is given by Eq.~\eqref{eq:ehf-spin}, where $\mathrm{occ}$ denotes the spin--orbitals occupied in the Hartree--Fock state. If the frozen-core is taken to be these occupied spin--orbitals and there are no active orbitals, the projected Hamiltonian $H'= P H P$ reduces to the scalar $E_{\mathrm{HF}}$.
\end{proposition}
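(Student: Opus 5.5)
The plan is to obtain this as a special case of the preceding proposition (the complete active space Hamiltonian) rather than redoing the operator algebra. Set $F=\mathrm{occ}$, $A=\emptyset$ and $V$ the complement of $\mathrm{occ}$. Then $P=P_F\otimes P_V$ is the rank-one projector $\ket{\Phi_{\mathrm{HF}}}\bra{\Phi_{\mathrm{HF}}}$ onto the Hartree--Fock determinant, in which exactly the occupied spin-orbitals carry a $1$. Consequently $H'=PHP=\bra{\Phi_{\mathrm{HF}}}H\ket{\Phi_{\mathrm{HF}}}\,P$, a scalar multiple of $P$, which acts as the identity on the (zero-qubit) active space.

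\textbf{Reading off the scalar.} I would take the conclusion of the preceding proposition with $A=\emptyset$. The sums over $p,q\in A$ in \eqref{eq:1e-result} and \eqref{eq:fc1e-from-2e} are empty, and so are the all-active two-electron terms \eqref{eq:allactive-unchanged}. Only the constant $h_0'$ of \eqref{eq:fcc-spin} survives, with $F=\mathrm{occ}$:
\begin{equation}
h_0' = h_0+\sum_{i\in\mathrm{occ}}h_{ii}+\tfrac12\sum_{i,j\in\mathrm{occ}}\bigl(g_{iijj}-g_{ijji}\bigr).
\end{equation}
This is exactly the right-hand side of \eqref{eq:ehf-spin}. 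So $H'=h_0'\,P$, and restricted to the code space it is the scalar $h_0'$.

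\textbf{Identifying this scalar with $E_{\mathrm{HF}}$.} I would argue that $E_{\mathrm{HF}}$ is by definition $\bra{\Phi_{\mathrm{HF}}}H\ket{\Phi_{\mathrm{HF}}}$, and that $h_0'$ is precisely this expectation value because $P$ is rank one. This gives the Slater--Condon formula \eqref{eq:ehf-spin} as a byproduct, so both claims follow from one computation. The only care needed is to check index conventions. The Hamiltonian uses $g_{prqs}a_p^\dagger a_q^\dagger a_s a_r$, and for $i,j\in F$ the preceding proof gives the Coulomb term as $g_{iijj}$ and the exchange term as $g_{ijji}$. This matches \eqref{eq:ehf-spin}, and the $i=j$ terms cancel between the two.

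\textbf{Main obstacle.} This is not the algebra but the degenerate case $A=\emptyset$. I need that the preceding proof remains valid when there are no active indices: all cases involving active orbitals are simply vacuous. I also need that ``identity on the active space'' becomes the scalar $1$ on a one-dimensional space. I would state this explicitly and note that no other step uses $A\neq\emptyset$.
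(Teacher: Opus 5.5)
Your proposal is correct and follows the paper's own proof. Like the paper, you specialise the CAS proposition to $F=\mathrm{occ}$, $A=\varnothing$ and $V$ the virtuals, so that only the constant $h_0'$ of Eq.~\eqref{eq:fcc-spin} survives and coincides with Eq.~\eqref{eq:ehf-spin}. You add one remark the paper does not make: $P$ is then the rank-one projector onto the Hartree--Fock determinant, so $PHP$ equals the Hartree--Fock expectation value times $P$. That remark is sound and additionally justifies the form of $E_{\mathrm{HF}}$ in Eq.~\eqref{eq:ehf-spin}, which the paper takes as given.
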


\begin{proof}
Take $F=\mathrm{occ}$, $A=\varnothing$, and $V$ the virtuals: then $H'$ reduces to the constant term in Eq. \eqref{eq:fcc-spin}, which equals $E_{\mathrm{HF}}$ in Eq. \eqref{eq:ehf-spin}.
\end{proof}

\section{Equivalence of canonical complete-active space via $Z$-symmetry qubit removal}
\label{app:cas-equivalence}

\begin{proposition}[Projector identity]
\label{prop:cas-proj-identity}
Let $\Phi_{\mathrm{JW}}$ be the Jordan--Wigner map from $n$ spin-orbitals to $n$ qubits and
\begin{align}
 P&=\Bigl(\bigotimes_{i\in F}\ket{1}\!\bra{1}_i\Bigr)\otimes I_A \otimes
 \Bigl(\bigotimes_{a\in V}\ket{0}\!\bra{0}_a\Bigr),
\\
 Q&=\Bigl(\bigotimes_{i\in F}\tfrac{I-Z_i}{2}\Bigr)\otimes I_A \otimes
 \Bigl(\bigotimes_{a\in V}\tfrac{I+Z_a}{2}\Bigr).
\end{align}
Then $\Phi_{\mathrm{JW}}( P)= Q$ and, for the Hamiltonian $ H$ in Eq.~\eqref{eq:hamiltonian},
\begin{equation}
\label{eq:cas-commuting-identity}
\Phi_{\mathrm{JW}}( P  H  P)= Q\,\Phi_{\mathrm{JW}}( H)\, Q .
\end{equation}
\end{proposition}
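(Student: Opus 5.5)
The proof rests on two facts: the Jordan--Wigner transform is an algebra isomorphism, and the occupation-number operator maps to a single-qubit $Z$-projector. Concretely, $\Phi_{\mathrm{JW}}$ is the linear, multiplicative, $*$-preserving map defined on the fermionic Fock space (identified with $(\mathbb C^2)^{\otimes n}$ via occupation bitstrings) by
\begin{equation}
a_j \mapsto \Bigl(\prod_{k<j} Z_k\Bigr)\tfrac{X_j+iY_j}{2}.
\end{equation}
On the occupation basis it is simply the identity identification of Fock states with computational-basis states. Hence it satisfies
\begin{equation}
\Phi_{\mathrm{JW}}(AB)=\Phi_{\mathrm{JW}}(A)\,\Phi_{\mathrm{JW}}(B)
\end{equation}
for all operators $A,B$.

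With this in hand, equation \eqref{eq:cas-commuting-identity} follows immediately from $\Phi_{\mathrm{JW}}(P)=Q$:
\begin{equation}
\Phi_{\mathrm{JW}}(PHP)=\Phi_{\mathrm{JW}}(P)\,\Phi_{\mathrm{JW}}(H)\,\Phi_{\mathrm{JW}}(P)=Q\,\Phi_{\mathrm{JW}}(H)\,Q .
\end{equation}
The substance is therefore the first claim.

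To show $\Phi_{\mathrm{JW}}(P)=Q$, I first note that $P$, as written, is defined through projectors on occupation-number states. In operator language these are
\begin{equation}
\ket{1}\!\bra{1}_i=n_i=a_i^\dagger a_i,\qquad \ket{0}\!\bra{0}_a=1-n_a=a_a a_a^\dagger .
\end{equation}
Since the number operators mutually commute,
\begin{equation}
P=\prod_{i\in F} n_i\prod_{a\in V}(1-n_a).
\end{equation}
Next I compute the image of a single number operator. The Jordan--Wigner strings cancel because $Z_k^2=I$, leaving
\begin{equation}
\Phi_{\mathrm{JW}}(n_j)=\tfrac{X_j-iY_j}{2}\,\tfrac{X_j+iY_j}{2}=\tfrac{I-Z_j}{2},
\end{equation}
which is $\ket{1}\!\bra{1}_j$ in the convention where $Z\ket{1}=-\ket{1}$. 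Similarly, $\Phi_{\mathrm{JW}}(1-n_a)=\tfrac{I+Z_a}{2}$. Multiplicativity then turns the product into the tensor product defining $Q$, with identity on the active qubits.

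The only step needing care, which I expect to be the main (mild) obstacle, is consistency of conventions. I must check two things:
\begin{enumerate}[label=(\roman*)]
\item The qubit-level projectors in $P$ genuinely coincide with the fermionic $n_i$. Here the nonlocal JW strings cancel, so no sign subtleties arise; I would verify this directly on basis states using \eqref{eq:z-action}.
\item $\Phi_{\mathrm{JW}}$ is multiplicative, which justifies pulling it through the products.
\end{enumerate}
For (ii), I would argue that on the occupation basis $\Phi_{\mathrm{JW}}$ is conjugation by the identity isomorphism of Hilbert spaces. Its images of the $a_j$ satisfy the canonical anticommutation relations, and because the $a_j$ and $a_j^\dagger$ generate the full operator algebra, the map is a well-defined algebra isomorphism.
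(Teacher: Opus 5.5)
Your proof is correct and follows the same route as the paper: both rest on the single-mode identity $\Phi_{\mathrm{JW}}(n_j)=(I-Z_j)/2$, which gives $\Phi_{\mathrm{JW}}(P)=Q$, and on the multiplicativity of the Jordan--Wigner map, which lets it pass through $PHP$. Your version spells out details the paper leaves implicit: the factorisation $P=\prod_{i\in F}n_i\prod_{a\in V}(1-n_a)$, the cancellation of the $Z$-strings in $a_j^\dagger a_j$, and why $\Phi_{\mathrm{JW}}$ is an algebra isomorphism.
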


\begin{proof}
In the JW basis, occupations are computational $Z$-eigenstates with
$\ket{1}\!\bra{1}=(I-Z)/2$ and $\ket{0}\!\bra{0}=(I+Z)/2$, so $\Phi_{\mathrm{JW}}( P)= Q$. The JW map is linear and multiplicative on products of creation/annihilation operators, hence $\Phi_{\mathrm{JW}}( P  H  P)=\Phi_{\mathrm{JW}}( P)\,\Phi_{\mathrm{JW}}( H)\,\Phi_{\mathrm{JW}}( P)$.
\end{proof}

\begin{proposition}[Equivalence with the canonical CAS Hamiltonian]
\label{prop:cas-equivalence}
Apply Eq.~\eqref{eq:cas-commuting-identity} and then set $Z_i=-1$ for $i\in F$ and $Z_a=+1$ for $a\in V$. After removing those fixed qubits, the resulting active-space qubit Hamiltonian is exactly the JW image of the canonical CAS Hamiltonian on $A$, with coefficients given by Eqs.~\eqref{eq:fcc-spin}--\eqref{eq:fc1ei-spin}.
\end{proposition}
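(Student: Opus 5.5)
The strategy is to compare the two routes term by term, using Proposition~\ref{prop:cas-proj-identity} together with the canonical CAS result of Appendix~\ref{app:cas-second-quantised-proof}. By Eq.~\eqref{eq:cas-commuting-identity}, $Q\,\Phi_{\mathrm{JW}}(H)\,Q=\Phi_{\mathrm{JW}}(PHP)=\Phi_{\mathrm{JW}}(H')$. The proof of the CAS proposition shows that $H'=\tilde H_A\,P$, where
\begin{equation*}
\tilde H_A=h_0'+\sum_{p,q\in A}h'_{pq}a_p^\dagger a_q+\tfrac12\sum_{pqrs\in A}g_{prqs}a_p^\dagger a_q^\dagger a_s a_r .
\end{equation*}
Since $\Phi_{\mathrm{JW}}$ is multiplicative, this gives
\begin{equation*}
Q\,\Phi_{\mathrm{JW}}(H)\,Q=\Phi_{\mathrm{JW}}(\tilde H_A)\,Q .
\end{equation*}
The claim then reduces to two facts. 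First, substituting $Z_i=-1$ for $i\in F$ and $Z_a=+1$ for $a\in V$ turns $Q$ into the identity. Second, $\Phi_{\mathrm{JW}}(\tilde H_A)$ computed on all $n$ qubits coincides, after removing the frozen and virtual qubits, with the Jordan--Wigner map of $\tilde H_A$ applied on the active register alone.

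The first fact is immediate. $Q$ is a product of factors $(I-Z_i)/2$ and $(I+Z_a)/2$, and each factor evaluates to $1$ under the stated substitution. To make the substitution rule itself legitimate, I would note that every Pauli term in $Q\,\Phi_{\mathrm{JW}}(H)\,Q$ commutes with each $Z_i$, $i\in F\cup V$, because $Q$ projects onto a joint eigenspace. So the operator is block-diagonal in those qubits. Restricting to the block with eigenvalues $-1$ on $F$ and $+1$ on $V$ is the same as substituting those eigenvalues and deleting the qubits. This matches step (iv) of Sec.~\ref{subsec:symmetry-clifford} with $C=C_2$, $T_2=I$: any term carrying an $X$ or $Y$ on a fixed qubit is annihilated by $Q\cdots Q$, so discarding such terms changes nothing.

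The second fact is the main obstacle, because of the Jordan--Wigner $Z$-strings. For $p\in A$, $\Phi_{\mathrm{JW}}(a_p)=\bigl(\prod_{k<p}Z_k\bigr)\,(X_p+iY_p)/2$, and the string $\prod_{k<p}Z_k$ may run over frozen or virtual qubits with index below $p$. Under the substitution these collapse to a sign $(-1)^{|\{i\in F:\,i<p\}|}$, while the active part of the string becomes exactly the string of the reduced active-space JW map, provided the active orbitals keep their relative order.

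I would treat each monomial of $\tilde H_A$ in turn. Each is a product of an equal number of creation and annihilation operators, since $\tilde H_A$ conserves particle number, so I will pair the operators off as $a_p^\dagger$ with $a_q$. For such a pair the frozen-core sign is $(-1)^{f(p)+f(q)}$, with $f(p)=|\{i\in F:\,i<p\}|$. This sign does not cancel in general. I would remove it by noting that the diagonal unitary $W=\prod_{p\in A}Z_p^{f(p)}$ on the active qubits conjugates each $a_p$ to $(-1)^{f(p)}a_p$. Hence the reduced operator equals $W\,\Phi^{A}_{\mathrm{JW}}(\tilde H_A)\,W^\dagger$, which is the active-space JW Hamiltonian up to this diagonal $\pm1$ gauge.

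If the orbitals are ordered so that every frozen index precedes every active index, then $f(p)=|F|$ for all $p\in A$. The signs then cancel within each pair, $W$ is a global phase, and the two Hamiltonians agree exactly. I would state this ordering convention, which holds for orbitals sorted by energy, and mention the gauge $W$ as the general case. Finally, the coefficients $h_0'$ and $h'_{pq}$ are inherited directly from Eqs.~\eqref{eq:fcc-spin}--\eqref{eq:fc1ei-spin}, which completes the identification.
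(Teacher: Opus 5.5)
Your proposal is correct, but it is organised differently from the paper's proof. The paper uses Proposition~\ref{prop:cas-proj-identity} only to reduce to $Q\,\Phi_{\mathrm{JW}}(H)\,Q$. It then repeats the whole case analysis of Appendix~\ref{app:cas-second-quantised-proof} at the qubit level, covering virtual indices, one or three frozen indices, two frozen and two active, four frozen, and all active. It does this with the filter rules $QX_jQ=QY_jQ=0$ and $QZ_jQ=\mp Q$.

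You instead import the fermionic identity $H'=\tilde H_A P$ already proved there and push it through $\Phi_{\mathrm{JW}}$ by multiplicativity. No qubit-level case analysis is then needed, and everything reduces to two structural facts about restricting to the fixed $Z$-eigenspace on $F\cup V$. Your factor-by-factor substitution is legitimate because that restriction is multiplicative on operators commuting with the $Z_j$, $j\in F\cup V$, and both $Q$ and each $\Phi_{\mathrm{JW}}(a_p)$, $p\in A$, commute with them.

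The more substantive difference is your treatment of the Jordan--Wigner strings. The paper declares active terms ``unchanged'' under $Q$ and then deletes the fixed qubits. It never accounts for the frozen factors $Z_i\mapsto -1$ that sit inside the strings of active operators. Your computation shows these contribute $(-1)^{f(p)+f(q)}$ per creation--annihilation pair. So the word ``exactly'' in the statement really needs an ordering convention: all frozen spin-orbitals before the active ones, or more generally $f$ of constant parity on $A$. In general the reduced Hamiltonian is $W\,\Phi^{A}_{\mathrm{JW}}(\tilde H_A)\,W^\dagger$.

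Conjugation by $W$ amounts to re-phasing the spin-orbitals with odd $f(p)$ by $-1$, which flips the signs of the corresponding integrals. This is only a phase convention and leaves the spectrum unchanged, but the paper's proof passes over it.

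The paper's route gives an independent, self-contained qubit-side check of \eqref{eq:fcc-spin}--\eqref{eq:fc1ei-spin}. Yours is shorter, avoids duplicating that bookkeeping, and is the more careful of the two on exactness.
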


\begin{proof}
By Proposition~\ref{prop:cas-proj-identity}, it suffices to analyze
$ Q\,\Phi_{\mathrm{JW}}( H)\, Q$ and then fix $Z$ on $F\cup V$.
We use the simple filter rules (for any $j\in F\cup V$):
\begin{equation}
\label{eq:Q-filter}
 Q\,X_j\, Q=0, \qquad  Q\,Y_j\, Q=0,\qquad
 Q\,Z_j\, Q=\begin{cases}- Q,& j\in F,\\ + Q,& j\in V.\end{cases}
\end{equation}
Pauli operators on active qubits commute through $ Q$.

\medskip\noindent
\textbf{One-electron part.}
Consider $\sum_{pq}h_{pq}\, a_p^\dagger  a_q$.
As in Appendix~\ref{app:cas-second-quantised-proof}, we split by orbital type and use \eqref{eq:Q-filter}.
\begin{enumerate}[label=(\roman*), leftmargin=14pt]
\item If $p\in V$ or $q\in V$, the JW image contains $X/Y$ on a virtual qubit, hence vanishes when sandwiched by $ Q$.
\item If exactly one of $p,q$ is frozen and the other is active, the term flips a frozen qubit and vanishes under $ Q$.
\item If $p=q=i\in F$, then $ a_i^\dagger  a_i= n_i$ maps to $(I-Z_i)/2$ and evaluates to $1$ under $ Q$, contributing $h_{ii}\,I$.
\item If $p,q\in A$, the term is unchanged (active operators commute through $ Q$).
\end{enumerate}
Therefore
\begin{equation}
\label{eq:1e-qubit-result}
 Q\,\Phi_{\mathrm{JW}}\!\Bigl(\sum_{pq}h_{pq} a_p^\dagger  a_q\Bigr)\, Q
\;=\; \sum_{p,q\in A} h_{pq}\, a_p^\dagger  a_q
\ + \ \sum_{i\in F} h_{ii}\,I,
\end{equation}
which matches the active block plus the core trace appearing in Eq.~\eqref{eq:fcc-spin}.

\medskip\noindent
\textbf{Two-electron part.}
Consider $\tfrac12\sum_{pqrs} g_{pqrs}\, a_p^\dagger  a_q^\dagger  a_s  a_r$.

If any index lies in $V$, the term vanishes under $ Q$. Indeed, whenever a
creation or annihilation operator acts on a virtual orbital, the JW image contains $X_a$ or $Y_a$
on that qubit and $ Q X_a  Q= Q Y_a  Q=0$. The only other possibility is
when the virtual appears as a number operator $ n_a$ (e.g. from Coulomb or exchange terms
with matched indices), but
\begin{equation}
 Q  n_a  Q
= Q \tfrac{I - Z_a}{2}  Q
=\tfrac{1-(+1)}{2}\, Q=0
\qquad (a\in V).
\end{equation}
Parity-string $Z$’s on virtual qubits evaluate to $+1$ and do not affect this conclusion.
Consequently, every two-electron term with at least one virtual index is zero.

Terms with exactly one or three frozen indices flip a frozen qubit and also vanish. The surviving cases mirror Appendix~\ref{app:cas-second-quantised-proof}:

\emph{(i) Two frozen and two active indices.}
Let $p,q\in A$ and $i\in F$. Up to relabeling, the only nonzero monomials are
$ a_p^\dagger  a_i^\dagger  a_i  a_q$ and
$ a_p^\dagger  a_i^\dagger  a_q  a_i$.
Under JW and \eqref{eq:Q-filter}, each reduces to a one-body operator on $A$:
\begin{equation}
 Q\,\Phi_{\mathrm{JW}}\!\bigl( a_p^\dagger  a_i^\dagger  a_i  a_q\bigr)\, Q
=  a_p^\dagger  a_q,
\qquad
 Q\,\Phi_{\mathrm{JW}}\!\bigl( a_p^\dagger  a_i^\dagger  a_q  a_i\bigr)\, Q
= -\, a_p^\dagger  a_q,
\end{equation}
so summing over $i\in F$ yields the shift
\begin{equation}
\label{eq:fc1e-from-2e-qubit}
\sum_{p,q\in A}\sum_{i\in F}\!\bigl(g_{p i q i}-g_{p i i q}\bigr)\, a_p^\dagger  a_q,
\end{equation}
which is exactly Eq.~\eqref{eq:fc1ei-spin}.

\emph{(ii) Four frozen indices.}
For $i,j\in F$, the two monomials
$ a_i^\dagger  a_j^\dagger  a_j  a_i$ and
$ a_i^\dagger  a_j^\dagger  a_i  a_j$
evaluate under $ Q$ to $+1$ and $-1$, respectively, giving the constant
\begin{equation}
\label{eq:const-from-2e-qubit}
\tfrac12\sum_{i,j\in F}\!\bigl(g_{iijj}-g_{ijji}\bigr),
\end{equation}
which matches the two-electron contribution in Eq.~\eqref{eq:fcc-spin}.

\emph{(iii) All-active indices.}
If $p,q,r,s\in A$, the term commutes through $ Q$ and is unchanged.

\medskip
Combining the one-electron result \eqref{eq:1e-qubit-result} with the two-electron contributions \eqref{eq:fc1e-from-2e-qubit} and \eqref{eq:const-from-2e-qubit} reproduces precisely the canonical CAS formulas \eqref{eq:fcc-spin}--\eqref{eq:fc1ei-spin}, with the two-electron block over $A$ unchanged. Since $Z$ on $F\cup V$ is fixed to $\mp 1$, those qubits carry no dynamics and are removed, leaving the JW image of the CAS Hamiltonian on the active orbitals.
\end{proof}

\section{Compatibility of symmetry-adapted encoding with CAS}
\label{app:compatibility}

\begin{proposition}[Compatibility of exact-symmetry basis changes with CAS qubit removal]
\label{prop:compat-sym-cas}
Let $H_q=\Phi_{\mathrm{JW}}( H)$ and
\[
 Q=\Bigl(\bigotimes_{i\in F}\tfrac{I-Z_i}{2}\Bigr)\otimes I_A \otimes
 \Bigl(\bigotimes_{a\in V}\tfrac{I+Z_a}{2}\Bigr)
\]
be the CAS qubit projector. Let $C_1$ be any affine Clifford implementing an exact-symmetry change of basis, and define $H_q':=C_1 H_q C_1^\dagger$ and $ Q':=C_1  Q\, C_1^\dagger$. Then
\begin{equation}
\label{eq:compat-equality}
 Q' \, H_q' \,  Q' \;=\; C_1 \bigl( Q\, H_q \,  Q\bigr) C_1^\dagger
\;=\; C_1\, \Phi_{\mathrm{JW}}( P  H  P)\, C_1^\dagger .
\end{equation}
After fixing the $Z$ eigenvalues on $F\cup V$ and removing those qubits, the two active-space Hamiltonians---obtained by applying $C_1$ before the CAS projection or after it---are related by an affine Clifford on the active qubits and are therefore isospectral.
\end{proposition}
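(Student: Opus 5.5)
The first equality in \eqref{eq:compat-equality} follows from unitarity of $C_1$ alone. Insert $C_1^\dagger C_1 = I$ between the factors:
\[
Q' H_q' Q' = C_1 Q C_1^\dagger\, C_1 H_q C_1^\dagger\, C_1 Q C_1^\dagger = C_1 (Q H_q Q) C_1^\dagger .
\]
The second equality is Proposition~\ref{prop:cas-proj-identity}, namely $Q H_q Q = \Phi_{\mathrm{JW}}(P H P)$, conjugated by $C_1$. This part is routine.

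The substantive content is the qubit-removal statement, which I will handle using the affine structure. By Proposition~\ref{prop:block-tableau}, $C_1$ maps each $Z_j$ to $\pm Z\big((T_1^{-1})^{\mathsf T}e_j\big)$. The step I expect to be the main obstacle is showing that $Q'$ is again a projector fixing single-qubit $Z$ eigenvalues on $F\cup V$, rather than on some mixed set of qubits. For this I will use the SAE-CAS restriction of the main text: the symmetry pivots are chosen only on active spin-orbitals. With that restriction, $T_1$ is the identity outside the $k$ pivot rows, and those pivot rows sit on active qubits. The proof then proceeds in three steps.

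\begin{enumerate}[label=(\alph*)]
\item For $j\in F\cup V$, the column $e_j$ is unchanged by $T_1$ except possibly in pivot-row entries. I will check that $(T_1^{-1})^{\mathsf T}e_j = e_j$. Equivalently, row $j$ of $T_1^{-1}$ is $e_j^{\mathsf T}$, which holds because row $j$ of $T_1$ is $e_j^{\mathsf T}$ for every non-pivot $j$.
\item The phase is $(-1)^{(T_1^{-1}b_1)_j} = (-1)^{(b_1)_j} = +1$, since $b_1$ is supported on pivot (active) rows. Hence $C_1 Z_j C_1^\dagger = Z_j$ for all $j\in F\cup V$, and so $Q'=Q$.
\item The filter rules \eqref{eq:Q-filter} therefore apply verbatim to $H_q'$. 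Fixing $Z_j=\mp1$ on $F\cup V$ and deleting those qubits is well defined, and it gives the same reduced operator whether $C_1$ is applied before or after the projection.
\end{enumerate}

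It remains to identify the induced map on the active register. On the CAS subspace, computational states have the form $\ket{1_F}\otimes\ket{a_A}\otimes\ket{0_V}$. On such a state $C_1$ acts as $a_A\mapsto T_A a_A\oplus b_A$, where $T_A$ is the active-active block of $T_1$ and the contribution of the fixed frozen bits is absorbed into $b_A$. The active block of $T_1$ is invertible because $T_1$ is block-unitriangular with respect to active/non-active indices. By Proposition~\ref{thm:perm-Clifford-affine}, this restricted map is an affine Clifford $\tilde C_1$ on the active qubits. The reduced Hamiltonians therefore satisfy $\tilde H' = \tilde C_1 \tilde H \tilde C_1^\dagger$, so they are isospectral.

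For a general $C_1$ without the active-pivot restriction, $Q'$ still equals $C_1 Q C_1^\dagger$ and the two projected operators remain unitarily equivalent. In that case, however, the removal must be performed on the conjugated stabilisers $C_1 Z_j C_1^\dagger$ rather than on literal single qubits; this is the caveat I would state explicitly.
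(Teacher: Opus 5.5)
Your proposal is correct and follows essentially the same route as the paper: unitarity plus the projector identity give the displayed equalities, and then the active-pivot block structure of $T_1$ and $b_1$ shows that $C_1$ leaves the frozen and virtual bits unchanged and restricts to an affine map $a_A\mapsto T_{AA}a_A\oplus b_A$ (frozen bits absorbed into the shift) on the active register. Your explicit checks that $Q'=Q$ and that $T_{AA}$ is invertible usefully spell out what the paper leaves implicit. So does your caveat that the qubit-removal step needs the active-pivot restriction rather than an arbitrary $C_1$.
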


\begin{proof}
Conjugation and associativity give
\[
 Q' H_q'  Q' 
= (C_1  Q C_1^\dagger)\,(C_1 H_q C_1^\dagger)\,(C_1  Q C_1^\dagger)
= C_1 \, Q H_q  Q\, C_1^\dagger .
\]
By Proposition~\ref{prop:cas-proj-identity}, $ Q H_q  Q=\Phi_{\mathrm{JW}}( P  H  P)$, proving \eqref{eq:compat-equality}. 

Because the symmetry qubits correspond to active spin-orbitals, $T_1$ and $b_1$ have the form:
\begin{equation}
T_1 = 
\begin{pmatrix}
I & 0 & 0\\
T_{AF} & T_{AA} & T_{AV}\\
0 & 0 & I\\
\end{pmatrix},\qquad b_1 =
\begin{pmatrix}0\\b_{A}\\0\end{pmatrix}
\end{equation}
So $C_1$ acts as the identity on the space of frozen-core and virtual qubits. Moreover the values of the frozen-core qubits are fixed to only ones and those of the virtual qubits are fixed to only zero, so the restriction of $C_1$ to the active qubits does not depend on them (it is given by $a = T_{AA} a \oplus b_A \oplus \bigoplus_i{{T_{AF}}_{;i}}$). Then the reduced active-space Hamiltonians differ by conjugation with an affine Clifford acting nontrivially only on $A$, and are therefore isospectral.

\end{proof}

\section{Convergence limitations for HE-SCA on JW-CAS}
\label{app:hesca-limitations}

In Section~\ref{sec:results} we report that the hardware-efficient
shifted-circular-alternating (HE-SCA) variational ansatz, when applied to
the unreduced JW-CAS register, does not converge to the CASCI
ground-state energy in the target symmetry sector for \ce{O2} (6,5) and
\ce{CO} (6,5) within the $0$--$200$ layer budget tested.  Two diagnostics
presented here identify the limitation as a combination of (i) leakage of
the variational state
into wrong $(N_\alpha,N_\beta)$ sectors, which the HE-SCA ansatz does not
constrain, and (ii) trapping of SLSQP in higher-energy local minima
within the unreduced 10-qubit register.  Neither mechanism is a
classical barren plateau: the gradient variance does not exhibit
exponential decay with layer count.  All numerical results in this
appendix are produced by the scripts in the \texttt{QuantumSymmetry}
repository with fixed RNG seeds (42 for the gradient scan,
0 for the sector-leakage runs) for reproducibility.

\paragraph{Gradient variance versus layer count rules out a barren plateau.}
A classical barren plateau is characterised by an exponential decay of the
gradient variance
$\mathrm{Var}_\theta\!\left[\partial\langle H\rangle/\partial \theta_k\right]$
with the number of variational layers \cite{McClean2018BarrenPlateaus}.  We sampled
$N_\mathrm{s}{=}50$ uniformly random parameter vectors
$\theta\in[0,2\pi)^{n_\mathrm{p}}$ for the JW-CAS HE-SCA ansatz at depths
$L\in\{1,5,25,50,100,150,200\}$ and computed the variance of
$\partial\langle H\rangle/\partial\theta_0$ over the ensemble via the
parameter-shift rule.  Figure~\ref{fig:hesca-mechanism}(a, b) shows the result
for the two five-orbital benchmarks.  The variance drops by roughly an order
of magnitude between $L{=}1$ and $L{=}5$ (factors of $26\times$ for \ce{O2}
and $12\times$ for \ce{CO}) and then \emph{saturates} in the range
$1$--$3\times 10^{-3}$ throughout $L\in[25,200]$, with no further
exponential decay---in fact it rises slightly at the deepest circuits.
Vanishing gradients are therefore \emph{not} the proximate cause of the
convergence limitation on these 10-qubit JW-CAS registers.

\paragraph{Sector leakage and local-minimum trapping.}
The HE-SCA ansatz is built from single-qubit $R_y$ rotations and CNOT
entanglers and so does not commute with the spin-up and spin-down number
operators
$\hat{N}_\alpha=\sum_{k}\tfrac{I-Z_{2k}}{2}$ and
$\hat{N}_\beta=\sum_{k}\tfrac{I-Z_{2k+1}}{2}$.  Without symmetry-based qubit removal
at the encoding level the variational manifold therefore spans all
$(N_\alpha,N_\beta)$ sectors of the active space.  To test whether this
sector leakage is the practical limitation we ran $N_\mathrm{t}{=}10$
SLSQP optimisations from independent uniformly random initialisations on
the JW-CAS HE-SCA ansatz at $L{=}10$ for both molecules, and on each
converged state measured the energy and the expectation values
$\langle \hat N_\alpha\rangle$, $\langle \hat N_\beta\rangle$,
$\langle \hat N\rangle$.

For \ce{O2}, the target sector is $(N_\alpha,N_\beta)=(4,2)$, the
$M_S{=}{+}1$ component of the triplet ${}^3\Sigma_g^-$ ground state; the
$M_S{=}{-}1$ partner $(2,4)$ is energetically degenerate with the target
by the spin invariance of $H$, and the $M_S{=}0$ triplet partner sits in
$(3,3)$ at the same energy.  Of the ten JW-CAS trials, 5 converge to a
triplet ground state within $\sim 1.5\,\mathrm{mHa}$ of the CASCI
energy---3 in $(4,2)$ and 2 in $(2,4)$.  Four further trials drift into
the $(3,3)$ sector and settle on higher-energy singlet eigenstates
between $+20$ and $+73\,\mathrm{mHa}$ above the triplet ground state;
the optimiser does not find the $M_S{=}0$ triplet partner from any of
the random initialisations sampled here, indicating that its basin of
attraction on the HE-SCA landscape is much smaller than the basins of
the excited singlets.  The remaining trial converges to a literal
sector superposition with $\langle N_\alpha\rangle=3.55$ and
$\langle N_\beta\rangle=2.44$---a state with no fixed particle number
(Fig.~\ref{fig:hesca-mechanism}(c), diamond).

For \ce{CO} (target sector $N_\alpha{=}N_\beta{=}3$, singlet
${}^1\Sigma^+$), only one trial leaks out of $(3,3)$ into $(2,4)$; the
other nine stay in the target sector.  Yet \emph{every} trial converges
to a local minimum between $+33$ and $+316\,\mathrm{mHa}$ above the
target FCI energy.  The distribution is outlier-dominated: 8/10 trials
cluster in a narrow band of $+33$ to $+60\,\mathrm{mHa}$ (median
$+47\,\mathrm{mHa}$), while two outliers at $+282$ and $+316\,\mathrm{mHa}$
pull the mean to $+97\,\mathrm{mHa}$.  No trial reaches chemical
accuracy.  For CO, then, the dominant limitation is not sector
leakage but local-minimum trapping within the correct sector.

SAE-CAS eliminates the first limitation by construction: removing the
two spin-parity generators fixes
$(N_\alpha\!\!\mod 2,\,N_\beta\!\!\mod 2)$ in the encoding, so the
variational state cannot leak between sectors of different parity
regardless of optimiser trajectory.  It also ameliorates the second
mode by reducing the register from $10$ qubits to $5$ (\ce{O2}) or
$6$ (\ce{CO}) and the HE-SCA parameter count by a factor of $\sim$5--10
at fixed $L$, shrinking the basin landscape that SLSQP must navigate.
The apples-to-apples test is decisive: at the same depth $L{=}10$ used
for the JW-CAS sector-leakage runs, all ten SAE-CAS trials converge to
the CASCI energy of the target sector to within $2\times 10^{-6}\,
\mathrm{mHa}$ for \ce{O2} and $5\times 10^{-4}\,\mathrm{mHa}$ for \ce{CO}
(Fig.~\ref{fig:hesca-mechanism}(c, d), triangles).  This is five to eight
orders of magnitude below the corresponding JW-CAS error distribution at
the same depth, and three to six orders of magnitude below chemical
accuracy.  The same SLSQP optimiser starting from the same uniform
$\theta_0\!\in[0,2\pi)^{n_\mathrm{p}}$ distribution reliably finds the
target ground state once the variational manifold is restricted to the
target symmetry sector by the encoding.

\begin{figure*}[tbp]
  \centering
  \includegraphics[width=0.95\textwidth]{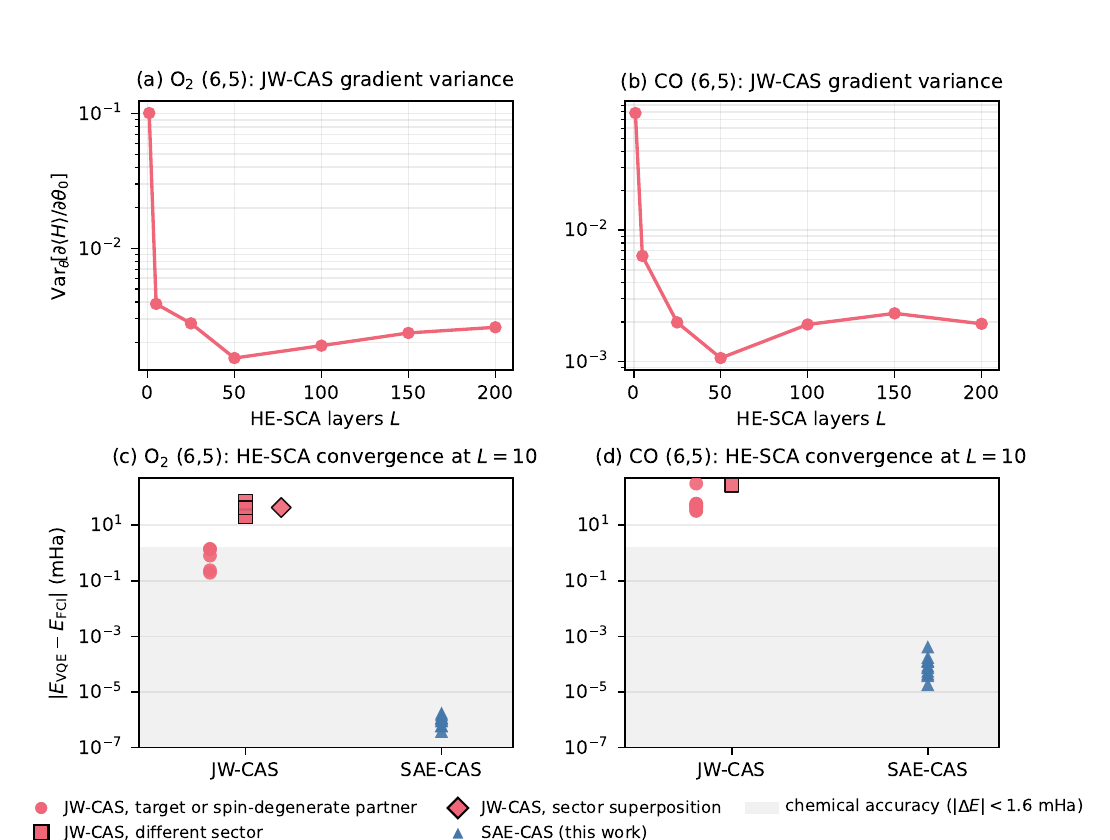}
  \caption{%
    Diagnostics for the HE-SCA convergence limitations on JW-CAS, side by
    side for \ce{O2} (6,5) (left column) and \ce{CO} (6,5) (right
    column).  \textbf{Top row, (a, b)}: gradient variance
    $\mathrm{Var}_\theta[\partial\langle H\rangle/\partial\theta_0]$ on
    the JW-CAS HE-SCA ansatz over 50 random $\theta$ samples, as a
    function of layer count $L$.  In both molecules the variance drops
    about two orders of magnitude from $L{=}1$ to $L{=}5$ and then
    saturates near $10^{-3}$; there is no exponential decay, ruling out
    a classical barren plateau.  \textbf{Bottom row, (c, d)}: per-trial
    absolute convergence error $|E_\mathrm{VQE}-E_\mathrm{FCI}|$ on a
    log scale, over 10 SLSQP runs from independent uniformly-random
    initialisations at the same depth $L{=}10$ for both JW-CAS and
    SAE-CAS.  Marker shapes encode the converged-state sector under
    JW-CAS: circles for the target $(N_\alpha,N_\beta)$ sector or its
    spin-degenerate partner (for \ce{O2}, both $(4,2)$ and $(2,4)$ are
    $M_S{=}\pm 1$ components of the triplet ground state); squares for
    trials that drifted to a different sector; the diamond in panel (c)
    marks the one \ce{O2} trial that converged to a literal sector
    superposition with non-integer
    $\langle N_\alpha\rangle,\langle N_\beta\rangle$.  Colour
    distinguishes the two encodings (JW-CAS in blue, SAE-CAS in
    orange).  SAE-CAS reaches the CASCI energy of the target sector to
    sub-microhartree precision in every trial of both molecules,
    placing all ten triangles deep inside the chemical-accuracy band.
  }
  \label{fig:hesca-mechanism}
\end{figure*}

\bibliography{references}

\end{document}